\newcommand{\C}{\mathbb{C}}
\theoremstyle{plain}
\newtheorem{theorem}{Theorem}[section]
\newtheorem{proposition}[theorem]{Proposition}
\newtheorem{corollary}[theorem]{Corollary}
\newtheorem{lemma}[theorem]{Lemma}
\newtheorem*{genprin}{General Principle}
\theoremstyle{definition}
\newtheorem{definition}[theorem]{Definition}
\newtheorem{notation}[theorem]{Notation}
\theoremstyle{remark}
\newtheorem{remark}[theorem]{Remark}
\newcommand{\Z}{\mathbb{Z}}
\newcommand{\cD}{\mathcal{D}}
\newcommand{\cS}{\mathcal{S}}
\newcommand{\cA}{\mathcal{A}}
\newcommand{\cE}{\mathcal{E}}
\newcommand{\cP}{\mathcal{P}}
\newcommand{\cM}{\mathcal{M}}
\newcommand{\cW}{\mathcal{W}}
\newcommand{\cT}{\mathcal{T}}
\newcommand{\VEV}[1]{{\big\langle 0 \big| {#1} \big| 0 \big\rangle}}
\newcommand{\VEVc}[1]{{\big\langle 0 \big| {#1} \big| 0 \big\rangle^\circ}}
\newcommand{\res}{\mathop{\rm res}}
\newcommand{\Hc}{H}
\newcommand{\euro}{\mathbb{J}}
\newcommand{\Pt}{\widetilde{P}}
\begin{document}

\title[Topological recursion for KP tau functions of hypergeometric type]{Topological recursion for Kadomtsev--Petviashvili  tau functions of hypergeometric type}

\author[B.~Bychkov]{Boris~Bychkov}
\address{B.~B.: Faculty of Mathematics, HSE University, Usacheva 6, 119048 Moscow, Russia; and Center of Integrable Systems, P.G. Demidov Yaroslavl State University, Sovetskaya 14, 150003,Yaroslavl, Russia; and current affiliation: Department of Mathematics, University of Haifa, Mount Carmel, 3498838, Haifa, Israel}
\email{bbychkov@hse.ru}

\author[P.~Dunin-Barkowski]{Petr~Dunin-Barkowski}
\address{P.~D.-B.: Faculty of Mathematics, HSE University, Usacheva 6, 119048 Moscow, Russia; HSE--Skoltech International Laboratory of Representation Theory and Mathematical Physics, Skoltech, Bolshoy Boulevard 30 bld. 1, 121205 Moscow, Russia; and NRC “Kurchatov Institute” -- ITEP, 117218 Moscow, Russia}
\email{ptdunin@hse.ru}

\author[M.~Kazarian]{Maxim~Kazarian}
\address{M.~K.: Faculty of Mathematics, HSE University, Usacheva 6, 119048 Moscow, Russia; and Igor Krichever Center for Advanced Studies, Skoltech, Bolshoy Boulevard 30 bld. 1, 121205 Moscow, Russia}
\email{kazarian@mccme.ru}

\author[S.~Shadrin]{Sergey~Shadrin}
\address{S.~S.: Korteweg-de Vries Institute for Mathematics, University of Amsterdam, Postbus 94248, 1090 GE Amsterdam, The Netherlands}
\email{S.Shadrin@uva.nl}	

\begin{abstract}
	We study the $n$-point differentials corresponding to Kadomtsev--Petviashvili tau functions of hypergeometric type (also known as Orlov--Scherbin partition functions), with an emphasis on their $\hbar^2$-deformations and expansions. 
	
	Under the naturally required analytic assumptions, we prove certain higher loop equations that, in particular, contain the standard linear and quadratic loop equations, and thus imply the blobbed topological recursion. We also distinguish two large families of the Orlov--Scherbin partition functions that do satisfy the natural analytic assumptions, and for these families we prove  in addition the so-called projection property and thus the full statement of the Chekhov--Eynard--Orantin topological recursion.
	
	A particular feature of our argument is that it clarifies completely the role of $\hbar^2$-deformations of the Orlov--Scherbin parameters for the partition functions, whose necessity was known from a variety of earlier obtained results in this direction but never properly understood in the context of topological recursion.
	
	As special cases of the results of this paper one recovers new and uniform proofs of the topological recursion to all previously studied cases of enumerative problems related to weighted double Hurwitz numbers. By virtue of topological recursion and the Grothendieck--Riemann--Roch formula this, in turn, gives new and uniform proofs of almost all ELSV-type formulas discussed in the literature.
\end{abstract}
\dedicatory{To the memory of Sergey Mironovich Natanzon}

\keywords{Hurwitz numbers, spectral curve topological recursion,  KP tau functions, ELSV-type formulas}
\subjclass{37K10, 14H81, 81T45, 53D45, 14H10}

\maketitle	
	

	
\tableofcontents
\section{Introduction}

There exists ample literature studying generating functions of various Hurwitz-type numbers in relation to the so-called \emph{spectral curve topological recursion}.
Up until now this has mostly been done on a case-by-case basis (with a notable, but still restricted, exception of the effort due to Alexandrov--Chapuy--Eynard--Harnad~\cite{ACEH-1,ACEH-2} which does not cover e.g. the important examples of the $r$-spin Hurwitz numbers and the Ooguri-Vafa partition function), with complicated separate proofs of topological recursion for many various cases of Hurwitz-type problems.

In the present paper, which is based on the results of our previous paper~\cite{BDKS20}, we give a new unified approach, which allows us to prove topological recursion for weighted Hurwitz numbers of very general type.
The result of the present paper covers \emph{all} known results regarding topological recursion for Hurwitz-type enumerative problems, and substantially extends them, all in a neat and uniform way, revealing the underlying general structure and the reasons behind these results.

In order to guide the reader who does not need an introduction into these topics through the paper, let us immediately link the main results and applications:
\begin{itemize}
	\item Theorem~\ref{thm:Blobbed} proves blobbed topological recursion / the loop equations for general $\hbar^2$-deformed hypergeometric KP tau functions, under natural analytic assumptions. It is the most general statement that can be obtained in this context. 
	\item Theorems~\ref{thm:Family1TR} and~\ref{thm:Family2TR} provide two enormously large families of cases where we can prove the topological recursion. This simultaneously resolves a huge number of open questions on topological recursion in the individual cases (see Section~\ref{sec:moreTR} for an example of this) and provides new proofs and conceptual framework for the tens of results that can now be seen as special cases of our theorems (see Section~\ref{sec:newprooftr}). 
	\item Application: a direct corollary of  Theorems~\ref{thm:Family1TR} and~\ref{thm:Family2TR} and prior computations with the Grothendieck--Riemann--Roch formula is a new and uniform proof of almost all known ELSV-type formulas (as the original Ekedahl--Lando--Shapiro--Vainshtein formula, Gopakumar--Mari\~no--Vafa formula, Zvonkine's $r$-spin formula, etc. etc.). We discuss this application in Section~\ref{sec:ELSV}.
\end{itemize}

\subsection{Topological recursion} Let $\Sigma$ be a Riemann surface, $x$ and $y$ two functions on $\Sigma$ such that $dx$ is meromorphic and has isolated simple zeros $p_1,\dots,p_N\in \Sigma$, $y$ is holomorphic near $p_i$ and 
$dy|_{p_i}\not=0$, $i=1,\dots,N$.  Let $B$ be a meromorphic symmetric bi-differential on $\Sigma\times\Sigma$ with the only pole being the order $2$ pole on the diagonal with the bi-residue equal to $1$.

The theory of topological recursion, due to Chekhov, Eynard, and Orantin~\cite{CE1,CE2,EynardOrantin,EO-2,EynardSurvey}, associates to this data a sequence of symmetric
differentials meromorphic near the points $p_1,\dots,p_N$. For the so-called \emph{unstable cases} they are given by 
$\omega_{0,1}\coloneqq ydx$, $\omega_{0,2}\coloneqq B$; while in general, for $g\geq 0$, $n\geq 1$, $2g-2+n>0$, these symmetric $n$-differentials  $\omega_{g,n}$ on $\Sigma^n$ are given by an explicit recursive procedure:
\begin{align}\label{eq:toprec}
\omega_{g,n+1}(z_0,z_{\llbracket n \rrbracket}) \coloneqq \frac 12 \sum_{i=1}^N \res_{w\to p_i} \frac{\int_w^{\sigma_i(w)} B(z_0,\cdot)}{\omega_{0,1}(\sigma_i(w))-\omega_{0,1}(w)} \Bigg( \omega_{g-1,n+2}(w, \sigma_i(w), z_{\llbracket n \rrbracket})+
\\ \notag
\sum_{\substack{g_1+g_2=g,\ I_1\sqcup I_2 = \llbracket n \rrbracket \\ (g_1,|I_1|),(g_2,|I_2|)\not=(0,0)
}}
\omega_{g_1,|I_1|+1} (w,z_{I_1})\omega_{g_2,|I_2|+1} (\sigma_i(w),z_{I_2})
\Bigg).
\end{align}
Here $\sigma_i$ is a deck transformation of $x$ near the point $p_i$, $i=1,\dots,N$, by $\llbracket n \rrbracket$ we denote the set $\{1,\dots,n\}$, 
and $z_I$ denotes $\{z_i\}_{i\in I}$ for any $I\subseteq \llbracket n \rrbracket$. If $g=0$, then we assume that $\omega_{g-1,n+2}=0$. Here and everywhere below, if not specified otherwise, a sum of the form $\sum_{I_1\sqcup\dots\sqcup I_k = A}$ is understood as a sum over ordered collections of sets which are allowed to be empty.

This recursive procedure comes from a recursion for the computation of cumulants in the matrix models theory (see also \cite{AMM06a,AMM06}), and in a large number of applications it can be used to fully replace and eliminate the underlying matrix models. For instance, the theory of topological recursion is the base for the so-called remodeling approach in the type B topological string theory~\cite{BKMP-remodellingB}. Our main motivation to study the theory of topological recursion is the fact that it has appeared to be a universal interface to connect a huge variety of combinatorial and algebraic enumerative questions to the intersection theory of the moduli space of curves and cohomological field theories~\cite{Eynard,DOSS}.

\subsubsection{Terminology} The data $(\Sigma, x, y, B)$ is called the spectral curve data.

\subsubsection{Information contained in $\omega_{g,n}$'s}
The two main questions of the theory of topological recursion are the following:
\begin{enumerate}[a)]
	\item For a given spectral curve data, what do $\omega_{g,n}$'s compute? For instance, one can consider the formal expansions of $\omega_{g,n}$ and ask whether these numbers have any interpretation outside the theory of topological recursion.
	\item Assume that some combinatorial or enumerative geometric problem produces numbers  $h_{g;k_1,\dots,k_n}$ for any $g\geq 0$ and $k_1,\dots,k_n\geq 1$. Is there any spectral curve data $(\Sigma,x,y,B)$, and a choice of a point $p_0\in \Sigma$ and a local coordinate $w$ near $p_0$ such that the expansion of $\omega_{g,n}$ in $(w_1,\dots,w_n)$ near $(p_0,\dots,p_0)\in \Sigma^n$ is given by
	\begin{equation}
		\sum_{k_1,\dots,k_n=1}^\infty h_{g;k_1,\dots,k_n} \prod_{i=1}^n dw_i^{k_i}\quad?
	\end{equation}
\end{enumerate}
Both questions have multiple interesting answers in the literature; for instance, a universal answer to the first question was found in the theory of Dubrovin--Frobenius manifolds, where one can connect the spectral curve data to Dubrovin's superpotential~\cite{DNOPS-1,DNOPS-2}. The non-uniqueness of the answers to these questions allows to establish new instances of the general phenomena of classical mirror symmetry.

\subsubsection{Loop equations and projection property} Let $(\Sigma,x,y,B)$ be a spectral curve data. Let $p_0\in\Sigma$ be a point of $\Sigma$. Consider a system of symmetric $n$-differentials $\omega_{g,n}$, $g\geq 0$, $n\geq 1$, $2g-2+n>0$, defined in a vicinity of $(p_0,\dots,p_0)$ of the $n$th Cartesian power of $\Sigma$.
Define, in the unstable cases $2g-2+n\leq0$, explicitly $\omega_{0,1}=ydx$ and $\omega_{0,2}=B$.

It is proved in~\cite[Theorem 2.2]{BS17}, see also~\cite{BEO13}, that a system of symmetric differentials $\{\omega_{g,n}\}_{g\geq 0, n\geq 1}$ satisfies the topological recursion if and only if it satisfies the following collection of properties discussed below in more details:
\begin{itemize}
\item meromorphy; 
\item linear loop equation;
\item quadratic loop equation;
\item projection property.
\end{itemize}

\begin{definition}
We say that the differentials $\omega_{g,n}$ satisfy the \emph{property of being meromorphic} if for any $g\geq0$, $n\geq 1$, $2g-2+n>0$  the form $\omega_{g,n}$ extends as a global meromorphic $n$-differential on the whole $\Sigma^{n}$.
\end{definition}

Even if we are interested in the power expansion of the forms $\omega_{g,n}$ at one point, the property of being meromorphic 
is essential since the topological recursion involves the local behavior of the analytical extension of these forms near the points $p_i$, $i=1,\dots N$.

\begin{definition}
The differentials $\omega_{g,n}$ satisfy the \emph{linear loop equations} if for any $g,n\geq 0$, $i=1,\dots,N$
\begin{equation}
	\omega_{g,n+1}(w,z_{\llbracket n \rrbracket}) + \omega_{g,n+1}(\sigma_i(w),z_{\llbracket n \rrbracket})
\end{equation}
is holomorphic at $w\to p_i$ and has at least a simple zero at $w=p_i$. We  say that the differentials $\omega_{g,n}$ satisfy the \emph{quadratic loop equations} if for any $g,n\geq 0$, $i=1,\dots,N$, the quadratic differential in $w$
\begin{equation}
	\omega_{g-1,n+2}(w,\sigma_i(w),z_{\llbracket n \rrbracket}) + \sum_{\substack{g_1+g_2=g\\ I_1\sqcup I_2 = \llbracket n \rrbracket
	}} \omega_{g_1,|I_1|+1} (w,z_{I_1})\omega_{g_2,|I_2|+1} (\sigma_i(w),z_{I_2})
\end{equation}
is a holomorphic at $w\to p_i$ and has at least a double zero at $w=p_i$.
\end{definition}

\begin{definition} We say that the differentials $\omega_{g,n}$ satisfy the \emph{projection property}, if for any $g\geq 0$, $n\geq 1$, $2g-2+n>0$
\begin{equation}
	\omega_{g,n}(z_1,\dots,z_n) = \sum_{i_1,\dots,i_n=1}^N \Bigg(\prod_{j=1}^n \res_{w_j\to p_{i_j}} \int^{w_j}_{p_{i_j}} \omega_{0,2}(w_j,z_j)\Bigg) \omega_{g,n}(w_1,\dots,w_n).
\end{equation}
\end{definition}

The linear and quadratic loop equations determine uniquely the principal part of the poles of $\omega_{g,n}$ at $p_i$ considered as a meromorphic $1$-form in its first argument. The projection property allows one to recover the form $\omega_{g,n}$ from the principal parts of its poles. The projection property implies that $\omega_{g,n}$ has no poles other than $p_i$ in each of its arguments. Moreover, in the present paper we are mostly interested in the case when $\Sigma\simeq\C P^1$. In this case the form $B$ satisfying the conditions on 
the spectral curve data is unique and the \emph{projection property is equivalent to the condition that $\omega_{g,n}$ has no poles other than $p_i$ in each of its arguments}.

This equivalent reformulation of topological recursion is used either directly or indirectly in many proofs of the topological recursion, and it was at the roots of the derivation of the topological recursion in the original papers~\cite{CE1,CE2,EynardOrantin}.

\subsubsection{Blobbed topological recursion} Let $(\Sigma,x,y,B)$ be a spectral curve data. Consider a system of the symmetric $n$-differentials $\omega_{g,n}$ on $\Sigma^n$, $g\geq 0$, $n\geq 1$, such that $\omega_{0,1}=ydx$, $\omega_{0,2}=B$. Assume that $\{\omega_{g,n}\}$ satisfy the quadratic and the linear loop equations but not necessarily the projection property.

It appears that in this case the $n$-differentials still retain many interesting properties, in particular, they can be expanded in the sums over the so-called blobbed graphs~\cite{BS17} that generalize the Givental-type sums over graphs known in the topological recursion theory~\cite{Eynard,DOSS}. There is a growing literature on the examples that do not satisfy the topological recursion but still fit into the framework of the blobbed topological recursion, cf.~\cite{BonzomBlobbed,branahl2020blobbed}.

\subsection{KP tau functions of hypergeometric type} Let $s_\lambda(p_1,p_2,\dots)$, $\lambda\vdash n$, $n\geq 0$,  be the Schur functions in the power sum variables $p_i$, $i\geq 1$. Consider two formal power series, $\psi(y)$ and $y(z)$ such that $\psi(0)=0$ and $y(0)=0$. Let $y(z)=\sum_{i=1}^\infty q_iz^i$. The following function
\begin{equation}
	\sum_{\lambda} s_\lambda(p_1,p_2,\dots) s_\lambda(q_1,q_2,\dots) \exp\Bigg(\sum_{(i,j)\in \lambda} \psi(i-j)\Bigg)
\end{equation}
is a tau function of the Kadomtsev--Petviashvili hierarchy~\cite{MiwaJimboDate} (the standard KP variables $t_1,t_2,\dots$ are related to $p_1,p_2,\dots$ by $t_i=p_i/i$, $i=1,2,\dots$). It is usually called the tau function of hypergeometric type, or Orlov--Scherbin tau function~\cite{Kharchev,OrlovScherbin}.

\subsubsection{Parameter $\hbar$} The hypergeometric KP tau functions can also be considered as tau functions of the $\hbar$-KP hierarchy of Takasaki--Takebe~\cite{TakasakiTakebe,NatanzonZabrodin}, where the parameter $\hbar$ is introduced as follows:
\begin{equation}
 	\sum_{\lambda} s_\lambda\big(\frac{p_1}\hbar,\frac{p_2}\hbar,\dots\big) s_\lambda\big(\frac {q_1}{\hbar},\frac{q_2}{\hbar},\dots\big) \exp\Bigg(\sum_{(i,j)\in \lambda} \psi(\hbar(i-j))\Bigg),
\end{equation}
and a natural generalization that often occurs in the literature includes the possibility for the coefficients of $\psi$ and $y$ to depend on $\hbar^2$, cf.~\cite{KazLand,andreev2020genus,DKPSS}. That is, we can consider
\begin{align}\label{def:psihat}
	\hat \psi(\hbar^2,y)&:=\sum_{k+m\geq1}^\infty c_{k,m} y^k\hbar^{2m},\\
	\label{def:yhat}
	\hat y(\hbar^2,z)&:=\sum_{k=1}^\infty\sum_{m=0}^\infty q_{k,m} z^k\hbar^{2m}.
\end{align}
and define $\hat q_i(\hbar)$ via $\hat y(\hbar^2,z)=\sum_{i=1}^\infty \hat q_i(\hbar^2) z^i$. Then for the KP partition function
\begin{equation} \label{eq:ParitionFunctionHBAR}
	Z_{\hat\psi,\hat y}(p_1,p_2,\dots)\coloneqq \sum_{\lambda} s_\lambda(p_1,p_2,\dots) s_\lambda\big(\frac {\hat q_1(\hbar^2)}{\hbar},\frac{\hat q_2(\hbar^2)}{\hbar},\dots\big) \exp\Bigg(\sum_{(i,j)\in \lambda} \hat\psi(\hbar^2, \hbar(i-j))\Bigg),
\end{equation}
its rescaled version $Z_{\hat\psi,\hat y}\big(\frac{p_1}\hbar,\frac{p_2}\hbar,\dots\big)$ is still a tau function of the $\hbar$-KP hierarchy.

\subsubsection{Combinatorial and enumerative meaning}
The functions $Z_{\hat\psi,\hat y}$ given by $\eqref{eq:ParitionFunctionHBAR}$ are expanded in $\hbar$ as
\begin{equation}
	Z_{\hat\psi,\hat y}\big(\frac{p_1}\hbar,\frac{p_2}\hbar,\dots\big)=\exp\left(\sum_{g=0}^\infty 
	\hbar^{2g-2} F_{g}(p_1,p_2,\dots)\right).	
\end{equation}
They are intensively studied for various specific choices of $\hat \psi$ and $\hat y$ since the numbers
\begin{equation}\label{eq:h-small-definition}
	h_{g;k_1,\dots,k_n}\coloneqq 
	\frac{\partial^n F_g}{\prod_{i=1}^n \partial p_{k_i}}\Bigg|_{p_1=p_2=\cdots=0}
\end{equation}
encode a huge variety of enumerative questions: various kinds of (weighted) Hurwitz numbers, enumeration of Grothendieck's dessins d'enfants and more general types of polygonal decompositions of surfaces and constellations, relative Gromov-Witten theory of the projective line, the colored HOMFLY-PT polynomials for the torus knots, etc. There is enormous literature on the interpretations of the expansions of these partition function and the combinatorial equivalences that these interpretations imply, cf.~\cite{HarnadOrlov,harnad2020constellations,ALS} and see further references in Sections~\ref{sec:PreviousQuasi} and~\ref{sec:appl}.

\subsubsection{Terminology} The formal power series
\begin{equation}
	H_{g,n}\coloneqq \sum_{k_1,\dots,k_n=1}^\infty h_{g;k_1,\dots,k_n} \prod_{i=1}^nX_i^{k_i}
\end{equation}
are called the \emph{$n$-point functions} of the tau function $Z_{\hat \psi, \hat y}$.

\subsubsection{Spectral curve}
Let $\psi(y)=\hat\psi(0,y)$ and $y(z)=\hat y(0,z)$.
Define the spectral curve data $(\Sigma,x,y,B)$ associated with the partition function $Z_{\hat \psi, \hat y}$ as follows. We take $\Sigma=\C P^1$ with global affine coordinate~$z$, the functions $x=x(z)$, $y=y(z)$ are given by
\begin{align}
y&=y(z),\\
x&=\log(z)- \psi(y(z)),
\end{align}
and $B(z_1,z_2)=\frac{dz_1\,dz_2}{(z_1-z_2)^2}$. This spectral curve data was first proposed in ~\cite{ACEH-1}.

All these functions are defined in a formal vicinity of the origin, and in order to be able to discuss topological recursion we will impose below certain analytic assumptions implying, in particular, that the $1$-form $dx$ is rational.

\subsubsection{Relations to topological recursion} The hypergeometric KP tau functions also give one of the most striking examples of topological recursion.
Consider the following local change of coordinates at the origin
\begin{equation}\label{eq:X(z)}
X_i=X(z_i),\qquad X(z)=e^{x(z)}=z\,e^{-\psi(y(z))}.
\end{equation}

Applying this change we can regard $H_{g,n}$ as a symmetric function on $\Sigma^n$ defined in a vicinity of the origin. Along with $n$-point functions introduce also the \emph{$n$-point differentials}
\begin{equation}\label{eq:omegagn}
\omega_{g,n}=d_1\dots d_n H_{g,n}+\delta_{g,0}\delta_{n,2}\frac{dX_1\,dX_2}{(X_1-X_2)^2}.
\end{equation}
By definition, these are symmetric $n$-differentials on $\Sigma^{n}$ defined in a vicinity of the origin.
We prove in~\cite{BDKS20} that in the unstable cases $2g-2+n\le0$ they are given explicitly by
\begin{align*}
\omega_{0,1}&=y_1\,dx_1,\\
\omega_{0,2}&=\frac{dz_1\,dz_2}{(z_1-z_2)^2},
\end{align*}
where $x_i=x(z_i)$, $y_i=y(z_i)$, $X_i=X(z_i)$. These relations suggest that there should be relationship between the forms $\omega_{g,n}$ and the forms obtained by the topological recursion with the spectral curve data $(\Sigma,x,y,\omega_{0,2})$. There is a huge number of results (see the references in Section~\ref{sec:appl}) that all together imply the following general principle:

\begin{genprin} If $y(z)$ can be extended as a meromorphic function defined on $\mathbb{C}\mathrm{P}^1$ with a global coordinate $z$, and the 1-form $dx(z)$ with  $x(z)\coloneqq \log z - \psi(y(z))$ is meromorphic on $\mathbb{C}\mathrm{P}^1$, and they satisfy the assumptions of topological recursion, then there exist suitable $\hbar^2$-deformations $\hat\psi(\hbar^2,y)$ and $\hat y(\hbar^2,z)$ of $\psi(y)=\hat\psi(0,y)$ and $y(z)=\hat y(0,z)$ such that the topological recursion applied to the spectral curve data
\begin{equation}	
	(\mathbb{C}\mathrm{P}^1,
	x(z)=\log z - \psi(y(z)),
	y(z),
	B=\tfrac{dz_1\,dz_2}{(z_1-z_2)^2}
)
\end{equation}
returns the $n$-differentials $\omega_{g,n}$, $g\geq 0$, $n\geq 1$, whose power expansion in the variables $X_i=\exp(x(z_i))$, $i=1,\dots,n$ at $z_1=\cdots =z_n=0$ is given by~\eqref{eq:omegagn}
where the $H_{g,n}$ are the $n$-point functions of the partition function $Z_{\hat\psi,\hat y}$.
\end{genprin}

There are many cases when the $\hbar^2$-deformation is not needed and we have $\hat\psi(\hbar^2,y)=\psi(z)$, $\hat y(\hbar^2,z)=y(z)$~\cite{KZ2015,ACEH-2,KPS,BDKLM}, and there are also known cases where the $\hbar^2$-deformations are inevitable~\cite{DKPS,DKPSS}. It is not known whether the General Principle, formulated this way, is actually working in full generality. But it works in all studied examples and can be even extended to the more general cases, for instance, there is a version of topological recursion that does not require the zeros of $dx$ to be simple~\cite{BE13}.

\subsubsection{Goal of the paper} The main goal of the present paper is to push the general principle as far as possible, namely, we explore what can be the most general statement on topological recursion that would fall under the General Principle. To this end we find two families of the data $(\hat \psi, \hat y)$ when the General Principle does work, and these families cover almost all sensible choices for $\psi$ and $y$ that lead to a rational spectral curve.

\subsection{Results of this paper}\label{sec1.3}
We refer to the assumptions formulated in the following definition as the \emph{natural analytic assumptions} on $\hat\psi$ and $\hat y$:
\begin{definition}[Natural analytic assumptions]\label{def:naa}
	Let $\hat\psi(\hbar^2,y)$ and $\hat y(\hbar^2,z)$ be arbitrary formal power series such that $\hat\psi(\hbar^2,0)=0$ and $\hat y(\hbar^2,0)=0$. Let $\psi(y)\coloneqq \hat \psi(0,y)$ and $y(z)\coloneqq \hat y(0,z)$. Assume that the series $\frac{d\psi(y)}{dy}|_{y=y(z)}$ and $\frac{dy(z)}{dz}$ extend analytically as rational functions in~$z$. This assumption implies that the series $X(z)\coloneqq \exp(x)= z\exp(-\psi(y(z)))$ has a non-zero radius of convergence and extends analytically as a function on the Riemann surface $\Sigma=\C P^1$ with affine coordinate~$z$ such that the $1$-form
\begin{equation}
dx=\frac{dX}{X}=(1-z\, y'(z)\,\psi'(y(z)))\frac{dz}{z}
\end{equation}
is meromorphic (i.e.~rational) and has a finite number of zeros (denoted by $N$ and $p_1,\dots,p_N$, respectively). Besides, we assume that $dy$ and $\frac{d\psi(y)}{dy}|_{y=y(z)}$ are regular at the zeros of $dx$; and also that all coefficients of positive powers of $\hbar$ in the series $\hat\psi(\hbar,y(z))$ and $\hat y(\hbar,z)$ are rational functions in~$z$ whose singular points are different from the zeros of~$dx$.
\end{definition}
Here and below when we say that a function or a differential is regular at some point we mean that it is holomorphic in a neighborhood of this point.

 Without these assumptions formulated in Definition~\ref{def:naa} the very concept of topological recursion is not even well defined. Note that even though $dx$ is a rational $1$-form neither $x(z)$ nor $y(z)$ are assumed to be rational or even univalued: they may contain logarithmic summands and the procedure of topological recursion is still applicable.

\subsubsection{Loop equations}
Let $\omega_{g,n}$, $g\geq 0$, $n\geq 1$, be symmetric $n$-differentials associated with the partition function $Z_{\hat\psi,\hat y}$ and defined by~\eqref{eq:omegagn}.

\begin{theorem}\label{th:blobbed} Under the natural analytic assumptions of Definition~\ref{def:naa}, the symmetric differentials $\omega_{g,n}$ for $2g-2+n>0$ can be extended analytically to $\Sigma^n$, $\Sigma=\C P^1$ as global rational forms. Furthermore, $\omega_{g,n}$'s satisfy the quadratic and the linear loop equations at any zero point $p_i$ of $dx$ provided that this zero is simple.
\end{theorem}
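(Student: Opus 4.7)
My plan is to derive Theorem \ref{th:blobbed} from the explicit closed formulas for the $n$-point differentials $\omega_{g,n}$ established in the authors' previous paper \cite{BDKS20}, which express each $\omega_{g,n}$ as a finite combination of terms built from derivatives (in a suitable sense) of the basic spectral data $\hat\psi(\hbar^2,y(z))$ and $\hat y(\hbar^2,z)$, together with the universal bi-differential kernels $1/(z_i-z_j)^k\,dz_i\,dz_j$. Since the natural analytic assumptions are exactly what is needed to guarantee that each building block in this formula is a rational function on $\mathbb{CP}^1$ in each of its arguments (with singularities controlled away from the zeros $p_1,\dots,p_N$ of $dx$), the formal expansion of $\omega_{g,n}$ at the origin must agree with a globally rational $n$-differential on $\Sigma^n$. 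This handles the meromorphy statement.

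For the linear loop equation at a simple zero $p_i$ of $dx$, I would work in a local coordinate $w$ near $p_i$ in which the deck involution $\sigma_i$ takes the form $w\mapsto -w$ up to higher order. The closed formula for $\omega_{g,n+1}(w,z_{\llbracket n \rrbracket})$ is then decomposed into its even and odd parts under $w\mapsto \sigma_i(w)$. One verifies by inspection of the formula that all possible polar contributions at $w=p_i$ lie in the odd sector; this is essentially due to the odd parity of $dx$ at a simple critical point and the structure of the derivatives appearing in the formula. Symmetrization therefore yields an expression that is regular at $p_i$ and vanishes to first order there, which is precisely the linear loop equation.

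The main obstacle is the quadratic loop equation, whose content is genuinely bilinear in the $\omega$'s and cannot be reduced to a symmetry statement for a single differential. My plan is to incorporate it into the stronger higher loop equations advertised in the abstract. The idea is to introduce an auxiliary generating series that assembles, for fixed $g$ and $n$, the full family of products $\omega_{g_1,|I_1|+1}(w,z_{I_1})\,\omega_{g_2,|I_2|+1}(\sigma_i(w),z_{I_2})$ together with $\omega_{g-1,n+2}(w,\sigma_i(w),z_{\llbracket n \rrbracket})$, and to identify this series, via the closed formula from \cite{BDKS20}, with a single object whose rationality in $w$ and order of vanishing at $p_i$ can be read off directly. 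The critical input is an algebraic identity comparing the Taylor expansion of $\hat\psi(\hbar^2,y)$ around $y=y(z)$ with the expansion of the recursion kernel near $p_i$; this identity is precisely the step at which the $\hbar^2$-deformations of $\psi$ and $y$ intervene, as the required vanishing cannot be obtained without them. Once the bilinear combination is so rewritten, extracting the relevant coefficient yields the double zero at $w=p_i$ and completes the proof.
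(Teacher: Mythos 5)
Your treatment of meromorphy and of the linear loop equation is essentially the paper's argument: the closed formula from \cite{BDKS20} exhibits $W_{g,n}$ as a finite sum of terms $D_1^j\bigl(Q_1^{-1}f_j\bigr)$ with $f_j$ regular at the $p_i$, and every such term has odd principal part in the local coordinate $\nu$ with $\nu^2=x-x(p_i)$; this is precisely the paper's characterization of the space $\hat\Xi$ and its invariance under $D$.

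For the quadratic loop equation your overall strategy --- embed it into a family of higher loop equations controlled by a generating series that is itself given by a closed formula --- is indeed the paper's strategy, but two of your concrete steps would fail as stated. First, you propose to assemble the generating series directly from the anti-diagonal combination $\omega_{g-1,n+2}(w,\sigma_i(w),z_{\llbracket n \rrbracket})+\sum\omega_{g_1,\cdot}(w,\cdot)\,\omega_{g_2,\cdot}(\sigma_i(w),\cdot)$. Since $\sigma_i$ is defined only locally near $p_i$ and is not an algebraic operation on the curve, this combination cannot be identified with a vacuum expectation value or any globally defined closed formula. The paper instead works with the \emph{diagonal} restriction $\cW^{(2)}_{g,n}=W_{g-1,n+1}(z_1,z_1,\cdot)+\sum W_{g_1,\cdot}(z_1,\cdot)W_{g_2,\cdot}(z_1,\cdot)$ --- so, contrary to your claim, the quadratic loop equation \emph{is} reduced to a parity statement about a single function --- realized as the $u^2$-coefficient of a correlator with an extra insertion of $\cE_0(\hbar u)$; the equivalence of the diagonal and anti-diagonal versions is a separate symmetrization identity, Equation~\eqref{eq:SSW}, which needs the additional input, absent from your plan, that the $W_{g,n}$ have no poles on the diagonals and anti-diagonals near the critical points (Lemma~\ref{lem:ReductionQLEassumptions}). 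Second, your ``critical input'' is misidentified: the loop equations hold for \emph{arbitrary} $\hbar^2$-deformations satisfying the natural analytic assumptions, with no fine-tuning whatsoever; the deformation that actually drives the computation in Proposition~\ref{LEtheor} is an auxiliary infinitesimal deformation $\widetilde\psi=\hat\psi+\epsilon\,\zeta(\hbar\partial_y)e^{uy}$ over $\C[\epsilon]/(\epsilon^2)$ encoding the $\cE_0(\hbar u)$ insertion, not the $\hbar^2$-deformation of the Orlov--Scherbin data. The specific $\hbar^2$-deformations matter only for the projection property, which is not part of this theorem.
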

In other words, the thus defined $\omega_{g,n}$'s  satisfy the \emph{blobbed topological recursion} in the sense of~\cite{BS17}. This theorem is proved as Theorem~\ref{thm:Blobbed} in the main part of the text.

The idea of the proof is as follows. We derived in~\cite{BDKS20} an explicit closed formula for the $n$-point functions $H_{g,n}$ and thus for $\omega_{g,n}$, see also Proposition~\ref{prop:FormulasWgn}. This formula holds true without any analytic assumption on $(\hat\psi,\hat y)$. However, if the assumptions are satisfied then the very structure of this formula implies immediately both the rationality of $\omega_{g,n}$ and the linear loop equations. In Section~\ref{SecPrel} we derive a similar formula for the combination of $n$-point differentials participating in the quadratic loop equation (Proposition~\ref{LEtheor}). Again, the formula itself is valid without any analytical assumptions but as long as they are satisfied the quadratic loop equation is a straightforward consequence of the very structure of the formula.

\begin{remark}In fact, we prove a much more general statement on a family of loop equations that naturally extends the linear and quadratic ones, see Theorem~\ref{theor:higher}. A different but equivalent form of these higher loop equations is considered in~\cite{DKPS}. 
	
	It is proved in \emph{op.~cit.} that the higher loop equations are formal corollaries of the linear and the quadratic ones in general (but the linear and quadratic loops equations are only proved there for the specific situation of orbifold Hurwitz numbers with completed cycles).
		
In the present paper we do not derive the higher loop equations from the linear and quadratic ones, but rather provide an independent direct proof of all these equations separately for all cases satisfying the natural analytic assumptions.
\end{remark}

\subsubsection{Topological recursion} Taking into account Theorem~\ref{th:blobbed}, the only missing component to prove topological recursion is the projection property. The aforementioned formula for $H_{g,n}$ implies that under the natural analytic assumptions these functions being rational in $z_1,\dots,z_n$ along with the `expected' poles at $p_1,\dots,p_N$ may have extra `unwanted' poles. For example, there might be extra `unwanted' poles at the poles of the rational functions $\frac{d\psi(y)}{dy}|_{y=y(z)}$ and $\frac{dy(z)}{dz}$ or at the infinity in coordinate $z$.  In Section~\ref{sec:proj}, we study two very general families of data (in fact, covering most of the cases which satisfy the natural analytic assumptions) when one can choose the $\hbar^2$-deformations $\hat\psi$ and $\hat y$ of $\psi$ and $y$ such that all these extra poles cancel out. This implies that the projection property for these cases is also satisfied and the topological recursion holds true.

Specifically, consider the following families of the data $\hat\psi$ and $\hat y$:
\begin{align}
	\text{Family I:}   &\quad \hat\psi(\hbar^2,y) =  \cS(\hbar \partial_y)P_1(y)+\log P_2(y) -\log P_3(y);\quad \hat y(\hbar^2,z) = R_1(z)/R_2(z);\\ \notag
	\text{Family II:} &\quad \hat\psi(\hbar^2,y) = \alpha y;\quad \hat y(\hbar^2,z) = R_1(z)/R_2(z)+
	\cS(\hbar z\partial_z)^{-1} (\log R_3(z)- \log R_4(z)),
\end{align}
where $\alpha\neq 0$ is a number and $P_i$, $R_i$ are arbitrary polynomials such that $\hat\psi(0,y)$ and $\hat y(0,z)$ are both non-zero but vanishing at zero, and $\cS(u) \coloneqq \sinh(u/2)/u$.

In all these cases, let $\psi\coloneqq \hat \psi(0,y)$ and $y\coloneqq \hat y(0,z)$ and recall that $z$ is a global coordinate on $\mathbb{C}\mathrm{P}^1$. Let $x(z)\coloneqq \log z-\psi(y(z))$. Note that in all these cases  $dx(z)$ is a meromorphic $1$-form.
The assumptions of topological recursion can be reformulated as some general position requirements for $\psi$ and $y$. Under these assumptions consider the symmetric $n$-differentials $\omega_{g,n}$ produced by the topological recursion. We have:

\begin{theorem}\label{th:toporecIntro} The expansions of the $n$-differentials $\omega_{g,n}$, $g\geq 0$, $n\geq 1$, near the point $z_1=\dots=z_n=0$ in the coordinates $X_i=\exp(x(z_i))$, $i=1,\dots, n$, are given by $d_1\cdots d_n H_{g,n}$, where $H_{g,n}$ are the $n$-point functions of $Z_{\hat \psi, \hat y}$.
\end{theorem}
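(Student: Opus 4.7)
The plan is to deduce the theorem from Theorem~\ref{th:blobbed} together with the criterion of~\cite{BS17}, which identifies topological recursion with the conjunction of meromorphy, the linear and quadratic loop equations, and the projection property. Since $\Sigma\simeq\C P^1$, the spectral curve data is forced to have $B=\tfrac{dz_1\,dz_2}{(z_1-z_2)^2}$, and, as emphasized in the introduction, in this setting the projection property reduces to the absence of poles of $\omega_{g,n}(z_1,\dots,z_n)$ in any of its arguments outside the zeros $p_1,\dots,p_N$ of $dx$. Theorem~\ref{th:blobbed} already supplies the remaining three conditions: the rationality of $\omega_{g,n}$ on $(\C P^1)^n$ together with the linear and quadratic loop equations at every simple zero of $dx$.

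So the entire burden of the proof is the projection property, i.e.\ showing that the only poles of $\omega_{g,n}$ in each variable $z_i$ are at the $p_i$. The first step is to invoke the closed formula for $H_{g,n}$ from~\cite{BDKS20} recalled in Proposition~\ref{prop:FormulasWgn}. After the passage from $H_{g,n}$ to $\omega_{g,n}$ via $dX/X=dx$, the candidate ``unwanted'' singularities in the variable $z_i$ can only sit at (a) the poles of the rational function $\psi'(y(z_i))\,y'(z_i)$, (b) the poles of the $\hbar^{2m}$-corrections $\hat\psi_m(y(z_i))$ and $\hat y_m(z_i)$, $m\geq 1$, that enter the formula, and (c) the point $z_i=\infty$. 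For each of the two families, the second step is to show that with the prescribed $\hbar^2$-deformations every such singularity disappears.

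For Family~I, where $\hat y$ is purely classical and the $\hbar$-dependence is concentrated in $\cS(\hbar\partial_y)P_1(y)$, the candidate poles come from the zeros of $P_2(y(z))$ and of $P_3(y(z))$ via the logarithmic summands of $\psi$, from the poles of $R_2(z)$, and from $z=\infty$. For Family~II, where $\psi$ is linear (so $\psi'(y)$ has no poles at all in $y$) and the $\hbar$-dependence is concentrated in $\cS(\hbar z\partial_z)^{-1}(\log R_3(z)-\log R_4(z))$, the candidate poles come from the zeros of $R_2,R_3,R_4$ and from $z=\infty$. In either case the operator $\cS(\hbar\partial_y)$, respectively $\cS(\hbar z\partial_z)^{-1}$, quantises the corresponding classical ingredient in precisely the way required so that, once the deformation is inserted into the formula for $H_{g,n}$, the $\hbar^{2g}$ principal parts at each candidate singular point cancel those produced by the undeformed rational data.

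The hardest step will be the pole-cancellation verification itself: one has to expand the explicit formula for $\omega_{g,n}$ near each candidate singular point, including $z=\infty$ where an additional behaviour-at-infinity analysis is needed, and check that, order by order in $\hbar^{2g}$, the principal parts coming from the classical rational ingredients are exactly annihilated by those coming from the $\hbar^2$-deformation built into the $\cS$-operator. Once this cancellation is established at every non-$p_i$ pole, the projection property follows, and combined with Theorem~\ref{th:blobbed} and the criterion of~\cite{BS17} this yields the full Chekhov--Eynard--Orantin topological recursion and the identification of its output with $d_1\cdots d_n H_{g,n}$, completing the proof.
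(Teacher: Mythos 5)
Your strategy coincides exactly with the paper's: Theorem~\ref{th:blobbed} supplies meromorphy and the linear and quadratic loop equations, the criterion of~\cite{BS17} reduces the theorem to the projection property, and on $\C P^1$ the projection property becomes the statement that $\omega_{g,n}$ has no poles outside $p_1,\dots,p_N$; you also correctly locate the candidate unwanted poles (zeros of $P_2,P_3,R_2$ resp.\ $R_2,R_3,R_4$ after composition with $y(z)$, and $z=\infty$) and correctly identify the $\cS$-deformations as the agents of cancellation. So at the level of architecture there is nothing to object to.

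The genuine gap is that the cancellation itself is asserted rather than proved: the sentence ``the $\cS$-operator quantises the corresponding classical ingredient in precisely the way required so that the principal parts cancel'' is the entire content of the projection property, and it is exactly the part that occupies Section~\ref{sec:ProjPropProofs} of the paper. Two ideas are needed that your proposal does not contain. First, a structural reduction: the paper observes that the extra poles can only be produced by the operators $\overline U_i$, so it suffices to analyze $H_{g,1}$, and that poles at $p_1,\dots,p_N$ generated by iterating $D_1$ on functions with simple poles there automatically have odd principal parts (membership in $\Theta$), so only the non-$p_i$ locations must be checked. Second, and more importantly, the actual cancellation mechanism: Lemmata~\ref{lem:cvy} and~\ref{lem:cuzdz} show that the $\hbar^{2k}$-coefficients of $e^{v(\cS(v\hbar\partial_y)/\cS(\hbar\partial_y)-1)\log(y-A)}$ vanish at $v=-1,0,1,\dots,2k-1$ and hence carry polynomial factors $v(v-1)\cdots(v-l+1)$; after resummation over the index $j$ these become differential operators $(D_1-1)(D_1-2)\cdots(D_1-l+1)$ applied to $(z_1-B)^{-(l-1)}$, which are regular because $D_1=\bigl(-(z_1-B)+O((z_1-B)^2)\bigr)\partial_{z_1}$ near $B$. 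Without this identity (or an equivalent one) the claimed ``order by order in $\hbar^{2g}$'' cancellation has no proof, and it is precisely here that the specific form of the $\hbar^2$-deformation (why $\cS(\hbar\partial_y)$ on $P_1$ but \emph{not} on the logarithmic summands, and why $\cS(\hbar z\partial_z)^{-1}$ on the logarithmic part of $\hat y$) is forced. The analysis at $z=\infty$ is likewise a separate order-of-pole count, not a corollary of the finite-point cancellations.
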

Moreover, we can relax the general position requirements, passing from topological recursion to the so-called \emph{Bouchard-Eynard recursion}. This is done in Section~\ref{sec:TopoRecAppl}, where we prove the more general Theorems~\ref{thm:Family1TR} and ~\ref{thm:Family2TR}, which imply Theorem~\ref{th:toporecIntro}.

Theorems~\ref{thm:Family1TR} and ~\ref{thm:Family2TR} subsume, as special cases, all results on topological recursion for the hypergeometric KP tau functions (i.e. all results on topological recursion for Hurwitz-type enumerative problems) obtained so far in the literature. We present a survey of these results in Section~\ref{sec:appl}. 

\subsubsection{ELSV-type formulas}
One can use our results on the topological recursion for the Hurwitz-type numbers to uniformly prove various ELSV-type formulas that generalize the classical formula of Ekedahl--Lando--Shapiro--Vainshtein~\cite{ELSV} and relate the various weighted Hurwitz numbers to the intersection theory of the moduli spaces of curves and cohomological field theories. In fact, if it is known that a particular enumerative problem satisfies the spectral curve topological recursion, the techniques of \cite{Eynard} and~\cite{DOSS} allow to rather straightforwardly deduce and prove the respective ELSV-type formula, cf. \cite{LPSZ,FangZong,BDKLM} where it is done for several particular cases. This is discussed in Section~\ref{sec:ELSV}. 

\subsection{Structure of the paper}
Section~\ref{SecPrel} is devoted to proving the higher loop equations (which imply the linear and quadratic ones) for all cases satisfying the natural analytic assumptions. In Section~\ref{sec:proj} we introduce two general families of data and formulate and discuss the theorems stating that for these families the projection property holds. Section~\ref{sec:ProjPropProofs} is devoted to the proofs of the theorems regarding the projection property, which are quite technical and involved. In Section~\ref{sec:TopoRecAppl} we recall the Bouchard-Eynard recursion and state the theorems that topological recursion (for the case of simple zeros of $dx$) and Bouchard-Eynard recursion (in general) hold for our aforementioned families; we also list all known literature on topological recursion for Hurwitz-type enumerative problems and discuss how these cases are subsumed by our theorems; furthermore we discuss the implications for ELSV-type formulas.

\subsection*{Acknowledgments} S.S. was supported by the Netherlands Organization for Scientific Research. All authors were partially supported by International Laboratory of Cluster Geometry NRU HSE, RF Government grant, ag.~\textnumero{}  075-15-2021-608 dated 08.06.2021.

We thank Alexander Alexandrov, Nitin Chidambaram, and Reinier Kramer for useful discussion. We are also grateful to the anonymous referee for useful remarks.

\section{Loop equations}\label{SecPrel}

The goal of this Section is to prove that the differentials of the $n$-point functions of $\hbar^2$-deformed KP tau functions of hypergeometric type satisfy the linear and quadratic loop equations, and, therefore, the blobbed topological recursion, under \emph{natural analytic assumptions}. To this end, we formulate a much more general set of \emph{higher loop equations} for the formal power series expansions of the $n$-point functions and prove all of them in a unified manner.

The \emph{natural analytic assumptions} of Definition~\ref{def:naa} that we have to impose is the minimal set of assumptions needed to be able to pass from the formal power series set-up that is natural for the $n$-point functions of $Z_{\hat\psi,\hat y}$ to the analytic set-up of topological recursion; see Theorem~\ref{th:linloop} and Section \ref{sec1.3} for the precise formulation of these assumptions.
Under this set of natural analytic assumptions, the higher loop equations we introduce below hold and imply the linear and the quadratic ones.

The key technical ingredient in the proof of higher loop equations is Proposition~\ref{LEtheor} which on its own does not require any assumptions on $\hat\psi$ and $\hat y$ and holds in the formal power series set-up.

\subsection{Closed formulas for the differentials of \texorpdfstring{$\hbar^2$}{h2}-deformed \texorpdfstring{$n$}{n}-point functions}

Consider $\hbar^2$-deformed initial data $\hat \psi(\hbar^2,y)$ and $\hat y(\hbar^2,z)$ such that $\hat\psi(\hbar^2,0)=0$ and $\hat y(\hbar^2,0)=0$, let $\psi(y)\coloneqq\hat\psi(0,y)$ and $y(z)\coloneqq\hat y(0,z)$, and let
\begin{equation}
	\label{Curve}
	X(z)\coloneqq z\,e^{-\psi(y(z))},
\end{equation}
which can be considered as a formal local change of variables at $X=z=0$. With this change of variables we have
\begin{align}
	D&:=X\partial_{X}=\frac{1}{Q(z)}z\partial_{z}; \\  \label{eq:Qfunction}
	Q(z)&:=1-z\, y'(z)\,\psi'(y(z)).
\end{align}

From now on let $p_1,\ldots, p_N$ be the zeros of $dx=\frac{dX}{X} = (1-zy'(z)\psi'(y(z)))\frac{dz}z$. We assume that these zeros are simple in the present section.
\begin{lemma}\label{lem:dxzeros}
	Under the natural analytic assumptions of Definition~\ref{def:naa} the set of zeros of $Q(z)$ as a function on $\Sigma\setminus\{\infty\}\cong\mathbb{C}$ exactly coincides with the set of zeros of $dx=dX/X$.
\end{lemma}
\begin{proof}	
	Let us show that $\infty\in\Sigma$ is not a zero of $dx$. 
	Assume the contrary, then $dy=y'(z)dz$ is regular at $\infty$, and thus $y'(z) = O(\frac {1} {z^2})$ for $z\rightarrow\infty$. Then as $\frac{d\psi(y)}{dy}|_{y=y(z)}$ is regular at $\infty$, we have $z y'(z)\psi'(y(z)) = O(\frac{1}{z})$, therefore $Q(\infty)=1$ and  $dx=Q(z)\frac{dz}z$ has a pole at $\infty$; thus we have arrived at a contradiction.
	
	Since $y$ and $\psi$ are defined as regular series (at zero), the expression $(\psi(y(z)))'=y'(z)\,\psi'(y(z))$ is regular at $z=0$, and thus $Q(0)=1$, and therefore $dx=Q(z)\frac{dz}z$ has a simple pole at $z=0$.
	
	Finally, if $dx|_p=0$ and $p\notin \{0,\infty\}$ then $p$ is a zero of $Q(z)$ and vice versa since $dx=Q(z)\frac{dz}z$ and $\frac{dz}z$ is regular and nonvanishing at $p\notin \{0,\infty\}$.  
\end{proof}

Let $H_{g,n} = H_{g,n}(X_1,\dots,X_n)$ be the $n$-point functions of $Z_{\hat\psi,\hat y}$. Define $z_i$ by $X_i=X(z_i)$ and let $D_i=X_i\partial_{X_i}=Q(z_i)^{-1}z_i\partial_{z_i}$, $i=1,\dots,n$.
In~\cite{BDKS20} we obtained explicit closed algebraic formulas for $H_{g,n}$ that we recall in~Section~\ref{sec:proj}. In the current Section, with a goal to analyze the loop equations, we use the explicit closed algebraic formulas for
\begin{align}
	W_{g,n}&\coloneqq D_1\dots D_nH_{g,n}, & (g,n)\ne(0,2);\\
	W_{0,2}&\coloneqq D_1D_2H_{0,2}+\frac{X_1X_2}{(X_1-X_2)^2}. \label{eq:W02}
\end{align}
that are also derived in~\cite{BDKS20}.

\begin{remark} The main case analyzed in~\cite{BDKS20} is the case of undeformed $\psi$ and $y$, that is, the case of $\hat\psi=\psi$ and $\hat y = y$. Note however that all arguments in \emph{op.cit.} work without any change also for this more general case, cf.~\cite[Remark~5.6]{BDKS20}. That is, while the statement of Proposition~\ref{prop:FormulasWgn} below is not proved in \cite{BDKS20}, its proof is completely analogous to the proof of \cite[Theorem~4.8]{BDKS20}, so we do not repeat it here.
\end{remark}

Denote
\begin{equation}
	\cS(z)\coloneqq \frac{e^{z/2}-e^{-z/2}}{z}.
\end{equation}

\begin{proposition} \label{prop:FormulasWgn}
For $n\geq 2$, $(g,n)\neq(0,2)$ we have
\begin{equation}\label{eq:mainprop}
 W_{g,n}=[\hbar^{2g-2+n}]
	U_n\dots U_1
	\sum_{\gamma \in \Gamma_n}\prod_{\{v_k,v_\ell\}\in E_\gamma} w_{k,\ell},
\end{equation}
where the sum is over all connected simple graphs on $n$ labeled vertices,
\begin{equation}\label{eq:wkl}
	w_{k,\ell}=e^{\hbar^2u_ku_\ell\cS(u_k\hbar\,Q_k D_k)\cS(u_\ell\hbar\,Q_\ell D_\ell)
		\frac{z_k z_\ell}{(z_k-z_\ell)^2}}-1
\end{equation}
and $U_{i}$ is the operator acting on a function~$f$ in~$u_i$ and $z_i$ by
\begin{align}\label{eq:Uihbar}
	U_{i} f&=
	\sum_{j,r=0}^\infty D_i^j\left(\frac{L^j_{r,i}
	}{Q_i}
	[u_i^r] \frac{e^{u_i(\cS(u_i\,\hbar\,Q_i\,D_i)\hat y(z_i)-y(z_i))}}{u_i\hbar\,\cS(u_i\,\hbar)}f(u_i,z_i)\right),
\end{align}
where
\begin{align}
	\label{eq:Lr}
	L^j_{r,i} &:= \left.\left([v^j]e^{-v\,\psi(y)}\partial_y^r e^{v\frac{\cS(v\,\hbar\,\partial_y)}{\cS(\hbar\,\partial_y)}\hat \psi(y)}\right)\right|_{y=y(z_i)}\\ \nonumber
	&\phantom{:}=\left.\left([v^j]\left(\partial_y+v\psi'(y)\right)^r e^{v\left(\frac{\cS(v\hbar\partial_y)}{\cS(\hbar\partial_y)}\hat{\psi}(y)-\psi(y)\right)}\right)\right|_{y=y(z_i)}
\end{align}
and $Q_i=Q(z_i)$.
For $(g,n)=(0,2)$ we have
\begin{equation}
	W_{0,2} =\frac{1}{Q_1Q_2}\frac{z_1z_2}{(z_1-z_2)^2}.
\end{equation}
For $n=1$, $g>0$ we have 
\begin{equation}\label{eq:Wg1}
	W_{g,1} = [\hbar^{2g}]\left(\hbar \,U_1 1 + \sum_{j=0}^\infty D_1^{j}L_{0,1}^{j+1}\; D_1y(z_1)\right).
\end{equation}
Finally, for $(g,n)=(0,1)$ we have
\begin{equation}
	W_{0,1} = y(z_1).
\end{equation}
\end{proposition}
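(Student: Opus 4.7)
The proposition is an adaptation of \cite[Theorem~4.8]{BDKS20} to the $\hbar^2$-deformed setting, and the plan is to follow that argument while carefully tracking where the deformation enters. The first step is to represent the partition function $Z_{\hat\psi,\hat y}$ as a vacuum expectation value in the semi-infinite wedge formalism: $Z_{\hat\psi,\hat y}$ equals a vacuum expectation of a product of two vertex operators (encoding $p$ and $\hat y/\hbar$ respectively) sandwiching a diagonal ``content operator'' whose eigenvalue on the Schur basis vector $s_\lambda$ is $\exp\bigl(\sum_{(i,j)\in\lambda}\hat\psi(\hbar^2,\hbar(i-j))\bigr)$. Differentiating in $p_{k_1},\dots,p_{k_n}$, summing the $k_i$'s against the weights $z_i^{k_i}$, and applying the change of variables $X_i=X(z_i)$ then expresses $W_{g,n}$ as an $\hbar$-graded free-fermion connected expectation of $n$ diagonal bilinears $\cE(u_i,v_i)$.

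Next, I would apply Wick's theorem to rewrite this connected expectation as a sum over connected simple graphs $\gamma\in\Gamma_n$ on $n$ labelled vertices. Each vertex corresponds to one of the $n$ bilinears and each edge to a contraction. A direct computation, parallel to the one in \cite{BDKS20}, shows that the contraction between vertices $k$ and $\ell$ produces precisely the propagator $w_{k,\ell}$ of~\eqref{eq:wkl}; crucially, this propagator depends only on the undeformed $y$ through the local coordinate $z$, and not on any $\hbar^2$-corrections to $\hat\psi$ or $\hat y$. The single-vertex contribution, after conjugation by the content operator and passage to the local coordinate $z_i$, produces the operator $U_i$ of~\eqref{eq:Uihbar}, in which the $\hbar^2$-deformations are packaged: the deformation of $\hat\psi$ enters solely through the coefficients $L^j_{r,i}$ of~\eqref{eq:Lr}, while the deformation of $\hat y$ enters solely through the exponential $\exp\bigl(u_i(\cS(u_i\hbar Q_i D_i)\hat y(z_i)-y(z_i))\bigr)$. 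This packaging is precisely the content of \cite[Remark~5.6]{BDKS20}.

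The unstable cases are handled separately: $W_{0,1}=y(z_1)$ follows from the dispersionless limit of the tau function; the formula for $W_{0,2}$ comes from the leading Cauchy-kernel contribution once the correction term $X_1X_2/(X_1-X_2)^2$ is absorbed as in~\eqref{eq:W02}; and the extra summand in~\eqref{eq:Wg1} for $n=1$, $g>0$ accounts for the single-vertex graph, which carries no edges and therefore no propagator factor. The main technical difficulty—and the place where one must invest careful bookkeeping—is verifying that the $\hbar^2$-corrections factor cleanly into the vertex operator $U_i$ and do not spill into the propagator $w_{k,\ell}$; this is what makes the combinatorial graph expansion of the deformed theory structurally identical to the undeformed one. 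Once this factorisation is established, the remainder of the argument is a verbatim transcription of the proof of \cite[Theorem~4.8]{BDKS20}, and in keeping with the remark preceding the statement I would only spell out the specific places where the deformation modifies that computation.
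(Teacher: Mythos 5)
Your plan is essentially the paper's own route: the paper gives no independent proof of this proposition, saying only that it is ``completely analogous to the proof of \cite[Theorem~4.8]{BDKS20}'', and your sketch --- the vacuum-expectation representation of $Z_{\hat\psi,\hat y}$, conjugation by the diagonal content operator, the passage to connected correlators yielding a sum over connected graphs with propagators $w_{k,\ell}$ depending only on the undeformed data, and the deformation entering only through the vertex operators $U_i$ (via $L^j_{r,i}$ for $\hat\psi$ and via the exponential factor for $\hat y$) --- is exactly that argument, as one can check against the parallel machinery the paper does display in its proof of Proposition~\ref{LEtheor}. One minor inaccuracy: in~\eqref{eq:Wg1} the edgeless one-vertex graph is what produces $\hbar\,U_1 1$ itself, so the second summand $\sum_{j\ge0} D_1^{j}L_{0,1}^{j+1}D_1y(z_1)$ is not ``the single-vertex graph'' but a separate correction specific to the $n=1$ regularization (compare the analogous extra terms in Propositions~\ref{LEtheor} and~\ref{prop:Hprop}).
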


\begin{remark}
Note that the first exponential in \eqref{eq:Lr} does not depend on $\hbar$ (see \cite[Equations~(94) and~(95)]{BDKS20}).	
\end{remark}

We use the formulas given in Proposition~\ref{prop:FormulasWgn} throughout this Section in order to derive the loop equations.

\subsection{Linear and quadratic loop equations and the space \texorpdfstring{$\hat{\Xi}$}{Xihat}}\label{Sec:loopeq}

Assume that the series $z=z(X)$ has a positive radius of convergence. Let $\Sigma$ be the Riemann surface of $z=z(X)$.

Abusing notation, we denote by $Q(z)$ the analytic extension of the function $Q(z)$ defined in Equation~\eqref{eq:Qfunction}, and recall that it has a finite number of zeros that we denote by $p_1,\ldots,p_N$. Let $\sigma_i$ be the deck transformation of $X$ at $z\to p_i$ on $\Sigma$. Throughout this section we assume that all zeros of $Q$ (and, therefore, the critical points of the analytic extension of $X$ to a function on $\Sigma$) are simple.

We use the functions
\begin{equation}	
W_{g,n} = D_1\cdots D_n H_{g,n}+\delta_{g,0}\delta_{n,2}X_1X_2/(X_1-X_2)^2
\end{equation}
rather than $H_{g,n}$ or
\begin{equation}	
\omega_{g,n}=W_{g,n} \prod_{i=1}^n dX_i/X_i
\end{equation}
 in order to reformulate the loop equations. Also, abusing the terminology a little bit, we often say that the $n$-point functions $H_{g,n}$ satisfy the loop equations rather than $\omega_{g,n}$.

\begin{definition} \label{def:LLEs}
	We say that $n$-point functions $H_{g,n}$ satisfy the linear loop equations if for
	any $g,n\geq 0$ and for any $i = 1,\ldots,N$ the expression
	\begin{equation}
		\label{eq:oldlle}
		W_{g,n+1}(z,z_{\llbracket n \rrbracket})+W_{g,n+1}(\sigma_i(z),z_{\llbracket n \rrbracket})
	\end{equation}
	is holomorphic in $z$ for $z\to p_i$.
\end{definition}

\begin{definition}	\label{def:QLEs}
	We say that $n$-point functions $H_{g,n}$ satisfy the quadratic loop equations if for any $g,n\geq 0$ and for any $i = 1,\ldots,N$ the expression
	\begin{equation}
		\label{eq:oldqle}
		W_{g-1,n+2}(z,\sigma_i(z),z_{\llbracket n \rrbracket})+\sum\limits_{\substack{g_1+g_2=g\\ I\sqcup J=\llbracket n \rrbracket}} W_{g_1,|I|+1}(z_1,z_{I}) W_{g_2,|J|+1}(\sigma_i(z_1),z_{J})
	\end{equation}
	is holomorphic in $z$ for $z\to p_i$.
\end{definition}

The goal of this Section is to reformulate loop equations in a more convenient way. In particular, this will allow us to immediately prove the linear loop equations for $H_{g,n}$ whose $W_{g,n}$'s are described explicitly in the previous Section.

\subsubsection{The space $\hat\Xi$} The key role in the reformulation of the loop equations is played by the space $\hat\Xi$ that we define now.

\begin{definition}
	Let $\hat{\Xi}$ be a subspace in the space of functions defined in the vicinity of the points $p_1,\dots,p_N\in\Sigma$ that is spanned by
	\begin{equation}
		\label{def:hatxi}
		\Big(X \frac{d}{d X}\Big)^{j} f(z),\quad j\ge0,
	\end{equation}
	for all $f$ regular at $p_1,\dots,p_N$ (here $f$ is also defined only in the vicinity of the points $p_1,\dots,p_N\in\Sigma$ and we do not assume that functions forming $\hat{\Xi}$ extend to the whole $\Sigma$). 
\end{definition}

\begin{proposition}
	\label{Prop:localcoord}
	A function $g$ defined in the vicinity of the points $p_1,\dots,p_N\in\Sigma$ belongs to the space $\hat\Xi$ if and only if it satisfies the following condition for every $i=1,\dots,N$. Let $\nu$ be a local coordinate near $p_i$ such that $\nu^2=x-x(p_i)$. Then the condition is that the Laurent series expansion of $g$ near $p_i$ in the coordinate $\nu$ should have odd principal part.
\end{proposition}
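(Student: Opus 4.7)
The plan is to reduce the statement to a local parity bookkeeping in the coordinate $\nu$ near each $p_i$. Since $dx$ has a simple zero at $p_i$, the coordinate $\nu$ with $\nu^2=x-x(p_i)$ exists, and the operator $X\tfrac{d}{dX}=\tfrac{d}{dx}$ becomes $\tfrac{1}{2\nu}\tfrac{d}{d\nu}$, acting on monomials by $\nu^k\mapsto \tfrac{k}{2}\nu^{k-2}$. The two key features of this action are that it preserves parity in $\nu$ (even $\mapsto$ even, odd $\mapsto$ odd) and annihilates $\nu^0$.

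For the forward direction, I would take a generator $(X\tfrac{d}{dX})^j f$ with $f$ regular at $p_i$, so $f=\sum_{k\ge 0}a_k\nu^k$. After $j$ iterations the monomial $a_k\nu^k$ becomes $a_k\cdot 2^{-j}k(k-2)\cdots(k-2j+2)\,\nu^{k-2j}$. For even $k<2j$ one of the factors vanishes so these terms disappear; for odd $k<2j$ the scalar is nonzero and contributes a single odd negative power of $\nu$; and all terms with $k\ge 2j$ remain holomorphic. Hence the principal part of $(X\tfrac{d}{dX})^j f$ at $p_i$ is odd, and this property is preserved under finite linear combinations. The same argument applies independently at every $p_i$, giving the "only if" direction.

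For the converse, I would first interpret "functions defined in the vicinity of $p_1,\dots,p_N$" as tuples of germs at distinct, hence isolated, points, so the germ of $f$ at each $p_i$ may be prescribed independently by taking the vicinity to be a disjoint union of small disks. Given $g$ with odd principal part at every $p_i$, it suffices to realize each basic odd Laurent monomial $\nu_i^{-(2\ell-1)}$. Taking $f$ with germ $\nu_i$ at $p_i$ and germ $0$ at every other $p_j$, a direct computation gives
\[
(X\tfrac{d}{dX})^\ell f \;=\; \tfrac{(-1)^{\ell-1}(2\ell-3)!!}{2^\ell}\,\nu_i^{1-2\ell}
\]
near $p_i$ (with the convention $(-1)!!=1$) and $0$ near the other $p_j$, which is a nonzero scalar multiple of the target. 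Summing finitely many such generators with appropriate coefficients matches the principal parts of $g$ at every $p_i$; the remaining holomorphic discrepancy is itself a $j=0$ generator of $\hat\Xi$.

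The only potentially delicate step is the independence-of-germs used in the converse, but this is automatic since the $p_1,\dots,p_N$ are isolated on $\Sigma$, so the "vicinity" in which our functions live may always be taken as a disjoint union of small disks. Once this is acknowledged, the entire proposition reduces to elementary parity bookkeeping on Laurent series in $\nu$ and requires no further spectral curve input beyond the simplicity of the zero of $dx$ at each $p_i$.
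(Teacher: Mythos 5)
Your proposal is correct and follows essentially the same route as the paper's proof: pass to the local coordinate $\nu_i$, note that $X\tfrac{d}{dX}$ becomes $\tfrac{1}{2\nu_i}\tfrac{d}{d\nu_i}$, observe that this operator preserves odd principal parts (giving the forward direction), and that its iterates applied to $\nu_i$ generate a basis of odd principal parts (giving the converse). Your version merely makes the scalar constants and the independence of germs at the distinct points $p_i$ explicit, which the paper leaves implicit.
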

\begin{proof} Let $x=\log(X)$ and recall that $x(\sigma_i(z))=x(z)$ for $z$ in the vicinity of $p_i$. Let $\nu_i$ be a local coordinate near $p_i$ such that $x-x(p_i)=\nu_i^2$. Note that $\sigma_i(\nu_i)=-\nu_i$ and locally near $p_i$ we have $X\partial_X = \partial_x = (2\nu_i)^{-1}\partial_{\nu_i}$.
	
In terms of these local coordinates the condition \eqref{def:hatxi} is reformulated as follows: the space $\hat{\Xi}$ is spanned by locally defined  functions of the form
	\begin{equation}
		\label{eq:nucoord}
		\left(\frac1{\nu_i}\frac d{d\nu_i}\right)^j f(\nu_i),\qquad j\geq 0
	\end{equation}
	where $f(\nu_i)$ is regular at $\nu_i=0$, $i=1,\dots,N$.
	
	Note that the operator $\nu_i^{-1}\partial_{\nu_i}$ preserves the space of Laurent series in $\nu_i$ with odd principal parts, and applied to the function $\nu_i$ this operator generates a basis in space of odd principal parts. This implies the statement of the lemma.
\end{proof}

\begin{notation} Let $f=f(z_1,\dots,z_n)$ be a function on $\Sigma$. We say that $f\in \hat\Xi(z_1)$ if the restriction of $f$ considered as a function of $z_1$ to the vicinity of the points $p_1,\dots,p_N\in \Sigma$ belongs to $\hat\Xi$.
\end{notation}

\subsubsection{Linear loop equations} Now we can reformulate the linear loop equations and immediately prove them for the $n$-point functions whose $W_{g,n}$'s are given in Proposition~\ref{prop:FormulasWgn}.

\begin{proposition}\label{prop:lle} The $n$-point functions $H_{g,n}$ satisfy the linear loop equations if and only if  $W_{g,n+1}(z,z_{\llbracket n \rrbracket}) \in \hat{\Xi}(z)$ for any $g,n\geq 0$.
\end{proposition}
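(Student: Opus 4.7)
\medskip

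\noindent\textbf{Proof plan for Proposition~\ref{prop:lle}.}
The plan is to reduce the statement to an elementary parity observation using the characterisation of $\hat\Xi$ provided by Proposition~\ref{Prop:localcoord}. The work done in that proposition has already set up the right local coordinate, so there is no heavy computation left; I only need to match two descriptions of the principal part.

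First, I fix an index $i\in\{1,\dots,N\}$ and a local coordinate $\nu=\nu_i$ near $p_i$ satisfying $\nu^2=x-x(p_i)$. By construction the deck transformation acts as $\sigma_i(\nu)=-\nu$, so for any meromorphic $F(z,z_{\llbracket n\rrbracket})$ with Laurent expansion in $\nu$
\begin{equation*}
F(z,z_{\llbracket n\rrbracket})=\sum_{k\in\mathbb{Z}} a_k(z_{\llbracket n\rrbracket})\,\nu^{k}
\end{equation*}
one has
\begin{equation*}
F(z,z_{\llbracket n\rrbracket})+F(\sigma_i(z),z_{\llbracket n\rrbracket})=2\sum_{k\text{ even}} a_k(z_{\llbracket n\rrbracket})\,\nu^{k}.
\end{equation*}
Hence this symmetrised combination is holomorphic at $p_i$ if and only if $a_k(z_{\llbracket n\rrbracket})=0$ for every even $k<0$, i.e.\ if and only if the principal part of $F$ in $\nu$ contains only odd powers of $\nu$.

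Second, I apply this with $F=W_{g,n+1}$. By Proposition~\ref{Prop:localcoord}, the condition $W_{g,n+1}(z,z_{\llbracket n\rrbracket})\in\hat\Xi(z)$ is exactly that the principal part of $W_{g,n+1}$ in $\nu$ at each $p_i$ is odd. On the other hand, by Definition~\ref{def:LLEs}, the linear loop equation at $p_i$ is precisely the holomorphy of $W_{g,n+1}(z,z_{\llbracket n\rrbracket})+W_{g,n+1}(\sigma_i(z),z_{\llbracket n\rrbracket})$ at $z\to p_i$. By the parity observation above, these two conditions coincide. Running $i$ over $1,\dots,N$ proves the equivalence.

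I do not anticipate any serious obstacle: the content lies entirely in Proposition~\ref{Prop:localcoord}, which has already been proved, and the rest is a clean even/odd decomposition. The only minor thing to be careful about is that $W_{g,n+1}$ must actually admit a Laurent expansion in $\nu$ at $p_i$, which is guaranteed by meromorphy near $p_i$; in the formal setting of this section it suffices that $W_{g,n+1}$ is a formal Laurent series in $\nu$, and the parity argument then works verbatim.
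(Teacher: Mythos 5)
Your argument is correct and is essentially identical to the paper's own proof: both reduce the statement to the observation that the symmetrization $W_{g,n+1}(z,\cdot)+W_{g,n+1}(\sigma_i(z),\cdot)$ is holomorphic at $p_i$ exactly when the principal part of $W_{g,n+1}$ in the coordinate $\nu=\sqrt{x-x(p_i)}$ is odd, and then invoke Proposition~\ref{Prop:localcoord} to identify that parity condition with membership in $\hat\Xi(z)$. Your version merely spells out the even/odd decomposition more explicitly; no gap.
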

\begin{proof}
	Recall expression~\eqref{eq:oldlle}. It is holomorphic in $z$ near the ramification point $p_i$ if and only if its Laurent series expansion in $\nu = \sqrt{x(z)-x(p_i)}$ has odd principal part (recall that all $W_{g,n+1}$ are meromorphic functions in the vicinity of $p_1,\dots,p_N$). Therefore, by Proposition~\ref{Prop:localcoord}, the linear loop equations are satisfied if and only if $W_{g,n+1}(z,z_{\llbracket n \rrbracket}) \in \hat{\Xi}(z)$ for any $g,n\geq 0$.
\end{proof}

\begin{theorem}\label{th:linloop} Consider formal power series $\hat \psi(\hbar^2,y)$ and $\hat y(\hbar^2,z)$ defined by \eqref{def:psihat} and \eqref{def:yhat}. Under the natural analytic assumptions (Definition~\ref{def:naa}) and the additional condition that the zeros of $Q(z)$ are simple, the $n$-point functions of $Z_{\hat\psi,\hat y}$ satisfy the linear loop equations.

\end{theorem}
\begin{proof}
	Note that natural analytic assumptions mean that $W_{g,n}$ uniquely extend as functions on $\Sigma^n$.
	By Proposition~\ref{prop:lle} it is sufficient to show that $W_{g,n}\in \hat{\Xi}(z_1)$ for any $g\geq 0$, $n\geq 1$. To this end we use the formulas stated in Proposition~\ref{prop:FormulasWgn}. In particular, in the case $n\geq 2$ we use Equation~\eqref{eq:mainprop} (the case $n=1$ is completely analogous).
	
	Following the unfolding of Equation~\eqref{eq:mainprop} given in Equations~\eqref{eq:wkl} and~\eqref{eq:Uihbar} we see that $W_{g,n}$ is equal to a finite sum of the expressions $D_1^j Q_1^{-1} f_j$, where $f_j$ is regular at $z_1\to p_i$, $i=1,\dots, N$ (recall that $p_1,\dots,p_N$ are zeros of $Q(z)$). We can say that these $f_j$ are regular due to the natural analytic assumptions.
	
	Note that for any function $f_j(z_1)$ regular at $p_1$ we have $Q_1^{-1} f_j\in \hat\Xi(z_1)$, due to the fact that $Q_1^{-1}\in\hat\Xi(z_1)$ and moreover $Q_1^{-1}$ has a simple pole at $p_1$ (since we have assumed that the zeros of $Q(z)$ are simple). Note also that $D \Xi \subseteq \Xi$. Therefore, a finite sum of the expressions $D_1^j Q_1^{-1} f_j$ does belong to $\hat\Xi(z_1)$.
\end{proof}


\subsubsection{Quadratic loop equations} In this Section we reformulate the quadratic loop equations in terms of the space $\hat\Xi$, under the assumption that the linear loop equations hold. For convenience, let $p=p_i$, $i=1,\dots,N$, denote an arbitrary critical point, and let $\sigma=\sigma_i$ be the corresponding deck transformation.

Set
\begin{align}\label{eq:cW2}
	\cW_{g,n}^{(2)} &\coloneqq
	W_{g-1,n+1}(z_1,z_1,z_{\llbracket n \rrbracket \setminus 1})+
	\sum_{\substack{g_1+g_2=g\\I\sqcup J=\llbracket n \rrbracket \setminus 1}}
	W_{g_1,|I|+1}(z_1,z_{I})
	W_{g_2,|J|+1}(z_1,z_{J});
\\
\label{eq:tildecW2}
	\widetilde{\cW}_{g,n}^{(2)}(z,w) & \coloneqq
	W_{g-1,n+1}(z,w,z_{\llbracket n \rrbracket \setminus 1})+
	\sum_{\substack{g_1+g_2=g\\I\sqcup J={\llbracket n \rrbracket \setminus 1}}}
	W_{g_1,|I|+1}(z,z_{I})
	W_{g_2,|J|+1}(w,z_{J}).
\end{align}

\begin{remark}\label{rem:Special02}
Here and below we always assume that whenever we have to substitute the same variable twice into $W_{0,2}$ we use $D_1D_2H_{0,2}$ instead (cf. Equation~\eqref{eq:W02}).	
\end{remark}

Define the symmetrizing operator $S_z$ acting on functions in the vicinity of $p\in \Sigma$ as
\begin{equation}
	S_z(f(z))\coloneqq f(z)+f(\sigma(z)).
\end{equation}
Note that
\begin{equation}
	\label{eq:SSW}
	S_{z}S_{w} \widetilde{\cW}_{g,n}^{(2)}(z,w) \bigm|_{z=w=z_1} = S_{z_1}(\cW_{g,n}^{(2)}) + 2\widetilde{\cW}_{g,n}^{(2)}(z_1,\sigma(z_1)).
\end{equation}
Note that in the case $(g,n)=(0,2)$ we have to use the convention stated in Remark~\ref{rem:Special02} on both side of this equation. To efficiently use this equation, we will have to additionally assume that the functions $W_{g,n}(z_1,\dots,z_n)$ have no poles at the diagonals $z_i=z_j$ and at the antidiagonals $z_i=\sigma(z_j)$ in the vicinity of $p$, for all critical points $p$, with an obvious exception of the case $(g,n)=(0,2)$, which is covered by Remark~\ref{rem:Special02}.

\begin{proposition} \label{Prop:newqle} Assume $W_{g,n}(z_1,\dots,z_n)$ have no poles at the diagonals $z_i=z_j$ and at the antidiagonals $z_i=\sigma(z_j)$ in the vicinity of $p$ for all critical points $p$ (with an exception of $W_{0,2}$ that does have a double pole at the diagonal with bi-residue $1$). Assume also that the $n$-point functions $H_{g,n}$ satisfy the linear loop equations.
	
	Under these assumptions,
	the $n$-point functions $H_{g,n}$ satisfy the quadratic loop equations if and only if $\cW_{g,n}^{(2)}\in \hat{\Xi}(z_1)$ for any $g\geq 0$, $n\geq 1$.
\end{proposition}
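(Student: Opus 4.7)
The plan is to exploit the alternative characterization of $\hat\Xi$: a function $g$ defined near a simple ramification point $p$ lies in $\hat\Xi$ if and only if its Laurent expansion in the local coordinate $\nu$ with $\nu^2 = x - x(p)$ has an odd principal part, equivalently the symmetrization $S_z g = g(z) + g(\sigma(z))$ is holomorphic at $p$. In this language the original quadratic loop equation (Definition~\ref{def:QLEs}) reads that $\widetilde{\cW}_{g,n}^{(2)}(z_1, \sigma(z_1))$ is holomorphic at $p$, while the condition we want to prove equivalent is the holomorphy of $S_{z_1} \cW_{g,n}^{(2)}$ at $p$. The bridge between the two is Equation~\eqref{eq:SSW}, so the whole proof reduces to showing that its left-hand side is holomorphic at $z_1 = p$ under the standing assumptions; then \eqref{eq:SSW} expresses a holomorphic function as the sum of $S_{z_1} \cW_{g,n}^{(2)}$ and $2\widetilde{\cW}_{g,n}^{(2)}(z_1, \sigma(z_1))$, and either of these summands is holomorphic if and only if the other one is.

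To verify holomorphy of $S_z S_w \widetilde{\cW}_{g,n}^{(2)}(z, w)\big|_{z=w=z_1}$ I would split $\widetilde{\cW}_{g,n}^{(2)}$ into its two constituent pieces from \eqref{eq:tildecW2}. For the disconnected part $\sum W_{g_1, |I|+1}(z, z_I) W_{g_2, |J|+1}(w, z_J)$, the operator $S_z \circ S_w$ acts factorwise since the two $W$-factors depend on disjoint variables, and each resulting factor becomes holomorphic in its own variable at $p$ by Proposition~\ref{prop:lle}; the hypothesis of no (anti)diagonal poles, together with the convention of Remark~\ref{rem:Special02} for the $W_{0,2}$ terms, guarantees that the specialization to $z = w = z_1$ is finite and well defined. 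For the connected summand $W_{g-1, n+1}(z, w, z_{\llbracket n \rrbracket \setminus 1})$ I would pass to the bi-local coordinates $\nu_z, \mu_w$ at $p$ and expand as a double Laurent series; the linear loop equation in $z$ forces all coefficients of purely negative even powers of $\nu_z$ to vanish, and the symmetry of $W$ yields the analogous statement for $\mu_w$, so that after applying $S_z S_w$ only terms with non-negative exponents in both $\nu_z$ and $\mu_w$ survive.

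The main subtlety I anticipate lies precisely in this bi-holomorphy step for the connected summand: separate $S_z$-holomorphy at each fixed $w$ and $S_w$-holomorphy at each fixed $z$ do not automatically imply joint holomorphy at $(p, p)$, because mixed terms of the form $\nu_z^{-2a} \mu_w^{-2b}$ with $a, b \geq 1$ would be invariant under each symmetrization individually yet still divergent. The resolution is that the linear loop equations, applied separately in $z$ and in $w$, control the coefficients of the joint Laurent expansion independently: the coefficient $c_{ij}$ of $\nu_z^i \mu_w^j$ (depending on the remaining parameters) must vanish whenever $i$ is negative and even, and analogously when $j$ is negative and even, ruling out precisely the problematic mixed terms. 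Once this is in place, restriction to the diagonal $z = w = z_1$ yields a function holomorphic at $z_1 = p$, and Equation~\eqref{eq:SSW} closes the argument.
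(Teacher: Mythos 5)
Your proposal is correct and follows essentially the same route as the paper: both hinge on Equation~\eqref{eq:SSW}, establish holomorphy of its left-hand side at $z_1\to p$, and then transfer holomorphy between the two terms on the right, finishing with the odd-principal-part characterization of $\hat\Xi$ from Proposition~\ref{Prop:localcoord}. The only difference is that you spell out the joint-holomorphy argument for $S_zS_w W_{g-1,n+1}(z,w,\cdot)$ near $(p,p)$ (ruling out mixed terms $\nu_z^{-2a}\mu_w^{-2b}$ via the linear loop equation applied in each variable separately), a step the paper's proof asserts in one line without detail.
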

\begin{proof}
	The assumptions imply that $S_{z}S_{w} \widetilde{\cW}_{g,n}^{(2)}(z,w)$ is holomorphic in $z$ and in $w$ near $p$, and this expression remain holomorphic in $z_1$ after the restriction $z=w=z_1$.
	
	The quadratic loop equations~ \eqref{eq:oldqle} are equivalent to the holomorphicity at $z_1\to p$ of the second term on the right hand side of Equation~\eqref{eq:SSW}, for all $g\geq 0$, $n\geq 1$. Therefore, since the left hand side of Equation~\eqref{eq:SSW} is holomorphic at $z_1\to p$, the quadratic loop equations are equivalent to the holomorphicity of $S_{z_1}(\cW_{g,n}^{(2)})$ at $z_1\to p$, for all $g\geq 0$, $n\geq 1$. Repeating \emph{mutatis mutandis} the arguments in the proof of Proposition \ref{prop:lle} we see that the latter property is equivalent to $\cW_{g,n}^{(2)}\in \hat{\Xi}(z_1)$ for any $g\geq 0$, $n\geq 1$.
\end{proof}

\begin{lemma} \label{lem:ReductionQLEassumptions} 	
	Under the natural analytic assumptions of Definition~\ref{def:naa} together with the additional condition that the zeros of $Q(z)$ are simple, the $n$-point functions of $Z_{\hat\psi,\hat y}$ satisfy the assumptions of Proposition~\ref{Prop:newqle}.
\end{lemma}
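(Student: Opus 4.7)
The plan is to verify the two hypotheses of Proposition~\ref{Prop:newqle} separately. The linear loop equations are already established by Theorem~\ref{th:linloop}, so only the regularity of $W_{g,n}$ along the diagonals $z_i=z_j$ and the antidiagonals $z_i=\sigma(z_j)$ (with the exception of $W_{0,2}$) remains to be checked. Under the natural analytic assumptions, the proof of Theorem~\ref{th:linloop} already shows that each $W_{g,n}$, and hence each $H_{g,n}$, extends to a rational function on $\Sigma^n=(\C P^1)^n$; this allows us to meaningfully speak of its pole divisor.

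To simultaneously obtain the two regularity statements, I plan to pass to the variables $X_i=X(z_i)$ and observe that $H_{g,n}$ is regular along every diagonal $X_i=X_j$. Indeed, by definition $H_{g,n}$ is a formal power series $\sum_{k_1,\dots,k_n\ge 1} h_{g;k_1,\dots,k_n}\prod_i X_i^{k_i}$ with strictly positive exponents, hence analytic at the origin of $X$-space. Its rational extension on $(\C P^1)^n$ therefore cannot carry a pole along a codimension-one subvariety passing through the origin, and in particular not along the $X$-diagonals. Since $D_i=X_i\partial_{X_i}$ does not produce new singularities along $X_i=X_j$, the same regularity passes to $W_{g,n}=D_1\cdots D_n H_{g,n}$ for all $(g,n)\neq(0,2)$; the $(0,2)$ case is covered by the explicit subtraction of the propagator $X_1X_2/(X_1-X_2)^2$ built into the definition of $W_{0,2}$.

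To transfer this $X$-regularity to the required $z$-regularity, I would use the local geometry near each critical point $p$ of $dx$: by the assumed simplicity of this zero, the map $z\mapsto X(z)$ is two-to-one in a neighborhood of $p$, with the two sheets interchanged by $\sigma$. Consequently the analytic set $\{X(z_i)=X(z_j)\}$, restricted to a neighborhood of $(p,p)$, splits into two smooth components, namely the diagonal $z_i=z_j$ and the antidiagonal $z_i=\sigma(z_j)$. Regularity of $W_{g,n}$ along the $X$-diagonal therefore automatically produces regularity along both components in $z$-coordinates, yielding precisely the assumption required by Proposition~\ref{Prop:newqle}.

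The main subtlety I anticipate is confirming that the rational extension delivered by Proposition~\ref{prop:FormulasWgn} really does coincide with the analytic continuation of the original formal series, so that the two share a common pole divisor; without this, the regularity of the formal series at the origin does not automatically translate into regularity of the explicit formula along the $X$-diagonal. Equivalently, one must check that the apparent singularities at $z_k=z_\ell$ displayed graph-by-graph by the factors $w_{k,\ell}$ of Proposition~\ref{prop:FormulasWgn} do cancel after summation over connected graphs and application of the operators $U_i$. This cancellation is forced by the formal-series identity once it is in place, but it is the only nontrivial content of the argument.
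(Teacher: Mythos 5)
Your reduction to checking the diagonal/antidiagonal regularity is the right first move (the linear loop equations are indeed already supplied by Theorem~\ref{th:linloop}, and rationality by Proposition~\ref{prop:FormulasWgn}), and your treatment of the \emph{diagonals} is essentially sound and close in spirit to what the paper does: since the formal series $H_{g,n}$ has only non-negative powers of the $X_i$, once it is identified with the expansion of a rational function on $(\C P^1)^n$ that function is regular at the origin, and the global diagonal $\{z_i=z_j\}$, being an irreducible hypersurface through the origin, cannot be a component of the polar divisor. You correctly identify that the real content here is the cancellation of the apparent $w_{k,\ell}$-poles in the closed formula; the paper does not reprove this but cites the analysis of \cite[Remark 1.3 and Section 4]{BDKS20}, so deferring it is acceptable.

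The genuine gap is in the \emph{antidiagonal} step. The locus $\{X(z_i)=X(z_j)\}$ is not a global algebraic hypersurface of $(\C P^1)^n$ on which regularity at one point propagates everywhere: it is only a locally defined analytic set, and near $(p,p)$ its two local components --- the diagonal and the antidiagonal $\{z_i=\sigma(z_j)\}$ --- are \emph{different} irreducible germs. Only the diagonal is part of a global irreducible hypersurface passing through the origin; the antidiagonal germ is a transcendental analytic curve that in general does not meet the origin at all (already for $X(z)=ze^{-z}$ the set $X(z_1)=X(z_2)$, $z_1\neq z_2$ stays away from $(0,0)$ because $X$ is injective near $0$). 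Nor can you argue through $X$-space: near $(p,p)$ the map $(z_i,z_j)\mapsto(X_i,X_j)$ is $4$-to-$1$ and $W_{g,n}$ does not descend to a function of $(X_i,X_j)$, so ``regularity along the $X$-diagonal'' does not transfer from one preimage component to the other. Hence your argument establishes nothing about poles on $z_i=\sigma(z_j)$. The paper's (short) fix is direct inspection of the closed formula of Proposition~\ref{prop:FormulasWgn}: the only factors with poles are the $w_{k,\ell}$ (poles on diagonals only), the $Q_i^{-1}$ (poles/zeros at the $p_a$ and at the special points of $\psi,y$), and the coefficient functions of $\hat\psi,\hat y$, whose singularities are assumed, by the natural analytic assumptions, to avoid the zeros of $dx$; none of these loci is an antidiagonal, so the antidiagonal regularity is manifest. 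You need to replace your antidiagonal step by such a direct inspection.
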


\begin{proof} The only thing that we have to check is that $W_{g,n}$'s have no poles on the diagonals or anti-diagonals in the vicinity of zeros of $Q$, $(g,n)\not=(0,2)$. For the anti-diagonals this follows manifestly from the formulas in Proposition~\ref{prop:FormulasWgn}. For the diagonals it follows from the analysis in~\cite{BDKS20}, see~\cite[Remark 1.3 and Section 4]{BDKS20}.
\end{proof}

\begin{theorem} \label{thm:QLEs} Under the natural analytic assumptions of Definition~\ref{def:naa} together with the additional condition that the zeros of $Q(z)$ are simple, the $n$-point functions of $Z_{\hat\psi,\hat y}$ satisfy the quadratic loop equations.
\end{theorem}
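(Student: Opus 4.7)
The plan is to deduce the quadratic loop equations from a closed formula for $\cW_{g,n}^{(2)}$ that mirrors the role of Proposition~\ref{prop:FormulasWgn} in the linear case. By Lemma~\ref{lem:ReductionQLEassumptions} the hypotheses of Proposition~\ref{Prop:newqle} hold under the natural analytic assumptions, so it suffices to establish $\cW_{g,n}^{(2)}\in\hat\Xi(z_1)$ for every $g\ge 0$, $n\ge 1$. In view of the characterization of $\hat\Xi$ coming from Proposition~\ref{Prop:localcoord}, and since $\hat\Xi$ is preserved by $D_1$ and contains $Q_1^{-1}\varphi$ for every $\varphi$ regular at $p_1,\dots,p_N$, it is enough to exhibit $\cW_{g,n}^{(2)}$ as a finite sum of expressions of the form $D_1^j\bigl(Q_1^{-1}\varphi_j\bigr)$ with $\varphi_j$ regular at each $p_i$.

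The central technical step, which I would isolate as a standalone Proposition~\ref{LEtheor}, is a purely formal (no analytic input) closed formula for $\cW_{g,n}^{(2)}$ displaying the same single-$Q_1^{-1}$ structure as \eqref{eq:mainprop}, rather than the naive double-$Q_1^{-2}$ structure one would read off from the brute-force substitution of two copies of \eqref{eq:mainprop} into \eqref{eq:cW2} and the $W_{g-1,n+1}$ addend. My approach is to rewrite $\cW_{g,n}^{(2)}$ as a single coefficient $[\hbar^{2g-2+n}]$ extracted from one generating expression in which the contributions attached to the repeated variable $z_1$ are amalgamated into a single $U_1$-type operator acting on a suitably modified vertex weight: the sum over connected graphs is reorganized around a distinguished vertex carrying the combined pair of $u$-parameters, and a careful cancellation of the $\cS$-factors associated with the edges incident to this vertex reduces the two naively expected factors of $Q_1$ in the denominator to just one. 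This absorption is the combinatorial miracle that ultimately enforces the quadratic loop equations.

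Once such a formula is in hand, the rest of the proof is exactly as in Theorem~\ref{th:linloop}: under the natural analytic assumptions the coefficients appearing in it are rational in $z_1,\dots,z_n$, their only singularities in $z_1$ at the zeros of $Q$ come through the single factor $Q_1^{-1}$ dictated by the formula, so each summand lies in $\hat\Xi(z_1)$, and Proposition~\ref{Prop:newqle} converts this membership into the quadratic loop equations.

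The main obstacle is the derivation of the single-$Q_1^{-1}$ closed formula for $\cW_{g,n}^{(2)}$. Although, like its linear counterpart, it is a purely formal statement requiring no analytic hypothesis on $\hat\psi$ and $\hat y$, the combinatorial bookkeeping is substantially more delicate: the doubling of the variable $z_1$ forces a reorganization of the sum over connected graphs around a distinguished vertex whose valence counts the edges from both copies, and one has to track carefully the interplay between the $\cS$-factors on the doubled edges and the $\cS$-operators living inside the merged $U_1$-operator. By contrast, the passage from the formula to membership in $\hat\Xi(z_1)$ and then to the quadratic loop equations is a routine adaptation of the argument already developed for the linear case.
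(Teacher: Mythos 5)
Your reduction is exactly the paper's: Lemma~\ref{lem:ReductionQLEassumptions} plus Proposition~\ref{Prop:newqle} reduce the quadratic loop equations to the membership $\cW_{g,n}^{(2)}\in\hat\Xi(z_1)$, and you correctly identify that the whole proof hinges on a closed formula for $\cW_{g,n}^{(2)}$ exhibiting a single overall factor $Q_1^{-1}$ (so that the argument of Theorem~\ref{th:linloop} applies verbatim). This is precisely the role played by Proposition~\ref{LEtheor} in the paper.

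The gap is that you do not actually prove that formula; you only describe what it should look like and gesture at a mechanism --- ``reorganize the graph sum around a distinguished vertex and cancel the $\cS$-factors on its incident edges.'' This is not how the cancellation arises, and it is far from clear that a direct manipulation of two copies of \eqref{eq:mainprop} substituted into \eqref{eq:cW2} can be pushed through: the product terms $W_{g_1,|I|+1}(z_1,z_I)\,W_{g_2,|J|+1}(z_1,z_J)$ are each a full $U_1$-image of a graph sum, the sum over splittings $(g_1,I),(g_2,J)$ corresponds to summing over ways of disconnecting a graph at the doubled vertex, and the diagonal restriction in $W_{g-1,n+1}(z_1,z_1,\dots)$ produces the regularized $W_{0,2}$ terms that must conspire with the rest --- none of which is addressed. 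The paper's derivation is of a different nature: it returns to the vacuum-expectation-value representation \eqref{eq:correlator}, inserts the operator $\cE_0(\hbar u)$ next to the current attached to $z_1$, and treats $\cE_0(\hbar u)$ as the first-order term in $\epsilon$ of $e^{\epsilon\cE_0(\hbar u)}$ over $\mathbb{C}[\epsilon]/(\epsilon^2)$. This insertion is equivalent to deforming $\hat\psi$ to $\hat\psi+\epsilon\,\zeta(\hbar\partial_y)e^{uy}$, so the entire computation of \cite{BDKS20} applies unchanged and yields \eqref{eq:mainprop2} with the single $Q_1^{-1}$ built into $\widetilde U_1$; separately, the commutation relation \eqref{eq:JE0omm} identifies this VEV with the generating series $\cT_n(z_1;\cdot;u)$ whose $u^2$-coefficient is shown (Lemma~\ref{Prop:qle}) to coincide with \eqref{eq:cW2}. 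Note also that the paper proves the statement for all higher loop equations $\cW^{(r)}_{g,n}$ at once via the parameter $u$, whereas your sketch targets only $r=2$. Without a genuine derivation of the closed formula, the proof is incomplete at its central step.
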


\begin{proof} 
	It is a corollary of Lemma~\ref{lem:ReductionQLEassumptions}, Proposition~\ref{Prop:newqle} together with Lemma~\ref{Prop:qle} and Theorem~\ref{theor:higher} which we formulate and prove below (the latter Lemma and Theorem are related to the so-called \emph{higher loop equations} which we discuss below; we get the statement on the quadratic loop equations as a corollary of a more general statement regarding these higher loop equations).
\end{proof}

\subsubsection{Blobbed topological recursion} An immediate reformulation of Theorems~\ref{th:linloop} and~\ref{thm:QLEs} is the following statement:

\begin{theorem} \label{thm:Blobbed}
	Consider the $n$-point functions $H_{g,n}$ of $Z_{\hat\psi,\hat y}$. Under the natural analytic assumptions the corresponding symmetric differentials
	\begin{equation}
		\omega_{g,n} = d_1\cdots d_n H_{g,n}++\delta_{g,0}\delta_{n,2}\frac{dX_1dX_2}{(X_1-X_2)^2}
	\end{equation} satisfy the blobbed topological recursion.
\end{theorem}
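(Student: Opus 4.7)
My plan is to simply assemble the three ingredients that characterize blobbed topological recursion in the sense of~\cite{BS17}: (i) global meromorphy of each $\omega_{g,n}$ on $\Sigma^n = (\mathbb{C}\mathrm{P}^1)^n$; (ii) the linear loop equations at every simple zero $p_i$ of $dx$; and (iii) the quadratic loop equations at every such $p_i$. Ingredients (ii) and (iii) have already been proved in the text as Theorem~\ref{th:linloop} and Theorem~\ref{thm:QLEs}, respectively, albeit in the $W_{g,n}$-formulation of Definitions~\ref{def:LLEs} and~\ref{def:QLEs}. So my work reduces to verifying meromorphy and translating the two loop-equation statements from the scalar functions $W_{g,n}$ to the symmetric differentials $\omega_{g,n}$.

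For meromorphy, I would use Proposition~\ref{prop:FormulasWgn}: under the natural analytic assumptions, $Q(z)$ is rational, $y(z)$ and $\psi'(y(z))$ are rational, and all positive-$\hbar$ coefficients in $\hat\psi(\hbar^2, y(z))$ and $\hat y(\hbar^2, z)$ are rational in $z$. Thus each ingredient appearing in the formulas~\eqref{eq:mainprop}, \eqref{eq:wkl}, \eqref{eq:Uihbar}, \eqref{eq:Lr}, \eqref{eq:Wg1} is rational in the $z_i$, and the operators $D_i = Q_i^{-1} z_i \partial_{z_i}$ preserve rationality. Only finitely many terms contribute to a given $W_{g,n}$ (because the $\hbar$-extraction is a finite order in the expansion), so each $W_{g,n}$ is a global rational function on $(\mathbb{C}\mathrm{P}^1)^n$. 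Multiplying by $\prod_i dX_i/X_i$ and adding the $\delta_{g,0}\delta_{n,2}$ correction preserves rationality, so $\omega_{g,n}$ extends to a global meromorphic $n$-differential on $\Sigma^n$.

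The translation from $W$-loop equations to $\omega$-loop equations is cosmetic: the form $dx = dX/X$ is $\sigma_i$-invariant near $p_i$, so multiplying the scalar expressions~\eqref{eq:oldlle} and~\eqref{eq:oldqle} by $dx(w)$ (respectively $dx(w)\,dx(\sigma_i(w))$) and by $\prod_j dx(z_j)$ converts Definitions~\ref{def:LLEs}–\ref{def:QLEs} into the $\omega_{g,n}$ loop equations required by blobbed TR. Holomorphy of the scalar combination at $w\to p_i$ then upgrades, via the factor $dx(w) = O(\nu)\,d\nu$ (where $\nu$ is the standard local coordinate with $\nu^2 = x - x(p_i)$, using that $p_i$ is a simple zero of $dx$), to the required simple zero for the linear form and double zero for the quadratic form at $w = p_i$. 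The treatment of the exceptional $(g,n)=(0,2)$ contributions is handled by the convention in Remark~\ref{rem:Special02}, which is exactly what converts $D_1 D_2 H_{0,2}$ into the $B$ appearing in the blobbed-TR quadratic loop equation.

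I do not expect any genuine obstacle here: all the substantive work has been done upstream, and this theorem really is a packaging statement. The only mildly delicate point is bookkeeping the $(0,2)$ exception consistently throughout the translation, and verifying that the zero of $dx$ at $p_i$ provides exactly the extra orders of vanishing needed to pass from ``holomorphic'' at the scalar level to ``simple/double zero'' at the differential level. Once these are checked, the three ingredients combine to yield blobbed topological recursion.
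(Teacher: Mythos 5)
Your proposal is correct and follows essentially the same route as the paper: the paper's own proof of Theorem~\ref{thm:Blobbed} is a one-line statement that it is an immediate reformulation of Theorems~\ref{th:linloop} and~\ref{thm:QLEs}, and the details you supply (global rationality read off from Proposition~\ref{prop:FormulasWgn}, and the passage from the scalar $W_{g,n}$ loop equations to the differential form via the simple zero of $dx$ at each $p_i$) are exactly the routine verifications the paper leaves implicit.
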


\subsection{Higher loop equations}
\label{Sec:higherloop}

The goal of this section is to define certain combinations of functions $W_{g,n}$ that we denote by $\cW^{(r)}_{g,n}$, both abstractly and in terms of the vacuum expectation values, and use these new functions to state a version of the higher loop equations.

\subsubsection{First definition and the main statement}
 	Let
 \begin{equation}\label{eq:Wndef}
 	 W_n\coloneqq \sum_{g=0}^\infty \hbar^{2g-2+n} W_{g,n}.
 \end{equation}
 \begin{definition}	\label{def:defT}
 Define
 \begin{align}	\label{def:T-1}
 	T_{n}(z_1;z_{\llbracket n \rrbracket \setminus 1};u)
 	\coloneqq \sum _{k=1}^{\infty } \frac{1}{k!}
 	\left(\prod _{i=1}^k \bigm\rfloor_{z_{\overline{i}}=z_1}u\,\hbar\, \cS(u \hbar  D_{\overline{i}})\right)
 	W_{k+n-1}(z_{\{\overline 1, \overline 2,\dots,\overline k\}},z_{\llbracket n \rrbracket \setminus 1}),
 \end{align}
 and
 \begin{align}	\label{def:cT-1}
 	\cT_n(z_1;z_{\llbracket n \rrbracket \setminus 1};u):=
 	\frac{\cS(u \hbar D_1)}{\hbar \cS (u \hbar )}
 	\sum_{l=1}^\infty\frac1{l!}
 	\sum_{\bigsqcup_{i=1}^l J_{i}=\llbracket n \rrbracket \setminus 1}
 	\prod _{i=1}^l T_{|J_{i}| +1}(z_1;z_{J_{i}}),
 \end{align}
The inner sum in \eqref{def:cT-1} goes over the set of all possible ordered partitions of the set $\llbracket n \rrbracket \setminus 1=\{2,\ldots ,n\}$ as a disjoint union of $k$ subsets $J_i$ that are allowed to be empty.
These functions involve an additional parameter $u$. Recall that we always apply the convention to remove the singularity in $W_{0,2}$ factors once we have to restrict them to the diagonal, as discussed in Remark~\ref{rem:Special02}.
\end{definition}

 \begin{definition}\label{def:cW(r)FirstDefinition} By $\cW^{(r)}_{g,n}$ we denote the coefficient of $[\hbar^{2g-2+n}u^r]$ in $r! \cT_n(z_1;z_{\llbracket n \rrbracket \setminus 1};u)$.
 \end{definition}

Let us go through a first few examples. Obviously, $\cW_{g,n}^{(0)}=0$ and $\cW_{g,n}^{(1)}=W_{g,n}$. For $r=2$ by a straightforward expansion of the definition we have:

 \begin{lemma}
 	\label{Prop:qle}
 	Definition~\ref{def:cW(r)FirstDefinition} for $\cW_{g,n}^{(2)}$ coincides with the one given in~\eqref{eq:cW2}:
 	\begin{equation}
 		\cW_{g,n}^{(2)}=
 		W_{g-1,n+1}(z_1,z_1,z_{\llbracket n \rrbracket \setminus 1})+
 		\sum_{\substack{g_1+g_2=g\\I\sqcup J=\llbracket n \rrbracket \setminus 1}}
 		W_{g_1,|I|+1}(z_1,z_{I})
 		W_{g_2,|J|+1}(z_1,z_{J}).
 	\end{equation}
 \end{lemma}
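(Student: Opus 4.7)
The plan is to Taylor-expand $2\,\cT_n(z_1;z_{\llbracket n\rrbracket\setminus 1};u)$ in $u$ up to order $u^2$ and then extract the coefficient of $\hbar^{2g-2+n}$, matching the result against \eqref{eq:cW2}. Two structural observations reduce this to pure coefficient bookkeeping. First, since $\cS(z)=1+z^2/24+O(z^4)$, the prefactor of $\cT_n$ satisfies
\[
\frac{\cS(u\hbar D_1)}{\hbar\,\cS(u\hbar)}=\frac{1}{\hbar}+\frac{u^2\hbar}{24}(D_1^2-1)+O(u^4).
\]
Second, in the definition \eqref{def:T-1} of $T_m$ each factor $u\hbar\,\cS(u\hbar D_{\bar i})$ carries a factor of $u$, so $T_m(z_1;z_J;u)=O(u)$ and therefore $\prod_{i=1}^l T_{|J_i|+1}=O(u^l)$. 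Consequently only $l=1$ and $l=2$ in \eqref{def:cT-1} can contribute to $[u^2]\cT_n$; and since the inner sum is already $O(u)$, only the leading $\hbar^{-1}$ of the prefactor contributes at this order (the $u^2$ correction would multiply an $O(u^0)$ piece, which vanishes).

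Next I would expand $T_m$ itself to order $u^2$: the $k=1$ summand gives $u\hbar\,W_m(z_1,z_J)+O(u^3)$; the $k=2$ summand gives $\tfrac12 u^2\hbar^2\,W_{m+1}(z_1,z_1,z_J)+O(u^4)$ (both $\cS$-factors contributing $1$ at this order); and terms with $k\geq 3$ start at $u^3$. Substituting into $\cT_n$ and multiplying by $2=r!$, the $l=1$ branch contributes $\hbar\,W_{n+1}(z_1,z_1,z_{\llbracket n\rrbracket\setminus 1})$ to $[u^2]\bigl(2\cT_n\bigr)$, while the $l=2$ branch contributes
\[
\hbar\sum_{I\sqcup J=\llbracket n\rrbracket\setminus 1}W_{|I|+1}(z_1,z_I)\,W_{|J|+1}(z_1,z_J),
\]
the $1/2!$ from $l=2$ being exactly cancelled by the overall factor $2$, while the ordered sum over $(I,J)$ remains since the subsets $J_i$ in \eqref{def:cT-1} are labelled.

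Finally I would extract $[\hbar^{2g-2+n}]$ using $W_m=\sum_{g'}\hbar^{2g'-2+m}W_{g',m}$ from \eqref{eq:Wndef}: in the first summand, matching $\hbar$-degrees forces $g'=g-1$ and produces $W_{g-1,n+1}(z_1,z_1,z_{\llbracket n\rrbracket\setminus 1})$; in the second, with $|I|+|J|=n-1$, matching $\hbar$-degrees forces $g_1+g_2=g$ and produces the double sum in \eqref{eq:cW2}. The only point requiring care is the diagonal evaluation of $W_{0,2}$ occurring both in the coincident arguments of $W_{n+1}$ and in pairs $W_{|I|+1}(z_1,z_I)\,W_{|J|+1}(z_1,z_J)$ whenever $|I|=|J|=0$; this is precisely the singularity handled by Remark~\ref{rem:Special02}, under whose convention the entire expansion above is well-defined term by term. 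No real obstacle arises—the argument is a clean coefficient extraction, the main ``hard part'' being merely the discipline of tracking which powers of $u$ and $\hbar$ arise from which $k$ and $l$.
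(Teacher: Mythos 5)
Your proposal is correct and is exactly the ``straightforward expansion of the definition'' that the paper invokes without writing out: the paper gives no further argument for this lemma, and your coefficient extraction in $u$ and $\hbar$ (including the cancellation of the $1/k!$ and $1/l!$ factors against $r!=2$, the irrelevance of the $u^2$-correction of the prefactor, and the appeal to Remark~\ref{rem:Special02} for the diagonal $W_{0,2}$) supplies precisely the missing bookkeeping.
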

One more example is for $r=3$:
 	\begin{align}\label{eq:cW3}
 		\cW_{g,n}^{(3)}=& W_{g-2,n+2}(z_1,z_1,z_1,z_{\llbracket n \rrbracket \setminus 1})\\ \notag
 		& +3\sum_{\substack{g_1+g_2=g-1\\J_1\sqcup J_2=\llbracket n \rrbracket \setminus 1}}
 		W_{g_1,| J_1| +1}(z_1,z_{J_1}) W_{g_2,| J_2| +2}(z_1,z_1,z_{J_2})
 		\\ \notag &
 		+\sum _{\substack{g_1+g_2+g_3=g\\J_1\sqcup J_2\sqcup J_3=\llbracket n \rrbracket \setminus 1}}
 		W_{g_1,| J_1| +1}(z_1,z_{J_1})  W_{g_2,| J_2| +1} (z_1,z_{J_3})
 		W_{g_3,| J_3| +1}(z_1,z_{J_3})\\ \notag &
 		+\frac14\Big(2D_1^2-1\Big)W_{g-1,n} (z_1,z_{\llbracket n \rrbracket \setminus 1}).
 	\end{align}

\begin{definition}
	We say that the symmetric differentials $\omega_{g,n} = W_{g,n}\prod_{i=1}^n dX_i/X_i$ or, abusing the terminology, that the $n$-point functions $H_{g,n}$, related to $W_{g,n}$'s by $W_{g,n}=D_1\cdots D_n H_{g,n}+\delta_{g,0}\delta_{n,2}X_1X_2/(X_1-X_2)^2$, satisfy the higher loop equations if $\cW^{(r)}_{g,n}\in \hat{\Xi}$ for any $g\geq 0$, $n,r\geq 1$.
\end{definition}

In particular, for $r=1$ we obtain the definition of the linear loop equations, see Definitions~\ref{def:LLEs}. For $r=2$ we obtain the property that is according to Proposition~\ref{Prop:newqle} is equivalent to the definition of the quadratic loops equations given in Definition~\ref{def:QLEs}.

 \begin{theorem}  	\label{theor:higher}
 	Under the natural analytic assumptions on $\hat\psi,\,\hat y$, together with the additional condition that the zeros of $Q(z)$ are simple,  the $n$-point functions of $Z_{\hat\psi,\hat y}$ satisfy the higher loop equations, that is, $\cW^{(r)}_{g,n}\in \hat{\Xi}$ for all $g\geq 0$, $n\geq 1$.
 \end{theorem}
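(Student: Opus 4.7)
The plan is to mirror, at all orders $r$, the argument proving Theorem~\ref{th:linloop}: first extract a closed formal-power-series formula for the generating function $\cT_n(z_1;z_{\llbracket n \rrbracket \setminus 1};u)$ structurally parallel to formula~\eqref{eq:mainprop} for $W_{g,n}$, and then read off the analytic conclusion from the very shape of that formula. This closed formula is the content of Proposition~\ref{LEtheor} advertised in the introduction; once it is in hand, the theorem reduces to a pole count at $p_1,\dots,p_N$ identical in spirit to the one concluding the proof of Theorem~\ref{th:linloop}.

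To derive such a formula I would proceed in two stages. First, I would rewrite the nested double sum in~\eqref{def:cT-1} using the standard exponential-formula dictionary between disconnected and connected correlators. Each factor $T_{|J_i|+1}$ is, by~\eqref{def:T-1}, a $u$-weighted sum of diagonal coincidences of the $W_k$'s at $z_1$; taking the sum over ordered partitions $\bigsqcup J_i=\llbracket n \rrbracket \setminus 1$ divided by $l!$ is exactly the combinatorial pattern that turns disconnected Orlov--Scherbin vacuum expectation values into their connected graph-sum form $\sum_{\gamma\in\Gamma_n}\prod w_{k,\ell}$ of Proposition~\ref{prop:FormulasWgn}. Second, I would track the combined effect, at the marked vertex $z_1$, of the outer prefactor $\cS(u\hbar D_1)/(\hbar\cS(u\hbar))$ and of the repeated diagonal insertions $u\hbar\,\cS(u\hbar D_{\overline{i}})$. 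In the bosonic formalism of~\cite{BDKS20} these insertions commute through the bosonic exchanges that generate the weights $w_{k,\ell}$, and their net effect is to replace the operator $U_1$ of~\eqref{eq:Uihbar} by a $u$-dependent operator $U_1^{(u)}$ of the same outer shape
\begin{equation*}
U_1^{(u)} f=\sum_{j,r\ge 0}D_1^j\!\left(\frac{L^j_{r,1}}{Q_1}\,R_{j,r}(u,z_1;f)\right),
\end{equation*}
where $R_{j,r}$ is built only from $y(z_1)$, $\hat y$, $\psi$ and $\cS$-functions of $u\hbar D_1$ and is therefore regular in $z_1$ at every $p_i$. The resulting identity
\begin{equation*}
\cT_n(z_1;z_{\llbracket n \rrbracket \setminus 1};u)=U_1^{(u)}U_2\cdots U_n\sum_{\gamma\in\Gamma_n}\prod_{\{v_k,v_\ell\}\in E_\gamma}w_{k,\ell}
\end{equation*}
is then a purely formal-power-series statement requiring no analytic assumption on $\hat\psi$ or $\hat y$.

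Granting this identity, the conclusion of Theorem~\ref{theor:higher} is immediate. Under the natural analytic assumptions the building blocks $L^j_{r,1}$, $w_{k,\ell}$ and the deformation series $\hat\psi(\hbar^2,y(z_1))$, $\hat y(\hbar^2,z_1)$ are rational in $z_1$ with singularities disjoint from $p_1,\dots,p_N$. Hence the coefficient $\cW^{(r)}_{g,n}=[\hbar^{2g-2+n}u^r]\,r!\,\cT_n$ is a finite sum of terms of the form $D_1^j(f_j/Q_1)$ with $f_j$ regular at each $p_i$. Since $f_j/Q_1\in\hat\Xi(z_1)$ and $D_1$ preserves $\hat\Xi$, this yields $\cW^{(r)}_{g,n}\in\hat\Xi(z_1)$, which is precisely the higher loop equation.

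The genuinely technical point, and the main obstacle, is the operator-level verification underlying the first stage. One must confirm that the $\cS(u\hbar D_{\overline{i}})$ insertions interact with the bosonic construction in a sufficiently controlled way, so that after the exponential-formula reorganisation they combine with the outer prefactor $\cS(u\hbar D_1)/(\hbar\cS(u\hbar))$ into coefficients fitting the $L^j_{r,1}/Q_1$ template without producing spurious singularities at the zeroes of $Q$. This is delicate bookkeeping in the spirit of the derivation of~\eqref{eq:mainprop} in~\cite{BDKS20}; once it is carried out, the passage from Proposition~\ref{LEtheor} to the higher loop equations is essentially a one-line pole count.
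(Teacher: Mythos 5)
Your proposal follows essentially the same route as the paper: the closed formula you postulate for $\cT_n$ is exactly Proposition~\ref{LEtheor} (your $U_1^{(u)}$ is the paper's $\widetilde U_1$, with the same outer template $\sum_j D_1^j(\widetilde L^j_{r,1}Q_1^{-1}\cdots)$), the disconnected-to-connected reorganisation is carried out via the inclusion--exclusion identities for the operators $\mathcal{A}_i$ and $\mathcal{B}^{(k,n)}_m$, and the final pole count is verbatim the argument of Theorem~\ref{th:linloop}. The one step you flag as the main obstacle --- absorbing the $\cS(u\hbar D)$ insertions into a modified operator at the marked vertex without spurious singularities --- is resolved in the paper not by direct bookkeeping but by viewing $\cE_0(\hbar u)$ as the first-order term of $e^{\epsilon\cE_0(\hbar u)}$ over $\mathbb{C}[\epsilon]/(\epsilon^2)$, i.e.\ as the deformation $\hat\psi\mapsto\hat\psi+\epsilon\,\zeta(\hbar\partial_y)e^{uy}$ of the data, after which the formula of~\cite{BDKS20} applies verbatim and produces $\widetilde U_1$ automatically.
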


This theorem immediately implies Theorem~\ref{thm:QLEs} and, as a consequence, Theorem~\ref{thm:Blobbed}. The proof of Theorem~\ref{theor:higher} is given in Section~\ref{Sec54}.

\subsection{An alternative formula for \texorpdfstring{$\cW^{(r)}_{g,n}$}{Wrgn}}

In order to prove Theorem~\ref{theor:higher} we represent the quantities $\cW^{(r)}_{g,n}$ as vacuum expectation values in the semi-infinite wedge formalism. This and the subsequent sections are heavily based on~\cite{BDKS20}.

\subsubsection{Summary for the operators acting in the Fock space} \label{sec:ReminderOnFock}
Recall some notations related to the action of the Lie algebra $A_\infty$ on the bosonic Fock space (see \cite{BDKS20} for the details) which will be needed below. Denote
\begin{align}
	\cE_0(u)&=\sum_{k\in\Z+\frac12}e^{u\,k}\hat E_{k,k},\\
	\cE_m(u)&=\sum_{k\in\Z+\frac12}e^{u\,(k-\frac{m}{2})}\hat E_{k-m,k},\\
	\cE(u,X)&=\sum_{m\in\Z}X^m\cE_m=\sum_{m\in\Z}\sum_{k\in\Z+\frac12}X^me^{u\,(k-\frac{m}{2})}\hat E_{k-m,k},\\
	J_m& = \sum_{k\in\Z+\frac12}\hat E_{k-m,k}.
\end{align}
Then we have the following basic commutation relations:
\begin{align}
	[J_m,\cE_0(u)]&=u\,m\,\cS(u\,m)\cE_m(u);\\
	\label{eq:JE0omm}
	[\sum_{m=-\infty}^\infty \frac{X^m}{m} J_m,\cE_0(u)]&=\sum_{m=-\infty}^\infty u\,\cS(u\,m)\cE_m(u)=u\,\cS(u\,X\partial_X)\cE(u,X).
\end{align}
 We also have the following identity expressing bosonic realization for the action of $A_\infty$:
\begin{equation}\label{eq:cEdef}
	\cE(u,X)
	=\frac{
		e^{\sum_{i=1}^\infty u\cS(ui) J_{-i}X^{-i}}
		e^{\sum_{i=1}^\infty u\cS(ui) J_{i}X^{i}} -1 }
	{u\cS(u)}.
\end{equation}

We also need the following operator, which is an element of the corresponding Lie group:
\begin{equation}
	\cD(\hbar)
	=\exp\left(\sum_{k\in\Z+\frac12}w(\hbar\,k)\,\hat E_{k,k}\right),
\end{equation}
where the series $w(y)$ is related to $\hat\psi(y)$ by
\begin{equation}
	\hat\psi(y)=w(y+\hbar/2)-w(y-\hbar/2)=\zeta(\hbar\partial_y)w(y),
\end{equation}
where
\begin{equation}
	\zeta(z) = z\cS(z)=e^{z/2}-e^{-z/2}.
\end{equation}
\begin{remark}
	This equality defines $w(y)$ up to a constant. A choice of this constant is not important since scalar matrices from $\cA_\infty$ act trivially in the Fock space.
\end{remark}

Define
\begin{equation}\label{eq:eurom}
	\euro_m\coloneqq \cD(\hbar)^{-1}J_m\cD(\hbar).
\end{equation}
It is proved in~\cite[Proposition 3.1]{BDKS20} that
\begin{equation}
	\euro_m = \sum_{r=0}^\infty\partial_y^r\phi_m(y)|_{y=0}[u^r z^m]\frac{
		e^{\sum_{i=1}^\infty u\hbar\cS(u\hbar i) J_{-i}z^{-i}}
		e^{\sum_{i=1}^\infty u\hbar\cS(u\hbar i) J_{i}z^{i}}}
	{u\hbar \cS(u\hbar)},
\end{equation}
where
\begin{equation}\label{eq:phim}
	\phi_m(y)=e^{w(y+\frac{\hbar m}{2})-w(y-\frac{\hbar m}{2})}=e^{\zeta(m \hbar \partial_y)w(y)}=
	e^{\frac{\zeta(m\hbar\partial_y)}{\zeta(\hbar\partial_y)}\hat \psi(y)}=e^{m\hat{\psi}(y)}L_0(m,y,\hbar).
\end{equation}

Let
\begin{align}
J(X)&\coloneqq \sum_{m=-\infty}^\infty X^mJ_m,\\
\euro(X)&\coloneqq\sum_{m=-\infty}^\infty X^m\euro_m
\end{align}
and recall that
$\displaystyle W_{n} \mathop{=}^{\eqref{eq:Wndef}} \sum_{g=0}^\infty h^{2g-2+n}W_{g,n}$.

From \cite[Section~4]{BDKS20} we have the following vacuum expectation value expression for the \emph{disconnected} function $W^\bullet_n$: 
\begin{equation}
\label{eq:correlator}
		W^\bullet_n
		=\sum\limits_{m_1,\ldots,m_n=-\infty}^\infty X_1^{m_1}\ldots X_n^{m_n}\VEV{\left(\prod_{i=1}^nJ_{m_i}\right)\mathcal{D}(\hbar)e^{\sum_{i=1}^\infty \frac{\hat{q}_i J_{-i}}{i\hbar} }},
\end{equation}		
where $W^\bullet_n$ is related to $W_n$ of \eqref{eq:Wndef} via the inclusion-exclusion formula:
\begin{equation}
	W^\bullet_n = \sum\limits_{l=1}^n \frac{1}{l!}\sum_{\substack{I_1\sqcup\ldots\sqcup I_l=\llbracket n \rrbracket \\ \forall j\, I_j \neq \emptyset}}\prod_{i=1}^l  W_{|I_i|}(X_{I_i}).
\end{equation}
Here the inner sum goes over ordered collections of non-empty non-intersecting sets.

\subsubsection{Higher loop equations via vacuum expectation values}
\label{SubSec:higherloop}

In this section we propose a modification of formula \eqref{eq:correlator} that generates the expressions $\cW_{g,n}^{(r)}$.

\begin{definition}
	\label{def1}
Define
	\begin{align} \label{eq:VEVdefinitionOfcWgn}
		\cW_{n}^\bullet(z_1;z_{\llbracket n \rrbracket \setminus 1};u)&\coloneqq
		\VEV{ \left(\sum_{m=-\infty}^\infty \frac{X_1^m J_m}{m\hbar}\right) \cE_0(\hbar\,u)
			\left(\prod_{i=2}^nJ(X_i)\right)
			\cD(\hbar)e^{\sum_{j=1}^\infty \frac{\hat{q}_j J_{-j}}{j\hbar} }},\\
		\cW_{g,n}^\bullet&\coloneqq [\hbar^{2g-2+n}]\cW_{n}^\bullet.
	\end{align}
\end{definition}

	We define the \emph{connected} functions $\cW_{n}(z_1;z_{\llbracket n \rrbracket \setminus 1};u)$ and $\cW_{g,n}(z_1;z_{\llbracket n \rrbracket \setminus 1};u)$ as follows.
	Let
	\begin{equation}
		\mathcal{A}_k:=
		\begin{cases}
			\left(\cD(\hbar)e^{\sum_{j=1}^\infty \frac{\hat{q}_j J_{-j}}{j\hbar} }\right)^{-1}\left(\sum_{m=-\infty}^\infty \frac{X_1^m J_m}{m\hbar}\right) \cE_0(\hbar\,u)\left(\cD(\hbar)e^{\sum_{j=1}^\infty \frac{\hat{q}_j J_{-j}}{j\hbar} }\right),& \text{if $k=1$};\\
			\left(\cD(\hbar)e^{\sum_{j=1}^\infty \frac{\hat{q}_j J_{-j}}{j\hbar} }\right)^{-1}J(X_k)\left(\cD(\hbar)e^{\sum_{j=1}^\infty \frac{\hat{q}_j J_{-j}}{j\hbar} }\right),& \text{if $k\geq 2$}.
		\end{cases}
	\end{equation}
	Then $\cW_n^\bullet(z_1;z_{\llbracket n \rrbracket \setminus 1};u)= \VEV{\mathcal{A}_1\ldots\mathcal{A}_n}$. We can apply the usual inclusion-exclusion procedure to the $\mathcal{A}$ operators. Namely, let
	\begin{align}\label{eq:inclexclinv}
		\VEV{\mathcal{A}_1\ldots\mathcal{A}_n}^\circ& =\sum\limits_{l=1}^n\frac{1}{l!}\sum_{\substack{I_1\sqcup\ldots\sqcup I_l=\llbracket n \rrbracket \\ \forall j\, I_j \neq \emptyset}}(-1)^{l-1}(l-1)! \prod_{i=1}^l  \VEV{\mathcal{A}_{I_i}};
		\\
		\label{eq:inclexcl}
		\VEV{\mathcal{A}_1\ldots\mathcal{A}_n}& =\sum\limits_{l=1}^n\frac{1}{l!}\sum_{\substack{I_1\sqcup\ldots\sqcup I_l=\llbracket n \rrbracket \\ \forall j\, I_j \neq \emptyset}}\prod_{i=1}^l  \VEV{\mathcal{A}_{I_i}}^\circ,
	\end{align}
	where the inner sums go over ordered collections of non-empty non-intersecting sets, and by $\mathcal{A}_{I}$ for $I=\{i_1<\cdots<i_p\}\subset \llbracket n \rrbracket$ we denote $\mathcal{A}_{i_1}\cdots \mathcal{A}_{i_p}$.
	
		\begin{definition} Let
	\begin{align}
		\cW_n(z_1;z_{\llbracket n \rrbracket \setminus 1};u)& \coloneqq  \VEVc{\mathcal{A}_1\ldots\mathcal{A}_n}, \\
			\cW_{g,n}& \coloneqq [\hbar^{2g-2+n}] \cW_n.
	\end{align}
\end{definition}

Now let us introduce a new definition of $\cW^{(r)}_{g,n}$ (previously defined in Definition~\ref{def:cW(r)FirstDefinition}) and immediately prove that the two definitions are equivalent.

\begin{definition}\label{def:cW(r)VEV} By $\cW^{(r)}_{g,n}$ we denote the coefficient of $[u^r]$ in $r!\cW_{g,n}(z_1;z_{\llbracket n \rrbracket \setminus 1};u)$.
\end{definition}

\begin{theorem} For any $\hat \psi(\hbar^2,y)$ and $\hat y(\hbar^2,z)$ such that $\hat\psi(\hbar^2,0)=0$ and $\hat y(\hbar^2,0)=0$ the two definitions of $\cW^{(r)}_{g,n}$, Definition~\ref{def:cW(r)FirstDefinition} and Definition~\ref{def:cW(r)VEV}, are equivalent.
	
In other words, in this case $\cW_{n}(z_1;z_{\llbracket n \rrbracket \setminus 1};u) = \cT_n(z_1;z_{\llbracket n \rrbracket \setminus 1};u)$.
\end{theorem}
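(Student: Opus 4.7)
The plan is to commute the insertion $\cE_0(\hbar u)$ out of the way, rewrite the resulting correlator via the bosonic realisation \eqref{eq:cEdef} of $\cE(\hbar u,X_1)$, and match the resulting Wick-type expansion with the combinatorial definition of $\cT_n$.

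First I would use the commutation relation \eqref{eq:JE0omm}, extended by symmetry to negative modes, which gives
\[
\Big[\sum_{m\ne 0}\tfrac{X_1^m J_m}{m\hbar},\;\cE_0(\hbar u)\Big] \;=\; u\,\cS(u\hbar D_1)\bigl(\cE(\hbar u,X_1)-\cE_0(\hbar u)\bigr),
\]
where $D_1=X_1\partial_{X_1}$. Inside the VEV defining $\cW_n^\bullet$, I commute $\cE_0(\hbar u)$ to the left and use that $\langle 0|\cE_0(\hbar u)=0$ (the $\hat{E}_{k,k}$'s are normal-ordered). The \emph{straight} term and the residual $\cE_0(\hbar u)$ on the right-hand side both vanish, leaving
\[
\cW_n^\bullet \;=\; u\,\cS(u\hbar D_1)\,\VEV{\cE(\hbar u,X_1)\prod_{i=2}^n J(X_i)\cdot\cD(\hbar)\,e^{\sum_{j\ge 1} s_j J_{-j}/(j\hbar)}}.
\]

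Next I would substitute the bosonic realisation \eqref{eq:cEdef},
\[
\cE(\hbar u,X_1)\cdot u\hbar\,\cS(u\hbar) \;=\; e^{A^-(X_1)}e^{A^+(X_1)}-1,\qquad A^\pm(X_1)=\sum_{i\ge 1}u\hbar\,\cS(u\hbar i)\,J_{\pm i}X_1^{\pm i},
\]
and Taylor-expand both exponentials. Each elementary factor $u\hbar\,\cS(u\hbar i)\,J_{\pm i}X_1^{\pm i}$ is precisely the $\pm i$-mode content of an auxiliary ``ghost current'' $u\hbar\,\cS(u\hbar D_{\bar j})\,J(X_{\bar j})$ evaluated at $X_{\bar j}=X_1$; conjugating the correlator by $\cD(\hbar)$ simultaneously replaces the $J$'s by $\euro$'s and puts the correlator in the form used in \cite{BDKS20}. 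Grouping by the total number $k$ of ghost insertions and using the explicit vacuum expectation formula for $W^\bullet_{k+n-1}$ from Proposition~\ref{prop:FormulasWgn} (extended by \eqref{eq:correlator}), one obtains
\[
\cW_n^\bullet \;=\; \frac{u\,\cS(u\hbar D_1)}{u\hbar\,\cS(u\hbar)}\sum_{k\ge 1}\frac{1}{k!}\Big(\prod_{j=1}^{k}\bigm\rfloor_{X_{\bar j}=X_1}u\hbar\,\cS(u\hbar D_{\bar j})\Big)W_{k+n-1}^\bullet(X_{\bar 1},\ldots,X_{\bar k},X_2,\ldots,X_n),
\]
where the ``$-1$'' in \eqref{eq:cEdef} exactly kills the $k=0$ contribution.

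Finally, passing to the connected correlator $\cW_n$ via the inclusion-exclusion \eqref{eq:inclexclinv}, each connected cluster pinned at $X_1$ carries a subset of the ghost insertions together with a block $J_i\subseteq\{2,\ldots,n\}$ of ``genuine'' arguments; the internal sum over the number of ghosts within one cluster gives exactly $T_{|J_i|+1}(z_1;z_{J_i})$ by \eqref{def:T-1}, while the sum over ordered set partitions $\bigsqcup_{i=1}^{l}J_i=\llbracket n\rrbracket\setminus 1$ with the factor $\frac{1}{l!}$ reproduces the outer sum in \eqref{def:cT-1}. The surviving prefactor $u\cS(u\hbar D_1)/[u\hbar\cS(u\hbar)]$ simplifies to $\cS(u\hbar D_1)/[\hbar\cS(u\hbar)]$, matching the prefactor of $\cT_n$, and $\cW_n=\cT_n$ follows. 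The main obstacle will be the middle step: carefully tracking the bosonic Wick contractions inside $e^{A^-(X_1)}e^{A^+(X_1)}$, including the central commutator $[A^-(X_1),A^+(X_1)]$ and the mixed contractions with the $\euro(X_i)$'s after $\cD$-conjugation, and verifying that they reassemble precisely into $W^\bullet_{k+n-1}$ with $k$ additional arguments coinciding with $X_1$, with the correct symmetry factor $1/k!$ and no over-counting.
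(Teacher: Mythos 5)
Your proposal follows essentially the same route as the paper's proof: commute $\cE_0(\hbar u)$ past the $J$-sum via \eqref{eq:JE0omm} and kill the stray terms using $\langle 0|\cE_0(\hbar u)=0$, expand $\cE(\hbar u,X_1)$ via \eqref{eq:cEdef} into $k$ ``ghost'' current insertions restricted to $z_1$, and pass to connected correlators so that every cluster contains at least one ghost, reassembling into \eqref{def:cT-1} with the stated prefactor. The step you flag as the main obstacle is in fact immediate, since $e^{A^-(X_1)}$ sits directly against the covacuum and acts as the identity, so no Wick contractions survive from it; the one genuine point of care, which the paper makes explicit, is that the ghost insertions involve only $J^+$, so the diagonal poles of $W_{0,2}$ at coinciding ghost arguments never arise --- this is precisely the convention of Remark~\ref{rem:Special02} that is built into the definition of $T_n$, and without invoking it your intermediate formula with $W^\bullet_{k+n-1}$ restricted to the diagonal would be ill-defined.
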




\begin{proof} Apply the commutation relation~\eqref{eq:JE0omm} to the definition~\eqref{eq:VEVdefinitionOfcWgn} of $\cW_{n}^\bullet$. Using the explicit formula for $\cE(u,X)$ given in~\eqref{eq:cEdef} and the fact that the operators $J_i$, $i<0$, annihilate the covacuum, we obtain:
	
	\begin{align}
		\label{eq:proof1}
			\cW_{n}^\bullet&=
			u\cS(u\hbar D_1)\VEV{\cE(u\hbar, X_1)
				\left(\prod_{i=2}^nJ(X_i)\right)
				\cD(\hbar)e^{\sum_{j=1}^\infty \frac{\hat{q}_j J_{-j}}{j\hbar} }}
			\\ \notag &
			=
			\frac{\cS(u\hbar D_1)}{\hbar\cS(u\hbar)}\VEV{\left(e^{\sum_{j=1}^\infty u\hbar\cS(u\hbar j) J_{j}X_1^{j}} -1\right)
				\left(\prod_{i=2}^nJ(X_i)\right)
				\cD(\hbar)e^{\sum_{j=1}^\infty \frac{\hat{q}_j J_{-j}}{j\hbar} }}
			\\ \notag &
			=
			\frac{\cS(u\hbar D_1)}{\hbar\cS(u\hbar)}\sum _{k=1}^{\infty } \frac{1}{k!}
			\left(\prod _{i=1}^k \bigm\rfloor_{z_{\overline{i}}=z_1}\hbar u \cS(u \hbar  D_{\overline{i}})\right)
			\\ \notag
			& \hspace{4cm}
			\VEV{\left(\prod _{i=1}^kJ^+(X_{\bar i})\right)
				\left(\prod_{i=2}^nJ(X_i)\right)
				\cD(\hbar)e^{\sum_{j=1}^\infty \frac{\hat{q}_j J_{-j}}{j\hbar} }},
	\end{align}
	where
	\begin{equation}
	J^+(X_i)\coloneqq\sum_{m=1}^\infty X_i^m J_{m}.
	\end{equation}
	
	Note that the inclusion-exclusion formula can also be applied to correlators
	\begin{equation}
	\VEV{\left(\prod _{i=1}^kJ^+(X_{\bar i})\right)
		\left(\prod_{i=2}^nJ(X_i)\right)
		\cD(\hbar)e^{\sum_{j=1}^\infty \frac{\hat{q}_j J_{-j}}{j\hbar} }}
	\end{equation}
		to get connected versions of said correlators. Indeed, note that if one defines	
	\begin{equation}\label{eq:Bopdef}
		\mathcal{B}^{(k,n)}_m:=
		\begin{cases}
			\left(\cD(\hbar)e^{\sum_{j=1}^\infty \frac{\hat{q}_j J_{-j}}{j\hbar} }\right)^{-1}J^+(X_{\bar m})\left(\cD(\hbar)e^{\sum_{j=1}^\infty \frac{\hat{q}_j J_{-j}}{j\hbar} }\right),& \text{if $1\leq m\leq k$};\\
			\left(\cD(\hbar)e^{\sum_{j=1}^\infty \frac{\hat{q}_j J_{-j}}{j\hbar} }\right)^{-1}J(X_{m-k+1})\left(\cD(\hbar)e^{\sum_{j=1}^\infty \frac{\hat{q}_j J_{-j}}{j\hbar} }\right),& \text{if $k+1\leq m \leq k+n-1$},
		\end{cases}
	\end{equation}
then
\begin{equation}
	\VEV{\left(\prod _{i=1}^kJ^+(X_{\bar i})\right)
		\left(\prod_{i=2}^nJ(X_i)\right)
		\cD(\hbar)e^{\sum_{j=1}^\infty \frac{\hat{q}_j J_{-j}}{j\hbar} }} = \VEV{\mathcal{B}^{(k,n)}_1 \cdots \mathcal{B}^{(k,n)}_{k+n-1}}.
\end{equation}
We can then define the connected correlators $\VEV{\mathcal{B}^{(k,n)}_{i_1}\cdots\mathcal{B}^{(k,n)}_{i_m}}^\circ$, for $1\leq i_1 < \dots < i_m \leq k+n-1$, 
via a formula completely analogous to \eqref{eq:inclexclinv}, so that a formula analogous to \eqref{eq:inclexcl} holds:
\begin{equation}	\label{eq:Binclexcl}
	\VEV{\mathcal{B}^{(k,n)}_1\cdots\mathcal{B}^{(k,n)}_n} =\sum\limits_{l=1}^n\frac{1}{l!}\sum_{\substack{I_1\sqcup\ldots\sqcup I_l=\llbracket k+n-1 \rrbracket \\ \forall j\, I_j \neq \emptyset}}\prod_{i=1}^l  \VEV{\mathcal{B}^{(k,n)}_{I_i}}^\circ.
\end{equation}

Substituting \eqref{eq:Binclexcl} and \eqref{eq:Bopdef} into \eqref{eq:proof1} and splitting each index set $I_m$ as $I_m=V_m\sqcup U_m$ for $V_m\subseteq \{\bar 1,\dots,\bar k\}$, $U_m\subseteq \llbracket n \rrbracket \setminus 1$, we get	
	\begin{align}\label{eq:cWdisc1}		
		\cW_{n}^\bullet& =
		\frac{\cS(u\hbar D_1)}{\hbar\cS(u\hbar)}\sum _{k=1}^{\infty } \frac{1}{k!}
		\left(\prod _{i=1}^k \bigm\rfloor_{z_{\overline{i}}=z_1}\hbar u \cS(u \hbar  D_{\overline{i}})\right) \sum_{l=1}^\infty\frac1{l!}
		\sum_{\substack{\bigsqcup_{m=1}^l U_{m}=\llbracket n \rrbracket \setminus 1 \\ \bigsqcup_{m=1}^l V_{m}=\{\bar 1,\dots,\bar k\} \\ \forall m\, U_m \cup V_m \neq \emptyset}}
		\\ \notag
		& \hspace{2cm}
		\prod_{m=1}^l \VEVc{\left(\prod _{\bar i\in V_m} J^+(X_{\bar i})\right)
			\left(\prod _{i\in U_m} J(X_i)\right)
			\cD(\hbar)e^{\sum_{j=1}^\infty \frac{\hat{q}_j J_{-j}}{j\hbar} }};
	\end{align}
	 the innermost sum goes over ordered collections of sets $U_i$, $V_i$, where each of these sets is allowed to be empty, but for each $i$ the sets $U_i$ and $V_i$ cannot be empty simultaneously. 
	
	Now let us apply formula \eqref{eq:inclexclinv} to the whole expression \eqref{eq:cWdisc1}, obtaining $\cW_{n}$ from $\cW_{k}^\bullet$, $k=1\dots n$. Note that since expanding $\cW_n^\bullet$ via formula \eqref{eq:inclexcl} produces an expression where in each summand exactly one correlator contains $\cA_1$, and since \eqref{eq:inclexclinv} is the inverse formula to formula \eqref{eq:inclexcl},
	this operation (passing from $\cW_n^\bullet$ to $\cW_n$) corresponds precisely to taking the formula \eqref{eq:cWdisc1} for $\cW_{n}^\bullet$ and then dropping all terms with one of the $V_m$'s being empty in the innermost sum. We obtain:
	\begin{align}
	\label{eq:proof2}
	\cW_{n} & =
	\frac{\cS(u\hbar D_1)}{\hbar\cS(u\hbar)}\sum _{k=1}^{\infty } \frac{1}{k!}
	\left(\prod _{i=1}^k \bigm\rfloor_{z_{\overline{i}}=z_1}\hbar u \cS(u \hbar  D_{\overline{i}})\right) \sum_{l=1}^\infty\frac1{l!}
	\sum_{\substack{\bigsqcup_{m=1}^l U_{m}=\llbracket n \rrbracket \setminus 1 \\ \bigsqcup_{m=1}^l V_{m}=\{\bar 1,\dots,\bar k\} \\ \forall m\, V_m \neq \emptyset}}
	\\ \notag
	& \hspace{2cm}
	\prod_{m=1}^l \VEVc{\left(\prod _{\bar i\in V_m} J^+X(_{\bar i})\right)
		\left(\prod _{i\in U_m} J(X_i)\right)
		\cD(\hbar)e^{\sum_{j=1}^\infty \frac{\hat{q}_j J_{-j}}{j\hbar} }}
	\\ \notag
	& =
	\frac{\cS(u\hbar D_1)}{\hbar\cS(u\hbar)} \sum_{l=1}^\infty\frac1{l!}
	\sum_{\substack{\bigsqcup_{i=1}^l U_{i}=\llbracket n \rrbracket \setminus 1 }} 	\prod_{m=1}^l \sum _{k_m=1}^{\infty } \frac{1}{k_m!}
	\left(\prod _{i=1}^{k_m} \bigm\rfloor_{z_{\overline{i}}=z_1}\hbar u \cS(u \hbar  D_{\overline{i}})\right)
	\\ \notag
	& \hspace{2cm}
\VEVc{\left(\prod _{i=1}^{k_m} J^+(X_{\bar i})\right)
		\left(\prod _{i\in U_m} J(X_i)\right)
		\cD(\hbar)e^{\sum_{j=1}^\infty \frac{\hat{q}_j J_{-j}}{j\hbar} }}
\end{align}	
The vacuum expectation value in the last line is equal to $W_{k_m+|U_i|}(z_{\bar 1},\dots,z_{\bar k_m},z_{U_i})$ up to one small adjustment. Recall~\cite[Sections 4 and 6]{BDKS20} that the singular term in $W_2$ is coming from the commutator $[J^+(X_{\bar i}),J^-(X_j)]=X_{\bar i}X_j/(X_{\bar i}-X_j)^2$. Note also that the terms $J^-(X_{\bar j})$, $\bar j=1,\dots,k_m$, are absent in the formula. So, the vacuum expectation in the last line of Equation~\eqref{eq:proof2} is obtained from $W_{k_m+|U_i|}(z_{\bar 1},\dots,z_{\bar k_m},z_{U_i})$ by applying the convention explained in Remark~\ref{rem:Special02}. Exactly the same convention is applied in Definition~\ref{def:defT}. Thus we can identify~\eqref{eq:proof2} with~\eqref{def:cT-1}.
\end{proof}

\subsection{A closed formula for \texorpdfstring{$\cW_{g,n}^{(r)}$}{Wgnr} and a proof of Theorem~\ref{theor:higher}}\label{Sec54}

In this section we obtain an explicit closed algebraic formula for the expression $\cW^{(r)}_{g,n}$ of the same type as in Proposition~\ref{prop:FormulasWgn} for $W_{g,n}$. This formula will imply Theorem~\ref{theor:higher}.

\begin{proposition} For $n\geq 2$, $(g,n)\not=(0,2)$, we have:
	\label{LEtheor}
	\begin{equation}\label{eq:mainprop2}
		\cW_{g,n}=[\hbar^{2g-2+n}]
		U_n\dots U_2\widetilde U_1
		\sum_{\gamma \in \Gamma_n}\prod_{\{v_k,v_\ell\}\in E_\gamma} w_{k,\ell},
	\end{equation}
	where the sum is over all connected simple graphs on $n$ labeled vertices, and we have Equation~\eqref{eq:wkl} for $w_{k,\ell}$, Equation~\eqref{eq:Uihbar} for the $U_i$, $i=2,\dots,n$, and $\widetilde U_1$ is defined as an operator that acts on a function $f=f(u_1,z_1)$ as
	\begin{align}
		\widetilde U_{1} f&\coloneqq
		\sum_{j,r=0}^\infty D_1^j\left(\frac{\widetilde{L}^j_{r,1}
		}{Q_1}
		[u_1^r] \frac{e^{u_1(\cS(u_1\hbar Q_1D_1)\hat y(z_1)-y(z_1))}}{u_1\hbar\cS(u_1\hbar)}f\right),
		\\
		\widetilde{L}^j_{r,1} &:= \left.\left([v^j]e^{-v\,\psi(y)}\partial_y^r e^{v\frac{\cS(v\,\hbar\,\partial_y)}{\cS(\hbar\,\partial_y)}\hat \psi(y)}u\cS(vu\hbar)e^{u y}\right)\right|_{y=y(z_1)}.
	\end{align}
For $(g,n)=(0,2)$ we have 
\begin{equation}
	\cW_{0,2} = \frac{ue^{u\,y(z_1)}}{Q_1Q_2}\frac{z_1z_2}{(z_1-z_2)^2}.
\end{equation}
For $n=1$, $g>0$ we have 
\begin{equation}
	\cW_{g,1} = [\hbar^{2g}]\left(\hbar \,\widetilde{U}_1 1 + \sum_{j=0}^\infty D_1^{j}\widetilde{L}_{0,1}^{j+1}\; D_1y(z_1)
	\right) .
\end{equation}
Finally, for $(g,n)=(0,1)$ we have
\begin{equation}	
	\cW_{0,1} = e^{u\,y(z_1)}-1.
\end{equation}
\end{proposition}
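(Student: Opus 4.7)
The plan is to parallel the proof of Proposition~\ref{prop:FormulasWgn} (which is given in \cite{BDKS20} as Theorem~4.8 there), tracking how the single modification --- replacement of the first $J(X_1)$ in the vacuum expectation value by $\bigl(\sum_m X_1^m J_m/(m\hbar)\bigr)\cE_0(\hbar u)$, as in \eqref{eq:VEVdefinitionOfcWgn} --- propagates through the entire argument. Since for the slots $i=2,\dots,n$ nothing changes, the operators $U_i$ and the pairwise edge weights $w_{k,\ell}$ are reproduced identically from the $W_{g,n}$ case; all new work is concentrated in the first slot and in the unstable contributions.

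First, I would commute the sum $\sum_{m} X_1^m J_m/(m\hbar)$ past $\cE_0(\hbar u)$ using the commutation relation~\eqref{eq:JE0omm}. This produces the bulk term $u\,\cS(\hbar u\,D_1)\,\cE^+(\hbar u,X_1)$ plus a residual piece in which $\cE_0(\hbar u)$ acts on the covacuum as a scalar (in the regularized version this scalar is absorbed and contributes only to the unstable data). The negative-mode portion of $\sum_{m} X_1^m J_m/(m\hbar)$ contributes only to the unstable contractions that recover $\cW_{0,1}$ and $\cW_{0,2}$, and is treated by direct computation, analogous to the closing lines of Proposition~\ref{prop:FormulasWgn}.

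Second, once the first slot has been rewritten in terms of $\cE^+(\hbar u,X_1)$, I would apply the bosonic realization~\eqref{eq:cEdef}, conjugate everything by $\cD(\hbar)$, and expand against the principal shift $e^{\sum s_j J_{-j}/(j\hbar)}$, following verbatim the manipulations of \cite[Theorem~4.8]{BDKS20}. The same procedure that produced $U_1$ in the original proof now yields $\widetilde U_1$; the extra factor $u\,\cS(vu\hbar)\,e^{uy}$ appearing in $\widetilde L^{j}_{r,1}$ compared with $L^{j}_{r,1}$ is the net contribution of the $\cE_0(\hbar u)$ insertion. The $u\,\cS(vu\hbar)$ assembles from the prefactor $u\,\cS(\hbar u D_1)$ coming from Step~1 combined with the denominator $1/(u\hbar\,\cS(u\hbar))$ in \eqref{eq:cEdef}, while the $e^{u y(z_1)}$ arises because the negative-mode exponential of $\cE$, which in the original computation produced a $-y$ shift in the exponent of~\eqref{eq:Uihbar}, is absent from $\cE^+$; its absence leaves exactly that factor uncancelled in the kernel.

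Finally, the unstable cases must be handled by hand. $\cW_{0,1}=e^{u\,y(z_1)}$ follows by direct evaluation of the one-point VEV with $\cE_0(\hbar u)$ inserted; and the stated formula for $\cW_{0,2}$ is obtained by superimposing the commutator $[\cE^+(\hbar u,X_1),J^-(X_2)]$ (which yields the factor $u\,e^{u\,y(z_1)}$) on top of the standard $z_1z_2/(z_1-z_2)^2$ pole. The main obstacle I expect is not any individual commutation but rather the bookkeeping of the unstable/stable split when passing from the disconnected VEV $\cW_n^\bullet$ to its connected counterpart $\cW_n$: the $\cE_0(\hbar u)$ insertion alters which one-to-one-slot contractions survive the inclusion--exclusion of Section~\ref{SubSec:higherloop}. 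To avoid this becoming a combinatorial mess, I would first complete the calculation in the stable range $n\geq 2$, $(g,n)\ne(0,2)$, and only then fill in the unstable strata in a short epilogue.
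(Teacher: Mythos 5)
Your overall strategy is sound and would land on the stated formulas, but the route through the first slot is genuinely different from the paper's. You commute $\sum_m X_1^mJ_m/(m\hbar)$ past $\cE_0(\hbar u)$ first, via \eqref{eq:JE0omm}, and then conjugate the resulting $\cE_m(\hbar u)$ modes by $\cD(\hbar)$; this forces you to extend the conjugation formula \cite[Proposition~3.1]{BDKS20} from $J_m$ to $\cE_m(\hbar u)$ (straightforward, since $\cD(\hbar)^{-1}\cE_m(\hbar u)\cD(\hbar)$ has the same form as $\euro_m$ with symbol $\phi_m(y)$ replaced by $e^{uy}\phi_m(y)$, but it is an extra derivation). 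The paper instead absorbs $\cE_0(\hbar u)$ into an infinitesimal deformation $\widetilde\cD(\hbar)=e^{\epsilon\cE_0(u\hbar)}\cD(\hbar)$ over the dual numbers $\C[\epsilon]/(\epsilon^2)$, which amounts to deforming $\hat\psi$ to $\widetilde\psi=\hat\psi+\epsilon\,\zeta(\hbar\partial_y)e^{uy}$; one then reuses the machinery behind Proposition~\ref{prop:FormulasWgn} verbatim with $\widetilde\phi_m=\phi_m\cdot(1+\epsilon\, mu\hbar\,\cS(mu\hbar)e^{uy})$ and extracts the coefficient of $\epsilon^1$. The two computations are equivalent, since $[\epsilon^1]\widetilde\euro_m=\cD(\hbar)^{-1}[J_m,\cE_0(u\hbar)]\cD(\hbar)$ is exactly your commutator-then-conjugate; but the dual-number packaging buys a literal reuse of the known formula and lets the unstable cases inherit their treatment from \cite[Section~6]{BDKS20} with no new bookkeeping.

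Two details of your accounting are off, though they do not derail the method, which self-corrects once the computation is actually carried out. The factor $u\,\cS(vu\hbar)$ in $\widetilde L^j_{r,1}$ comes entirely from the commutator $[\tfrac{1}{m\hbar}J_m,\cE_0(\hbar u)]=u\,\cS(u\hbar m)\,\cE_m(\hbar u)$ with $m$ traded for $v$; the denominator $u_1\hbar\,\cS(u_1\hbar)$ in \eqref{eq:cEdef} involves the \emph{internal} expansion variable $u_1$, is identical in $U_1$ and $\widetilde U_1$, and plays no role here. Likewise $e^{uy}$ is not an artifact of dropping the negative-mode exponential when passing to $\cE^+$: it is the diagonal weight $e^{u\hbar(k-m/2)}$ intrinsic to $\cE_m(\hbar u)$ as opposed to $J_m$ (equivalently, the deformation $\widetilde w(y)=w(y)+\epsilon e^{uy}$ in the paper's version). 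Your sourcing of the $(0,2)$ factor $u\,e^{u\,y(z_1)}$ from the central term of $[\cE^+(\hbar u,X_1),J^-(X_2)]$ alone is similarly suspect, since that commutator is a function of $X_1,X_2,u,\hbar$ only and the $y(z_1)$-dependence can enter only through the interaction with $e^{\sum_j s_jJ_{-j}/(j\hbar)}$; the unstable strata therefore need the careful case-by-case redo that the paper defers to \cite[Section~6]{BDKS20}.
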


\begin{proof}
	We compute $\cW_{g,n}$ in a closed form applying the techniques developed in~\cite{BDKS20}. We have
	\begin{equation}
		\begin{aligned}
			\cW_{g,n}^\bullet&=[\hbar^{2g-2+n}]
			\VEV{\left(\sum_{m=1}^\infty \frac{X^m}{m\hbar } J_m\right)\cE_0(u\hbar)
				\left(\prod_{i=2}^nJ(X_i)\right)
				\cD(\hbar)e^{\sum_{i=1}^\infty \frac{\hat{q}_i J_{-i}}{i\hbar} }}
			\\&
			=[\hbar^{2g-2+n}]
			\VEV{\left(\sum_{m=1}^\infty \frac{X^m}{m\hbar } J_m\right)\cE_0(u\hbar)\cD(\hbar)
				\left(\prod_{i=2}^n\euro(X_i)\right)
				e^{\sum_{i=1}^\infty \frac{\hat{q}_i J_{-i}}{i\hbar} }}
		\end{aligned}
	\end{equation}
	We consider the Lie algebra element $\cE_0(\hbar\,u)$ as the first order term in $\epsilon$ of the Lie group element $e^{\epsilon  \cE_0(\hbar\,u)}$ and subsequently consider all expressions over the ring of dual numbers $\mathbb{C}[\epsilon]/(\epsilon^2)$.
	
	Define $\widetilde \cD(\hbar)\coloneqq e^{\epsilon  \cE_0(u\hbar)}\cD(\hbar)$.
It is straightforward to see that
	\begin{equation}
		\widetilde \cD(\hbar) = \exp\left(\sum_{k\in\Z+\frac12}(w(\hbar k)+\epsilon e^{u \hbar k}) \hat E_{k,k}\right)
		=\exp\left(\sum_{k\in\Z+\frac12}\widetilde w(\hbar k) \hat E_{k,k}\right),
	\end{equation}
	where
	\begin{equation}
		\widetilde w(y)\coloneqq w(y)+\epsilon e^{u y}=\frac1{\zeta(\hbar\partial_y)}\widetilde\psi(y)
	\end{equation}
	for
	\begin{equation}
		\widetilde \psi(y)=\hat\psi(y)+\epsilon \zeta(\hbar\partial_y)e^{u y}.
	\end{equation}
	
	Apply~\eqref{eq:eurom}--\eqref{eq:phim} to $\widetilde \cD(\hbar)$.  We have:
	\begin{align}
			\widetilde\euro_m&=\widetilde\cD(\hbar)^{-1}J_m\widetilde\cD(\hbar)\\ \notag
			&=\sum_{r=0}^\infty\partial_y^r\widetilde\phi_m(y)|_{y=0}[u^r z^m]\frac{
				e^{\sum_{i=1}^\infty u\hbar\cS(u\hbar i) J_{-i}z^{-i}}
				e^{\sum_{i=1}^\infty u\hbar\cS(u\hbar i) J_{i}z^{i}}}
			{u\hbar\cS(u\hbar)},
	\end{align}
	where
	\begin{align}
		\widetilde\phi_m(y)
		& =e^{\frac{\zeta(m\hbar\partial_y)}{\zeta(\hbar\partial_y)}\widetilde \psi(y)}
		=e^{\frac{\zeta(m\hbar\partial_y)}{\zeta(\hbar\partial_y)}(\hat\psi(y)+\epsilon \zeta(\hbar\partial_y)e^{u y})}
		\\ \notag
		&=\phi_m(y) e^{\epsilon\zeta(mu\hbar)e^{u y}}
		=\phi_m(y)(1+\epsilon m u \hbar \cS(mu\hbar)e^{u y})
	\end{align}

	Now we can follow directly the arguments in~\cite[Section 4]{BDKS20}. We obtain:
	\begin{align}
			\cW_{g,n}^\bullet&=[\hbar^{2g-2+n}\epsilon^1]
			\VEV{\left(\sum_{m=1}^\infty \frac{X^m}{m\hbar} \widetilde\euro_m\right)
				\left(\prod_{i=2}^n\euro(X_i)\right)
				e^{\sum_{i=1}^\infty \frac{\hat{q}_i J_{-i}}{i\hbar} }}
			\\ \notag &
			= [\hbar^{2g-2+n}]U_n\dots U_2 \widetilde U_1
			\prod_{1\le k<\ell\le n}
			e^{\hbar^2u_ku_\ell\cS(u_k\hbar\,z_k\partial_{z_k})\cS(u_\ell\hbar\,z_\ell \partial_{z_\ell})
				\frac{z_k z_\ell}{(z_k-z_\ell)^2}}.
	\end{align}
	Passing to the connected functions we obtain Equation~\eqref{eq:mainprop2}.
	
In the case $n=1$ and $(g,n)=(0,2)$ one has to slightly modify this computation, cf.~\cite[Section 6]{BDKS20}.
\end{proof}
\begin{remark}
	Note that Proposition~\ref{LEtheor} is the key technical result (the main ingredient) in the proof of the loop equations and, by extension, of the blobbed topological recursion. Let us stress that Proposition~\ref{LEtheor} itself did not require the natural analytic assumptions (or any other assumptions), and it holds in full generality for all possible formal series $\hat\psi$ and $\hat y$.
\end{remark}

This proposition allows us to prove Theorem~\ref{theor:higher}.
\begin{proof}[Proof of Theorem~\ref{theor:higher}]
	We use the same argument as in the proof of Theorem~\ref{th:linloop}.
	We have to show that $\cW_{g,n}^{(r)}\in \hat{\Xi}(z_1)$ for any $g\geq 0$, $n\geq 1$, $r\geq 1$. To this end we use the formulas stated in Proposition~\ref{LEtheor}. In particular, in the case $n\geq 2$ we use Equation~\eqref{eq:mainprop2} (the case $n=1$ is completely analogous).
	
We see that $\cW_{g,n}^{(r)}$ is equal to a finite sum of the expressions $D_1^j Q_1^{-1} f_j$, where, due to the natural analytic assumptions of Definition~\ref{def:naa}, $f_j$ is regular at $z_1\to p_i$, $i=1,\dots, N$ (recall that $p_1,\dots,p_N$ are zeros of $Q(z)$). This allows us to conclude in the same way as in the proof of Theorem~\ref{th:linloop} that $\cW_{g,n}^{(r)}\in \hat{\Xi}(z_1)$.
\end{proof}

\section{Projection property: statements and discussion}\label{sec:proj}

The spectral curve topological recursion can equivalently be reformulated, see~\cite[Theorem 2.2]{BS17} as the recursion for the so-called \emph{normalized} symmetric $n$-differentials $\omega_{g,n}$ that satisfy the abstract loop equations. Recall that a symmetric $n$-differential $\omega_{g,n}$, $2g-2+n>0$, is called normalized if it satisfies the so-called \emph{projection property}:
\begin{equation}
\cP_1\cdots \cP_n \omega_{g,n} = \omega_{g,n},
\end{equation}
 where $\cP\lambda$ for a $1$-form $\lambda$ is defined as
\begin{equation}
	\cP \lambda (z) \coloneqq \sum_{i=1}^N \res_{w\to p_i} \lambda(w) \int_{p_i}^w \omega_{0,2} (\cdot, z).
\end{equation}
Here $\{p_1,\dots,p_N\}$ are the zeros of $\omega_{0,1}$. So, in order to state the topological recursion for the $\hbar^2$-deformed weighted Hurwitz numbers, we have to analyze when the differentials
 \begin{equation}
 	\omega_{g,n}(z_1,\dots,z_n) \coloneqq \Big(\bigotimes_{i=1}^n \frac{dX_i}{X_i}\Big) W_{g,n}  = \Big(\bigotimes_{i=1}^n d_i \Big) H_{g,n}, \qquad 2g-2+n>0,
 \end{equation}
$X_i = X(z_i)$, $i=1,\dots,n$, satisfy the projection property.

Note that the projection property turns out to be the most restrictive out of all the conditions for the topological recursion in the light of the possible choices of the $\hbar^2$-deformations of the functions $\psi(y)$ and $y(z)$. Recall that the functions $\psi(y)$ and $y(z)$ determine the spectral curve data. We have proved all loop equations for the $\hbar^2$-deformed weighted Hurwitz numbers basically in the full generality (when they make sense, i.e. when what we call \emph{natural analytic assumptions} hold) in Section~\ref{SecPrel}. These arguments are applicable to and imply no restrictions on the arbitrary $\hbar^2$-deformations (for which the natural analytic assumptions still hold). In the meanwhile, once an $\hbar^2$-deformation of $(\psi,y)$ such that the resulting $n$-differentials $\omega_{g,n}$ satisfy the projection property exists, it is automatically unique, since the corresponding $\omega_{g,n}$'s can be reconstructed by the topological recursion.

The goal of this section is to state the projection property for two natural families of the $(\psi,y)$-data, in particular, to identify the necessary $\hbar^2$-deformations of these functions, and to discuss these families in the context of the results already known from the literature. The proofs of the projection property for these families are collected in the next section, Section~\ref{sec:ProjPropProofs}.

\subsection{Two families of \texorpdfstring{$(\psi,y)$}{(psi,y)}-data}
The projection property requires careful 
analysis of the poles arising in our general formulas for $H_{g,n}$ for the respective choices of $\hat{\psi}$ and $\hat{y}$, and we do not know how to choose the correct unique $\hbar^2$-deformations and to prove it in full generality (it is also an open question whether the correct unique choice of $\hbar^2$-deformations always exists). We state and prove it below for two
quite general families of $(\psi,y)$-data, which, in fact, subsume as special cases all Hurwitz-type problems for which the topological recursion
(or at least just the projection property)
has already been studied in the literature (with a little caveat discussed in Remark~\ref{rem:generality}). The results are summarized in Table \ref{tab:families}.

\begin{table}[H]
	\scalebox{0.73}{
		\begin{tabular}{|c||c|c||c|c|}
			\hline
			\Gape[0.2cm]{Family} & $\psi(y)$ & $y(z)$ & $\hat\psi(\hbar^2,y)$ & $\hat y(\hbar^2,z)$\\
			\hline
			\hline
			\Gape[0.5cm]{I} & $P_1(y)+ \log\left(\dfrac{P_2(y)}{P_3(y)}\right)$  
			& $\dfrac{R_1(z)}{R_2(z)}$ & $\cS(\hbar \partial_y)P_1(y)+\log\left(\dfrac{P_2(y)}{P_3(y)}\right)$ 
			& $y(z)$ \\
			\hline
			\Gape[0.5cm]{II} & $\alpha\, y$ & $
			\dfrac{R_1(z)}{R_2(z)}+
			\log\left(\dfrac{R_3(z)}{R_4(z)}\right)$ & $\psi(y)$ & $
			\dfrac{R_1(z)}{R_2(z)}+
			\dfrac{1}{\cS(\hbar z\partial_z)}\,\log\left(\dfrac{R_3(z)}{R_4(z)}\right)$ \\
			\hline
	\end{tabular} }
	\caption{Cases of weighted Hurwitz numbers data $\psi$, $y$ where we can prove the projection property and know the unique $\hbar^2$-deformation ($\hat{\psi}$, $\hat y$). Here $P_i$ and $R_i$ are some polynomials (of degree $\geq 0$)
		such that the natural analytic assumptions of Definition~\ref{def:naa} are satisfied.}
	\label{tab:families}
\end{table}

Family I 
includes, as special cases, weighted Hurwitz problems listed in Table \ref{tab:Hurw}.

\begin{table}[H]
	\renewcommand{\arraystretch}{1.5}
	\scalebox{0.9}{
		\begin{tabular}{|c|c|c|}
			\hline
			Hurwitz numbers & $\psi(y)$ & $\hat\psi(y)$\\
			\hline
			\hline
			usual & $y$ & $\psi(y)$ \\
			r-spin
			& $y^r$ & $\cS(\hbar \partial_y) \psi(y)$ \\
			monotone & $\log\big(1/(1-y)\big)$ & $\psi(y)$ \\
			strictly monotone & $\log(1+y)$ & $\psi(y)$ \\
			hypermaps & $\log\big((1+uy)(1+vy)\big)$ & $\psi(y)$\\
			BMS numbers & $\log\big((1+y)^m\big)$ & $\psi(y)$ \\
			polynomially weighted& $\log\left(\sum_{k=1}^d c_k y^k\right)$ & $\psi(y)$\\
			\hline
	\end{tabular} } \scalebox{0.9}{ \begin{tabular}{|c|c|c|}
			\hline
			Variations & $y(z)$ & $\hat y(z)$ \\
			\hline
			\hline
			simple & $z$ & $y(z)$ \\
			orbifold & $z^q$ & $y(z)$ \\
			\Gape[0.1cm]{\shortstack{polynomial\\ double}} & \raisebox{0.2cm}{$\sum_{k=1}^d s_kz^k$} & \raisebox{0.2cm}{$y(z)$} \\
			\hline
	\end{tabular} }
	\renewcommand{\arraystretch}{1}
	\caption{Types of Hurwitz numbers (known from the literature) belonging to Family I, with their $(\psi,y)$-data and the unique $\hbar^2$-extension $(\hat\psi, \hat y)$.}
	\label{tab:Hurw}
\end{table}

Family II, out of already known examples from the literature, includes just the cases of the extended Ooguri-Vafa partition function for the HOMFLY-PT polynomials of torus knots and of the usual double Hurwitz numbers (the latter case is also included in Family I), see Table \ref{tab:Hurw2}.

\begin{table}[H]
	\renewcommand{\arraystretch}{1}
	\begin{tabular}{|c|c|c|c|c|}
		\hline
		\Gape[0.3cm]{Hurwitz numbers} & $\psi(y)$ & $\hat \psi(y)$ & $y(z)$ &$\hat y(z)$\\
		\hline
		\hline
		\Gape[0.3cm]{usual double} & $y$ & $\psi(y)$ & $\sum_{k=1}^d s_kz^k$ & $y(z)$\\
		\hline
		\Gape[0.3cm]{extended Ooguri--Vafa}
		& $\frac{P}{Q}y$ & $\psi(y)$ & $\log \left(\frac{1-A^{-1}z}{1-Az}\right)$ & 
		$\frac{1}{\cS(\hbar z\partial_z)} y(z)$
		\\
		\hline
	\end{tabular}
	\renewcommand{\arraystretch}{1}
	\caption{Types of Hurwitz numbers (known from the literature) belonging to family II, with their $(\psi,y)$-data and the unique $\hbar^2$-extension $(\hat\psi, \hat y)$.}
	\label{tab:Hurw2}
\end{table}

The precise references to the literature and a survey of known results is given in~Section~\ref{sec:PreviousQuasi} below.

\begin{remark}\label{rem:generality}
	Note that $d\psi/dy$ and $dy/dz$ are rational expressions for both Family I and Family II of Table \ref{tab:families}. 
	In the proofs of the projection property given below we additionally assume \emph{the condition of generality}, meaning
	that 
	all zeros of polynomials $P_2$, $P_3$, $R_2$, $R_3$, $R_4$, and of $dX$, where $X(z)=z\exp(-\psi(y(z))$, are simple. 
	For the cases where 
	this condition does not hold
	(which do appear in applications) it is not clear how to prove the projection property directly, but one can prove the Bouchard-Eynard recursion for these cases by taking the limit from the cases when the condition of generality holds, see Section \ref{sec:TopoRec}; the Bouchard-Eynard recursion in turn implies the projection property, and thus we have an indirect proof of projection property for these cases.  
\end{remark}

\subsection{The spaces \texorpdfstring{$\Xi$}{Xi} and \texorpdfstring{$\Theta$}{Theta}}

The projection property is typically considered together with the linear loop equation. These two properties combined hold if and only if the functions $W_{g,n}$, $2g-2+n>0$, belong to a so-called space $\Xi$, which is a subspace of the space $\hat\Xi$. Let us define the space $\Xi$.

In order to simplify the exposition, we restrict ourselves to the case that we have in our Families I and II, namely, from now on we assume that $z$ (defined by the change of variables $X=z\exp(-\psi(y(z)))$) is a global coordinate on a rational curve $C$, and $dX/X$ is a rational $1$-form on $C$. Recall that $p_1,\dots,p_N$, $N\geq 1$, denote the zeros of $dX/X$ (and, due to Lemma~\ref{lem:dxzeros}, they are distinct from $\infty$ under the natural analytic assumptions of Definition~\ref{def:naa}), and for the rest of the text we use the following notation:
\begin{equation}\label{eq:Qcheckdef}
	\check Q(z) :=	\prod_{j=1}^N (z-p_j).
\end{equation}




\subsubsection{Definitions of the space \texorpdfstring{$\Xi$}{Xi}}
\begin{definition}	\label{def:Xi}
	The space $\Xi$ (that we also denote by $\Xi_n$, $n\geq 1$, when we want to stress the number of variables in the notation) is defined as the linear span of the functions
	\begin{equation}
		\prod_{j=1}^n D_j^{d_j} \frac{z_j}{z_j - p_{i_j}}
	\end{equation}
	defined for each $d_1,\dots,d_n\geq 1$ and $1\leq i_1,\dots,i_n\leq N$.
\end{definition}

\begin{remark} 
	Note that if we consider $\Xi_n$ as a space of functions depending on just one variable $z_j=z$, while all other variables are fixed, then $\Xi_n\subset \hat\Xi(z)$. Indeed, in the local coordinate $\nu$ (see Proposition \ref{Prop:localcoord}) we have $D = X \partial_X= \dfrac2\nu\dfrac{d}{d\nu}$ and $\dfrac{z}{z-p_i} = \dfrac1\nu \,f(\nu)$, where $f$ is regular at $\nu=0$; therefore the space $\Xi_n$ (where the functions are regarded as functions of just one chosen variable) is contained in the space $\hat\Xi$ of functions with odd principal parts. 
\end{remark}

\begin{definition}	\label{def:Xi-2}
	The space $\Xi$ (that we also denote by $\Xi_n$, $n\geq 1$, when we want to stress the number of variables in the notation) is defined as the linear span of the functions
\begin{equation}\label{eq:Xialtdef}
	\prod_{j=1}^n D_j^{d_j} \frac{r_{j}(z_j)}{\check Q(z_j)}
\end{equation}
defined for each $d_1,\dots,d_n\geq 1$ and $r_j \in \mathbb{C}[z_j]_{<N}$.
\end{definition}

By direct inspection one can see that

\begin{lemma}\label{lem:twodefinitionsXi} Definitions~\ref{def:Xi} and~\ref{def:Xi-2} are equivalent.
\end{lemma}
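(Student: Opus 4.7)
The plan is to reduce the equivalence to a one-variable partial fractions statement. Both definitions have exactly the same structure: they span functions of a product form $\prod_{j=1}^n D_j^{d_j} f_j(z_j)$ with $d_j \geq 1$, where $f_j$ ranges over some ``seed space'' of rational functions in one variable. Let me denote by $V_1 = \mathrm{span}\{\tfrac{z}{z-p_i} : i = 1,\dots,N\}$ and $V_2 = \mathrm{span}\{\tfrac{r(z)}{\check Q(z)} : r \in \mathbb{C}[z]_{<N}\}$ the respective one-variable seed spaces. Since an element of the $n$-variable span on either side is a finite linear combination of pure tensors in the $z_j$, and since any expression of $f_j$ on one side in terms of seeds of the other side propagates through these tensors by multilinearity, it suffices to prove that
\begin{equation}
\sum_{d \geq 1} D^d V_1 \;=\; \sum_{d \geq 1} D^d V_2
\end{equation}
as subspaces of rational functions in one variable.

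The key observation is that $D^d$ annihilates constants for $d \geq 1$, so the above equality is implied by the weaker statement $V_1 + \mathbb{C} = V_2 + \mathbb{C}$. I would prove this by a direct partial fractions calculation. Since $\check Q(z) = \prod_{j=1}^N (z - p_j)$ and the $p_j$ are distinct, classical partial fractions identifies $V_2$ with $\mathrm{span}\{\tfrac{1}{z-p_i} : i=1,\dots,N\}$. On the other hand, writing $\tfrac{z}{z-p_i} = 1 + \tfrac{p_i}{z-p_i}$, we see that $V_1 \subset \mathbb{C} + V_2$; conversely, assuming all $p_i \neq 0$ (which holds in our setting because $dX/X = Q(z)\,dz/z$ has a simple \emph{pole} at $z=0$, not a zero), we recover $\tfrac{1}{z-p_i} = \tfrac{1}{p_i}\bigl(\tfrac{z}{z-p_i} - 1\bigr) \in \mathbb{C} + V_1$, so $V_2 \subset \mathbb{C} + V_1$. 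Both $V_1 + \mathbb{C}$ and $V_2 + \mathbb{C}$ are therefore the same $(N+1)$-dimensional space.

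Putting this together, for each $d \geq 1$ we have $D^d V_1 = D^d(V_1 + \mathbb{C}) = D^d(V_2 + \mathbb{C}) = D^d V_2$, and summing over $d \geq 1$ yields the one-variable equivalence. Passing back through the multilinearity argument gives the equality of the two $n$-variable spans, proving the lemma.

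I do not expect any genuine obstacle here: the statement is essentially a linear-algebraic reformulation of partial fractions. The only point requiring slight care is the verification that $0 \notin \{p_1,\dots,p_N\}$, which ensures that the constants generated by the rewriting $\tfrac{z}{z-p_i} = 1 + \tfrac{p_i}{z-p_i}$ can be inverted back; this is immediate from the structure of $dX/X$ at the origin.
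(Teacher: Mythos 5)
Your proof is correct, and it fills in exactly the "direct inspection" that the paper offers in place of an argument: the reduction to the one-variable seed spaces, the partial-fraction identification of $\mathrm{span}\{r(z)/\check Q(z)\}$ with $\mathrm{span}\{1/(z-p_i)\}$, the rewriting $z/(z-p_i)=1+p_i/(z-p_i)$ together with $p_i\neq 0$ (since $Q(0)=1$), and the fact that $D^d$ kills constants for $d\geq 1$. This matches the paper's (unwritten) reasoning, so there is nothing further to add beyond noting that the distinctness of the $p_i$, which your partial-fractions step uses, is part of the standing simplicity assumption on the zeros of $dx$ in this section.
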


The following statement is proved in many papers on topological recursion, in slightly varying formulations:

\begin{proposition}\label{prop:projXi}
	The functions $W_{g,n}$ satisfy the projection property and the linear loop equations if and only if we have $W_{g,n}\in \Xi_n$ for $2g-2+n>0$.
\end{proposition}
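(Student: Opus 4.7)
The proof proceeds by directly matching the two characterizations of $\Xi_n$ against the two properties to be checked. Since $\Sigma = \C P^1$, the projection property for $W_{g,n}$ is equivalent, as noted earlier in the paper, to $W_{g,n}$ being rational in each variable $z_j$ with poles only at $p_1,\dots,p_N$. By Proposition~\ref{prop:lle} combined with Proposition~\ref{Prop:localcoord}, the linear loop equation in variable $z_j$ amounts to the Laurent expansion of $W_{g,n}$ at each $p_i$ having odd principal part in the local coordinate $\nu_i$ determined by $\nu_i^2 = x - x(p_i)$. The plan is therefore to show that $\Xi_n$ is precisely the space of rational functions on $(\C P^1)^n$ whose restriction in each variable satisfies both conditions above (together with the appropriate normalization at $\infty$).

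For the direction $W_{g,n} \in \Xi_n \Longrightarrow$ projection property and linear loop equations, I would use Definition~\ref{def:Xi-2} and check each generator $\prod_j D_j^{d_j}(r_j(z_j)/\check Q(z_j))$ variable by variable. The factor $r_j(z_j)/\check Q(z_j)$ with $\deg r_j < N$ is rational with simple poles exactly at the $p_i$'s and is regular at infinity. The operator $D_j = Q(z_j)^{-1}z_j\partial_{z_j}$ is holomorphic away from the zeros of $Q$, which under our assumptions are precisely $p_1,\dots,p_N$, and hence iterating $D_j$ only increases the pole orders at these points. Locally near $p_i$ one has $D_j = (2\nu_i)^{-1}\partial_{\nu_i}$, which sends $\nu_i^k$ to a multiple of $\nu_i^{k-2}$ and thereby preserves parities; since $r_j/\check Q$ already has principal part proportional to $\nu_i^{-1}$, every $D_j^{d_j}(r_j/\check Q)$ has odd principal part at each $p_i$, yielding the linear loop equation.

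For the converse, I would fix all but one variable and match principal parts at each $p_i$ inductively. The explicit computation of $D^d(z/(z-p_i))$ shows that its leading term at $p_i$ is proportional to $\nu_i^{-(2d+1)}$ plus lower-order odd terms, so as $d \geq 1$ varies and $i$ ranges over $1,\dots,N$, these generators span all possible finite odd principal parts at all $p_i$ in one variable. Subtracting an appropriate finite linear combination from $W_{g,n}$ matches the principal part at every $p_i$; the remainder is then a rational function with no poles in $z_j$, hence a polynomial, which must vanish by the argument outlined next.

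The main obstacle, I expect, is the behaviour at infinity, which is controlled by the constraint $d_j \geq 1$ in the definition of $\Xi_n$. Each generator $D_j^{d_j}(r_j/\check Q)$ vanishes at infinity in $z_j$, because $z\partial_z$ applied to a function regular at $\infty$ produces one vanishing at $\infty$, and because the asymptotics of $Q^{-1}$ at $\infty$ are controlled by the natural analytic assumptions; however, the projection property and linear loop equations alone do not automatically enforce this vanishing. Closing the gap requires verifying that $W_{g,n}$, as given by Proposition~\ref{prop:FormulasWgn}, vanishes at infinity in each variable --- which is forced by the presence of at least one $D_j$ in every summand of the closed formula --- so that the polynomial remainder obtained in the inductive step is actually zero. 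A careful asymptotic analysis of $Q$ and of $D$ at $z = \infty$ under the natural analytic assumptions completes the argument.
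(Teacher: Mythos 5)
Your overall strategy is the same as the paper's: check that the generators of $\Xi_n$ satisfy both properties, observe that their principal parts at $p_1,\dots,p_N$ span all principal parts allowed by the linear loop equations, and then argue uniqueness. The first two steps are fine. The problem is in your final step, where you claim that ``the projection property and linear loop equations alone do not automatically enforce'' the vanishing of the polynomial remainder, and propose to close the gap by an asymptotic analysis at infinity of the specific $W_{g,n}$ given by the closed formula of Proposition~\ref{prop:FormulasWgn}. This is both unnecessary and logically misplaced. The projection property \emph{does} kill the remainder: if $V\in\Xi_n$ matches the principal parts of $W_{g,n}$ at every $p_i$ in a given variable, then the corresponding forms $\omega=W_{g,n}\,dX/X$ and $\omega_V=V\,dX/X$ both satisfy $\cP\,\omega=\omega$ and $\cP\,\omega_V=\omega_V$ (the latter by your first direction), so $\omega-\omega_V=\cP(\omega-\omega_V)=0$ because $\cP$ annihilates anything with no poles at the $p_i$. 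This is exactly the paper's one-line observation that ``the functions that satisfy the projection property are fully determined by their principal parts at the points $p_1,\dots,p_N$.'' By instead invoking Proposition~\ref{prop:FormulasWgn} you would also change the nature of the statement: the proposition is a general equivalence for arbitrary systems of functions $W_{g,n}$, and your converse direction would only establish it for the particular $n$-point functions of the hypergeometric tau function.

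A smaller point of care: the equivalence ``projection property $\Leftrightarrow$ poles only at $p_1,\dots,p_N$'' is stated in the paper for the \emph{forms} $\omega_{g,n}$, not for the functions $W_{g,n}$; the two differ by the factor $dX/X=Q(z)\,dz/z$, which has zeros at the $p_i$ and poles at $0$, $\infty$ and the poles of $Q$. Your parenthetical ``together with the appropriate normalization at $\infty$'' gestures at this, but if you work at the level of functions you should make the translation explicit rather than treating it as an afterthought.
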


\begin{proof} It is enough to observe that the functions $D^d (z-p_i)^{-1}$, $d\geq 1$, $i=1,\dots,N$ do satisfy the projection property and their principal parts at the points $p_1,\dots,p_N$ form a basis in the space of possible principal parts of functions satisfying the linear loop equations. In the meanwhile, the functions that satisfy the projection property are fully determined by their principal parts at  the points $p_1,\dots,p_N$.
\end{proof}

\subsubsection{Definition of the space \texorpdfstring{$\Theta$}{Theta}}
While the projection property itself is formulated in terms of $W_{g,n}$, it turns out that it is quite hard to check whether $W_{g,n}$ belongs to the space $\Xi$. Recall that $W_{g,n} = D_1\dots D_n H_{g,n}$. It turns out that one can introduce another space of functions, which we call $\Theta$, such that
$W_{g,n} \in \Xi$ if and only if $H_{g,n} \in \Theta$,
and it is much easier to check whether a function belongs to the space $\Theta$.

Specifically, let us define the space $\Theta$ as follows.
\begin{definition}\label{def:Theta}
	The space $\Theta$ (which we denote by $\Theta_n$, $n\geq 1$, when we want to stress the number of variables) is  defined as the linear span of functions of the form
		$\prod_{i=1}^n f_i(z_i)$,
where each $f_i(z_i)$
\begin{itemize}
	\item is a rational function on the Riemann sphere;
	\item has poles only at the points $p_1,\ldots,p_N$;
	\item its principal part at $p_k$, $k=1,\dots,N$, is odd with respect to the corresponding deck transformation, that is, $f_i(z_i)+f_i(\sigma_k(z_i))$ is holomorphic at $z_i\to p_k$.
\end{itemize}
\end{definition}

\begin{proposition}\label{prop:XiThetaequiv}
	Let $W=W(z_1,\ldots,z_n)$ and $H=H(z_1,\ldots,z_n)$ be two functions related by $W=D_1\dots D_n H$. Then $W\in \Xi_n$ if and only if $H\in\Theta_n$.
\end{proposition}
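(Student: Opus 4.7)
The plan is as follows. Both $\Xi_n$ and $\Theta_n$ are, by their very definitions, the linear spans of pure products $\prod_{i=1}^n f_i(z_i)$ of single-variable functions $f_i\in\Xi_1$, respectively $\Theta_1$, and $D_1\cdots D_n$ factors as $D^{\otimes n}$. By multilinearity the multivariable statement therefore reduces to the single-variable one: for $W=DH$, I must show $W\in\Xi_1\Leftrightarrow H\in\Theta_1$ (the ambiguity of an additive constant in $H$ is absorbed by $\Theta_1$, which contains all constants). The single-variable claim will be proved by combining a local analysis near each critical point $p_k$ with a global residue count on $\mathbb{C}\mathrm{P}^1$.

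For the direction $H\in\Theta_1\Rightarrow W\in\Xi_1$, near $p_k$ I introduce the local coordinate $\nu_k$ with $\nu_k^2=x-x(p_k)$, so that $D=\partial_x=\frac{1}{2\nu_k}\partial_{\nu_k}$. Since $\frac{1}{2\nu_k}\partial_{\nu_k}\,\nu_k^{2m+1}=\frac{2m+1}{2}\nu_k^{2m-1}$, this operator preserves the property of having odd principal part. Globally, the only zeros of $Q(z)=1-zy'(z)\psi'(y(z))$ are the $p_k$'s and $Q(0)=1$, so $D=Q^{-1}z\partial_z$ applied to an $H\in\Theta_1$ produces no poles outside $\{p_1,\dots,p_N\}$. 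Moreover $DH$ vanishes at $z=0$ (because $z\partial_z$ kills the value at zero and $Q(0)=1$), vanishes at $z=\infty$, and vanishes to high enough order at the other finite poles of $Q$ (one order of zero from $Q^{-1}$ is picked up at each application of $D$). These properties identify $DH$ as an element of $\Xi_1$ in the form of Definition~\ref{def:Xi-2}.

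The converse $W\in\Xi_1\Rightarrow H\in\Theta_1$ is the main obstacle. Given $W\in\Xi_1$, the candidate antiderivative is $H=\int W\,dx$; for this to be single-valued and rational with poles only at the $p_k$'s, every residue of the 1-form $W\,dx=WQ\,dz/z$ on $\mathbb{C}\mathrm{P}^1$ must vanish. At each $p_k$ the odd principal part of $W$ combined with $dx=2\nu_k\,d\nu_k$ produces only even powers of $\nu_k$, hence no residue. At $z=0$, $W(0)=0$ holds for every basis element $D^d(r/\check{Q})$ of $\Xi_1$ (since $z\partial_z$ kills the value at zero and $Q(0)=1$), so $W\,dx$ has no simple pole there. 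At each other finite pole $q$ of $Q$, the same order-counting for iterated $D$ shows that $W$ vanishes at $q$ to sufficient order that $WQ$ is regular there, eliminating the residue. The global residue theorem on $\mathbb{C}\mathrm{P}^1$ then forces the residue at $\infty$ to vanish as well. With all residues zero, $H$ is a well-defined rational function on $\mathbb{C}\mathrm{P}^1$ whose poles lie only at the $p_k$'s, and inverting the local calculation (integrating an even-powered Laurent series in $\nu_k$ against $d\nu_k$ yields an odd-powered one) shows that its principal parts there are odd; hence $H\in\Theta_1$.
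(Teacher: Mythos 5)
Your argument takes a genuinely different route from the paper's (which is a one-line basis argument: $1$ and $D^\ell\tfrac{z}{z-p_k}$, $\ell\ge 0$, form a basis of $\Theta_1$, so $D$ maps this basis onto the spanning set defining $\Xi_1$), and one half of it essentially works: since $\Theta_1$ \emph{is} defined by analytic conditions (rational, poles only at the $p_k$, odd principal parts), your residue computation for $H=\int W\,dx$ does give $W\in\Xi_1\Rightarrow H\in\Theta_1$, up to the small point that you need $W\,dx$ to be actually regular (not merely residue-free) at $\infty$ and at the poles of $Q$, so that the antiderivative acquires no poles there; vanishing of the residue at $\infty$ alone does not exclude a higher-order pole of $W\,dx$, hence a pole of $H$, at $\infty$.

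The converse direction has a genuine gap. $\Xi_1$ is not defined by analytic conditions: by Definitions~\ref{def:Xi} and~\ref{def:Xi-2} it is the linear span of the specific functions $D^d\tfrac{z}{z-p_k}$, $d\ge 1$. Establishing that $DH$ is rational, has poles only at the $p_k$ with odd principal parts, and vanishes at $0$, at $\infty$, and to suitable order at the poles of $Q$ does not by itself place $DH$ in that span --- your sentence ``these properties identify $DH$ as an element of $\Xi_1$'' is exactly the unproved step. The conditions you name explicitly are in fact not sufficient: for $N\ge 2$ the function $\tfrac{z}{z-p_1}-\tfrac{z}{z-p_2}$ is rational, has only simple (hence odd) poles at $p_1,p_2$, and vanishes at $0$ and at $\infty$, yet it lies outside $\Xi_1$, because every nonzero element of $\Xi_1$ has a pole of order $\ge 3$ in $\nu_k$ at some $p_k$ (triangularity in the pole orders of the $D^d\tfrac{z}{z-p_k}$). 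Closing the argument requires proving that your full list of conditions characterizes $\Xi_1$, and that is precisely the content of the basis statement on which the paper's proof rests; it is obtained by the same triangularity/induction on maximal pole order, not by residue calculus. A secondary issue: your reduction of the ``only if'' direction to one variable treats the ambiguity in $H$ as an additive constant, but for $n\ge 2$ the kernel of $D_1\cdots D_n$ contains every function that is constant in just one of its variables, so pinning down $H$ needs the additional information that $H_{g,n}$ is a power series vanishing whenever some $X_i=0$.
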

\begin{proof}
	%
	%
	Note that the functions $1$ and  $D_j^{\ell} \frac{z_j}{z_j-p_k}$, $k=1,\dots,N$, $\ell\geq 0$, form a basis of the space of rational functions with no poles in points other than $p_1,\ldots,p_N$ and with odd principal part at these points.
	%
	%
\end{proof}
\begin{proposition}\label{prop:projTheta}	
	The functions $W_{g,n}$ satisfy the projection property and the linear loop equations if and only if we have $H_{g,n}\in \Theta_n$ for $2g-2+n>0$.
\end{proposition}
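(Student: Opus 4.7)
The plan is to obtain Proposition~\ref{prop:projTheta} as an essentially immediate corollary of the two preceding propositions, by composing the equivalences they supply. Concretely, I will chain together Proposition~\ref{prop:projXi}, which characterizes the property ``projection property + linear loop equations'' for $W_{g,n}$ as membership in $\Xi_n$, with Proposition~\ref{prop:XiThetaequiv}, which identifies the space $\Xi_n$ as the image of $\Theta_n$ under $D_1\cdots D_n$. The endpoints of this chain are precisely the two sides of the equivalence to be shown.

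The first step is to observe that for $2g-2+n>0$ the delta-term in the definition of $W_{g,n}$ (see \eqref{eq:omegagn}) is absent, because that term only occurs for $(g,n)=(0,2)$, and $(0,2)$ is excluded by the hypothesis. Hence in the entire relevant range we have the clean relation
\begin{equation}
W_{g,n}(z_1,\dots,z_n)=D_1\cdots D_n\,H_{g,n}(z_1,\dots,z_n),
\end{equation}
so Proposition~\ref{prop:XiThetaequiv} applies to the pair $(W_{g,n},H_{g,n})$ and gives the equivalence $W_{g,n}\in\Xi_n\iff H_{g,n}\in\Theta_n$. Then Proposition~\ref{prop:projXi} supplies the equivalence between $W_{g,n}\in\Xi_n$ (for all such $(g,n)$) and the conjunction of the projection property with the linear loop equations. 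Assembling both equivalences yields the statement.

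There is essentially no new obstacle at this stage: the conceptual work has already been absorbed into Propositions~\ref{prop:projXi} and~\ref{prop:XiThetaequiv}, together with Lemma~\ref{lem:twodefinitionsXi} which reconciles the two descriptions of $\Xi_n$. If anything, the only mild point worth flagging in the write-up is that one should double-check that the equivalence $W_{g,n}\in\Xi_n\iff H_{g,n}\in\Theta_n$ is applied pointwise in $(g,n)$ rather than globally, so that the exclusion of the unstable case $(0,2)$ affects neither direction. Thus the proof reduces to a two-line citation of the two prior propositions applied to each stable $(g,n)$.
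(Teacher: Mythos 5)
Your proposal is correct and matches the paper's own proof, which simply states that the proposition follows directly from Propositions~\ref{prop:projXi} and~\ref{prop:XiThetaequiv}. The extra remark about the $(0,2)$ delta-term being irrelevant in the stable range is a harmless (and accurate) elaboration.
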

\begin{proof}
	This proposition directly follows from Propositions \ref{prop:projXi} and \ref{prop:XiThetaequiv}.
\end{proof}

\subsection{The main statements} \label{sec:TheoremsProjectionProperty}

Analogous to the paper~\cite{BDKS20} in our $\hbar^2$-deformed case we have:
\begin{proposition}	\label{prop:Hprop}
	For $n\geq 3$:
\begin{align}	
	\label{eq:HgnU}
	\Hc_{g,n}&= 
	[\hbar^{2g-2+n}] \sum\limits_{\gamma\in\Gamma_n} \prod\limits_{v_i\in\mathcal{I}_\gamma} \overline{U}_i \prod\limits_{\{v_i,v_k\}\in E_\gamma\setminus\mathcal K_\gamma}w_{i,k}
	\\&\quad\times \nonumber
	\prod\limits_{\{v_i,v_k\}\in\mathcal{K}_\gamma}\left( \overline{U}_i w_{i,k} +  \hbar u_k \cS(u_k\hbar Q_kD_k)\frac{z_i}{z_k-z_i}\right)
	+\mathrm{const},
\end{align}
where $\Gamma_n$ is the set of simple graphs on $n$ vertices $v_1,\ldots,v_n$, $E_\gamma$ is the set of edges of a graph $\gamma$, $\mathcal{I}_\gamma$ is the subset of vertices of valency $\geq 2$, and $\mathcal{K}_\gamma$ is the subset of edges with one end $v_i$ of valency $1$ and another end $v_k$, and where 
\begin{align}	
	\overline U_i\, f &=  \sum_{r=0}^\infty\sum_{j=1}^\infty D_i^{j-1}\left(\frac{L^j_{r,i}}{Q_i}
	[u_i^r] \frac{e^{u_i(\cS(u_i\,\hbar\,Q_i\,D_i)\hat{y}(z_i)-y(z_i))}}{u_i\hbar\,\cS(u_i\,\hbar)}f\right),\\
	w_{k,\ell}&:=e^{\hbar^2u_ku_\ell\cS(u_k\hbar\,Q_k D_k)\cS(u_\ell\hbar\,Q_\ell D_\ell)
		\frac{z_k z_\ell}{(z_k-z_\ell)^2}}-1, \\
	L^j_{r,i} &:= \left.\left([v^j]\left(\partial_y+v\psi'(y)\right)^r e^{v\left(\frac{\cS(v\hbar\partial_y)}{\cS(\hbar\partial_y)}\hat{\psi}(y)-\psi(y)\right)}\right)\right|_{y=y(z_i)}.
\end{align}
For $n=2$ and $g>0$ we have:
\begin{align}
	H_{g,2}& =[\hbar^{2g}]\Bigg(\overline{U}_1\overline{U}_2 w_{1,2}
	+ \overline{U}_1\left(\hbar u_1 \cS(u_1\hbar Q_1D_1)\frac{z_2}{z_1-z_2}\right)
	.\\ \notag
	& \qquad +\overline{U}_2\left( \hbar u_2 \cS(u_2\hbar Q_2D_2)\frac{z_1}{z_2-z_1}\right)\Bigg)
	+\mathrm{const}.
\end{align}
 For $n=1$ and $g>0$ we have:
\begin{align} \label{eq:(g,1)}
	H_{g,1}& =[\hbar^{2g}]\Bigg(\hbar\overline{U}_1 1 + \sum_{j=1}^\infty D_1^{j-1}L_{0,1}^{j+1}\; D_1y(z_1) +\int_{0}^{z_1}\dfrac{\hat{y}(z)-y(z)}{z}dz
	\\ \notag & \quad
	+\int_{0}^{z_1}\dfrac{Q(z)}{z}\left.\left(\dfrac{1}{\cS(\hbar\partial_y)}\hat{\psi}(y)-\psi(y)\right)\right|_{y=y(z)}Dy(z) dz\Bigg)
	+\mathrm{const}.
\end{align}
%
In each case the extra constant can be determined from the condition that $H_{g,n}$ vanishes at zero. These constants are not important for the argument below and can be ignored.
\end{proposition}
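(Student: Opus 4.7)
The plan is to obtain the closed formula for $H_{g,n}$ by inverting the operators $D_1,\dots,D_n$ in the closed formula for $W_{g,n}$ from Proposition~\ref{prop:FormulasWgn}: since $W_{g,n} = D_1\cdots D_n H_{g,n}$ for $(g,n)\ne(0,2)$, we have $H_{g,n} = D_1^{-1}\cdots D_n^{-1}W_{g,n}$ modulo additive constants fixed by the vanishing of $H_{g,n}$ at the origin. The algebraic input that makes this inversion manageable is the decomposition $U_i = D_i\,\overline{U}_i + U_i^{[0]}$, where $U_i^{[0]}$ is the $j=0$ term of the sum~\eqref{eq:Uihbar}; since $L^0_{r,i}=\delta_{r,0}$ it simplifies to
\begin{equation*}
U_i^{[0]} f = \frac{1}{Q_i}\,[u_i^0]\,\frac{e^{u_i(\cS(u_i\hbar Q_i D_i)\hat y(z_i)-y(z_i))}}{u_i\hbar\,\cS(u_i\hbar)}\,f(u_i,z_i).
\end{equation*}
Applying $D_i^{-1}$ to each $U_i$ in~\eqref{eq:mainprop} thus produces $\overline{U}_i$ plus a correction $D_i^{-1}U_i^{[0]}$, and I expand the resulting sum over graphs $\gamma\in\Gamma_n$ together with the choice, at each vertex, of whether to apply $\overline{U}_i$ or the correction.

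The combinatorics collapses thanks to the following vanishing lemma: for a vertex $v_i$ of valency $d\geq 2$ in $\gamma$, the correction $U_i^{[0]}\bigl(\prod_{\{v_i,v_k\}\in E_\gamma}w_{i,k}\bigr)$ vanishes. Indeed $w_{i,k} = \hbar^2 u_i u_k \cS(u_i\hbar Q_iD_i)\cS(u_k\hbar Q_kD_k)\tfrac{z_iz_k}{(z_i-z_k)^2}+O(u_i^2u_k^2)$, so the product of $w$-factors around $v_i$ is of order $u_i^d$; multiplying by $(u_i\hbar\,\cS(u_i\hbar))^{-1}e^{u_i(\cdots)} = (u_i\hbar)^{-1}(1+O(u_i))$ leaves an expression of order $u_i^{d-1}$, whose $[u_i^0]$-coefficient vanishes when $d\geq 2$. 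Hence the only surviving corrections occur at valency-$1$ vertices, i.e., along edges in $\mathcal{K}_\gamma$. For such an edge $\{v_i,v_k\}$ a direct expansion using $\cS(0)=1$ gives
\begin{equation*}
U_i^{[0]} w_{i,k} = \frac{\hbar u_k\,\cS(u_k\hbar Q_kD_k)}{Q_i}\cdot\frac{z_iz_k}{(z_i-z_k)^2},
\end{equation*}
and integrating via $D_i^{-1}=\int_0^{z_i}(\cdot)\,Q(z)\,dz/z$ yields $\hbar u_k\,\cS(u_k\hbar Q_kD_k)\,\tfrac{z_i}{z_k-z_i}$, which is exactly the supplementary term in the $\mathcal{K}_\gamma$-factor of~\eqref{eq:HgnU}. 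This proves the formula for $n\geq 3$.

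For $n=2$ the only graph is the single edge with both vertices of valency $1$, and the same analysis produces $\overline{U}_1\overline{U}_2w_{1,2}$ together with the two $\mathcal{K}_\gamma$-corrections; one must only check that the ``double correction'' $D_1^{-1}D_2^{-1}U_1^{[0]}U_2^{[0]}w_{1,2}$ collapses to the $\hbar$-independent expression $\tfrac{1}{Q_1Q_2}\tfrac{z_1z_2}{(z_1-z_2)^2}$ (up to integration constants) and so is killed by the $[\hbar^{2g}]$-projection when $g>0$. For $n=1$ one processes~\eqref{eq:Wg1} directly: the splitting $\hbar U_1 = \hbar D_1\overline{U}_1 + \hbar U_1^{[0]}$ combined with the computation $\hbar U_1^{[0]}1 = (\hat y(z_1)-y(z_1))/Q_1$ produces $\hbar\overline{U}_1 1$ plus the integral $\int_0^{z_1}\tfrac{\hat y(z)-y(z)}{z}\,dz$, while the sum $\sum_{j\geq 0}D_1^j L_{0,1}^{j+1}D_1y(z_1)$ splits into the $j\geq 1$ part, which integrates cleanly to $\sum_{j\geq 1}D_1^{j-1}L_{0,1}^{j+1}D_1y(z_1)$, and the $j=0$ term, for which one recognises $L^1_{0,1} = \bigl(\cS(\hbar\partial_y)^{-1}\hat\psi(y)-\psi(y)\bigr)|_{y=y(z_1)}$ and integrates using $Dy = Q^{-1}z\,y'$ to recover the remaining integral in~\eqref{eq:(g,1)}. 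The main obstacle is not any individual calculation but the combinatorial bookkeeping of which $(j=0)$-corrections survive at which vertex of which graph, and this bookkeeping is resolved uniformly by the valency-based vanishing argument.
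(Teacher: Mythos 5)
Your derivation is correct, and it is essentially the argument the paper leaves implicit: the paper gives no written proof of this proposition, asserting it ``analogous to [BDKS20]'' (where the analogous $H_{g,n}$-formulas are extracted directly from the vacuum-expectation-value representation) and adding only the remark that applying $D_1\cdots D_n$ to \eqref{eq:HgnU}--\eqref{eq:(g,1)} recovers the $W_{g,n}$-formulas of Proposition~\ref{prop:FormulasWgn}. You run that consistency check in the constructive direction, inverting $D_1\cdots D_n$, and the two ingredients you supply are exactly what is needed to make this work: the splitting $U_i=D_i\overline U_i+U_i^{[0]}$ with $L^0_{r,i}=\delta_{r,0}$, and the valency-based vanishing of $U_i^{[0]}$ at vertices of valency $d\ge 2$ (the incident $w$-factors are of order $u_i^{d}$ while the prefactor contributes only a simple pole in $u_i$, so $[u_i^0]$ kills the term). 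The individual evaluations all check out: $D_i^{-1}U_i^{[0]}w_{i,k}=\hbar u_k\,\cS(u_k\hbar Q_kD_k)\bigl(\tfrac{z_k}{z_k-z_i}-1\bigr)=\hbar u_k\,\cS(u_k\hbar Q_kD_k)\tfrac{z_i}{z_k-z_i}$; the $n=2$ double correction equals $\tfrac{1}{Q_1Q_2}\tfrac{z_1z_2}{(z_1-z_2)^2}$, which is $\hbar$-independent and hence annihilated by $[\hbar^{2g}]$ for $g>0$; and $\hbar U_1^{[0]}1=(\hat y(z_1)-y(z_1))/Q_1$ and $L^1_{0,1}=\bigl(\cS(\hbar\partial_y)^{-1}\hat\psi(y)-\psi(y)\bigr)|_{y=y(z_1)}$ (the latter because $\cS(v\hbar\partial_y)$ is even in $v$) reproduce the two integrals in \eqref{eq:(g,1)}. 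The one point worth stating explicitly is that $D_i^{-1}D_i=\mathrm{id}-(\text{evaluation at }z_i=0)$ on the $j=1$ terms, so the discrepancies are evaluations at $z_i=0$; for $n\ge 2$ these vanish because every $w_{i,k}$ (and hence the whole expression the $j=1$ operator multiplies) vanishes at $z_i=0$, and for $n=1$ one gets a genuine number, which is what the ``$+\mathrm{const}$'' in the statement absorbs. With that observation included, your proof is complete modulo Proposition~\ref{prop:FormulasWgn}, which is exactly the input the paper also takes for granted.
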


\begin{remark}
	It is easy to see that formulas \eqref{eq:HgnU}--\eqref{eq:(g,1)} of Proposition~\ref{prop:Hprop} are consistent with formulas \eqref{eq:mainprop}--\eqref{eq:Wg1} of Proposition~\ref{prop:FormulasWgn}; i.e. that the latter ones can be obtained from the former by an application of the $D_1\cdots D_n$ operator.
\end{remark}

\subsubsection{Statements}

\begin{theorem} \label{thm:Xi-Family1}
	Let $\hat\psi =  \cS(\hbar \partial_y)P_1(y)+\log P_2(y) -\log P_3(y)$ and $\hat y = R_1(z)/R_2(z)$ (Family I from Table \ref{tab:families}). Consider the functions $H_{g,n}$ given by \eqref{eq:HgnU}-\eqref{eq:(g,1)} for this choice of $(\hat \psi,\hat y)$. Under the assumption that the polynomials $P_1,P_2,P_3,R_1,R_2$ are in general position, we have $H_{g,n}\in \Theta$, $2g-2+n>0$.
\end{theorem}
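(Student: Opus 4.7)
The approach is to verify $H_{g,n}\in\Theta_n$ directly from the closed formula of Proposition~\ref{prop:Hprop}. Since the linear loop equations hold by Theorem~\ref{th:linloop}, Proposition~\ref{prop:projTheta} combined with the characterization of the projection property on $\mathbb{C}P^1$ reduces the claim to showing that each $H_{g,n}$ is a rational function of every $z_i$ with no singularities (neither poles nor logarithms) outside the set $\{p_1,\dots,p_N\}$ of zeros of $Q$. Rationality is immediate because, in Family~I, every ingredient of Proposition~\ref{prop:Hprop} — the factors $L^j_{r,i}/Q_i$, the operators $D_i=Q_i^{-1}z_i\partial_{z_i}$, the Cauchy kernels inside the $w_{k,\ell}$, and the elementary integrals in \eqref{eq:(g,1)} — produces rational functions after extracting a fixed coefficient in $\hbar$. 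The task therefore reduces to excluding spurious poles in each $z_i$.

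The loci to inspect are (i) the zeros of $R_2(z_i)$, where $y(z_i)$ is singular; (ii) the preimages under $y$ of the zeros of $P_2$ or $P_3$, where $\psi'(y(z_i))$ and $\log(P_2/P_3)(y(z_i))$ become singular; and (iii) the points $z_i=0,\infty$. The central observation, which explains the specific $\hbar^2$-deformation $\hat\psi=\cS(\hbar\partial_y)P_1+\log(P_2/P_3)$, is the identity
\[
\frac{\cS(v\hbar\partial_y)}{\cS(\hbar\partial_y)}\hat\psi(y)-\psi(y)=\left(\cS(v\hbar\partial_y)-1\right)P_1(y)+\left(\frac{\cS(v\hbar\partial_y)}{\cS(\hbar\partial_y)}-1\right)\log\frac{P_2(y)}{P_3(y)}.
\]
The first summand is a polynomial in $y$ (hence regular everywhere), and the prefactor of the second summand is divisible by $v^2-1$ because $\cS$ is even. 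Combined with the finite-difference identity $\zeta(\hbar\partial_y)f(y)=f(y+\hbar/2)-f(y-\hbar/2)$, which turns $\log P_j(y)$ into $\log[P_j(y+\hbar/2)/P_j(y-\hbar/2)]$, this structure allows the logarithmic contributions of $\log P_2$ and $\log P_3$ that enter the definition of $L^j_{r,i}$ to reassemble, after summation in $r$ and $j$, into rational functions of $y$ whose poles at zeros of $P_2,P_3$ are cancelled by the matching derivatives of $\psi'=P_1'+P_2'/P_2-P_3'/P_3$ produced by the operator $(\partial_y+v\psi'(y))^r$. For loci of type (i), the equality $\hat y=y$ makes the exponential factor $e^{u_i(\cS(u_i\hbar Q_iD_i)\hat y-y)}=e^{u_i(\cS(u_i\hbar Q_iD_i)-1)y}$ start at order $\hbar^2$ and depend only on finitely many derivatives of $y$ after extracting $[\hbar^{2g-2+n}]$, after which a direct pole-order count shows that the resulting contribution is regular at the zeros of $R_2$. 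Type (iii) is dispatched by degree considerations using the general-position hypothesis on $P_1,P_2,P_3,R_1,R_2$.

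The main obstacle I anticipate is not any single local cancellation but the global combinatorial bookkeeping required to assemble them. The formula of Proposition~\ref{prop:Hprop} is a sum over all connected simple graphs on $n$ labeled vertices, each summand itself an infinite sum over $j,r$; cancellations must be verified after all these sums are collected, and they will typically not occur term-by-term. I expect the argument to be organized so that the cancellations of type (ii) follow from a single algebraic identity governing the action of $L^j_{r,i}$ on functions with logarithmic singularities in $y$, while the cancellations of type (i) propagate through the factors $w_{k,\ell}$ by an induction on $2g-2+n$ based on the rationality of $y(z)$ and $dx$.
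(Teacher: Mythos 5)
Your setup is right---reduce to showing that $H_{g,n}$ has no poles outside $p_1,\dots,p_N$, and inspect the zeroes of $R_2$, the preimages of the zeroes of $P_2,P_3$, and $z=\infty$---and you correctly sense that the specific form $\hat\psi=\cS(\hbar\partial_y)P_1+\log(P_2/P_3)$ is what makes the cancellation work. But the core mechanism at the zeroes of $P_2,P_3$ is missing. Your claim that the poles there are ``cancelled by the matching derivatives of $\psi'$ produced by $(\partial_y+v\psi'(y))^r$'' is not how the cancellation happens, and the divisibility by $v^2-1$ that you extract from the evenness of $\cS$ is far too weak. The actual argument (Lemma~\ref{lem:cvy}) is that the coefficient of $\hbar^{2k}$ in $e^{v(\cS(v\hbar\partial_y)/\cS(\hbar\partial_y)-1)\log(y-A)}$ vanishes at \emph{every} integer $v\in\{-1,0,1,\dots,2k-1\}$ --- for $v=m\ge 2$ this comes from the explicit product $\prod_{i=0}^{m-1}\bigl(y-A+\tfrac{2i-m+1}{2}\hbar\bigr)/(y-A)$ being a polynomial in $\hbar$ of degree $\le m$ --- so the polar part of order $l$ at $y=A$ carries a falling-factorial prefactor $(v-1)(v-2)\cdots(v-l+1)$ (the operator $(\partial_y+v\psi')^r$ merely extends this product, sending $(y-A)^{-2k}$ to $(v-2k)(y-A)^{-2k-1}+\cdots$). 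The cancellation then occurs only after summing over $j$ and replacing $v$ by $D_1$: at a zero $B$ of $\Pt_2$ the function $Q_1^{-1}$ has a simple zero, so $D_1=(-(z_1-B)+O((z_1-B)^2))\partial_{z_1}$, and each factor $(D_1-m)$ annihilates the leading term of $(z_1-B)^{-m}$; hence $(D_1-1)\cdots(D_1-l+1)(z_1-B)^{-(l-1)}$ is regular. Without this chain --- integer vanishing, falling factorials, and the identification $v\mapsto D_1$ --- the proof does not close.

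Two further points. First, the global ``combinatorial bookkeeping'' and the induction on $2g-2+n$ that you anticipate are unnecessary: the paper observes that the edge factors $w_{k,\ell}$ and the terms $\hbar u_k\cS(u_k\hbar Q_kD_k)\tfrac{z_i}{z_k-z_i}$ contribute no extra poles (diagonal poles cancel by the results of~\cite{BDKS20}, and these factors are regular at infinity), so the entire analysis reduces to the singularities of the single operator $\overline U_1$, i.e.\ to the case $n=1$. Second, the $n=1$ formula~\eqref{eq:(g,1)} contains the extra terms $\tau^{(2)}_g$ and $\tau^{(3)}_g$ (including the integral term built from $\tfrac{1}{\cS(\hbar\partial_y)}\hat\psi-\psi$), whose regularity must be checked separately and whose cancellation against each other at the zeroes of $\Pt_2,\Pt_3$ again uses the same $D_1$-falling-factorial mechanism; your proposal does not address these terms at all. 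The analysis at the zeroes of $R_2$ and at $z=\infty$ that you sketch (pole-order counting using $\hat y=y$ and degree considerations) is essentially the right idea for those loci.
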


\begin{theorem}  \label{thm:Xi-Family2}
	Let $\hat\psi = \alpha y$ and $\hat y = R_1(z)/R_2(z)+
	\cS(\hbar z\partial_z)^{-1} (\log R_3(z)- \log R_4(z))$ (Family II from Table \ref{tab:families}). Consider the functions $H_{g,n}$ given by \eqref{eq:HgnU}-\eqref{eq:(g,1)} for this choice of $(\hat \psi,\hat y)$. Under the assumption that the polynomials $R_1,R_2,R_3,R_4$ are in general position, we have $H_{g,n}\in \Theta$, $2g-2+n>0$.
\end{theorem}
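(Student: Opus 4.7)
The plan is to apply Proposition~\ref{prop:projXi}: combined with the already-established linear loop equations of Theorem~\ref{th:linloop}, it suffices to prove that in each variable $z_i$ the function $W_{g,n}(z_1,\dots,z_n)$ is rational with poles only at the zeros $p_1,\dots,p_N$ of $Q$, and in particular has no pole at $z_i=\infty$, no pole at any zero of $R_2,R_3,R_4$, and no pole at $z_i=0$. The strategy is to start from the closed formula of Proposition~\ref{prop:FormulasWgn} specialised to Family~II and perform a pole-by-pole analysis.

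First I would exploit the strong simplifications coming from the linearity $\psi(y)=\hat\psi(y)=\alpha y$. The factor $\frac{\cS(v\hbar\partial_y)}{\cS(\hbar\partial_y)}\hat\psi-\psi$ vanishes identically, and $(\partial_y+v\alpha)^r\cdot 1=(v\alpha)^r$, so $L^j_{r,i}=\alpha^r\delta_{j,r}$. Using $Q_iD_i=z_i\partial_{z_i}$, the operator $U_i$ collapses to
\begin{equation*}
U_i f=\sum_{r=0}^\infty D_i^r\Bigl(\tfrac{\alpha^r}{Q_i}\,[u_i^r]\,\tfrac{e^{u_i(\cS(u_i\hbar\,z_i\partial_{z_i})\hat y(z_i)-y(z_i))}}{u_i\hbar\,\cS(u_i\hbar)}\,f(u_i,z_i)\Bigr).
\end{equation*}
Next I would establish the key rationality claim: the exponent $\cS(u\hbar z\partial_z)\hat y(z)-y(z)$ is, at every positive order in $\hbar$, a rational function of $z$ with possible poles only at the zeros of $R_2,R_3,R_4$. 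The logarithmic contributions enter through the operator $\bigl[\cS(u\hbar z\partial_z)\cS(\hbar z\partial_z)^{-1}-1\bigr]\log(R_3/R_4)$; since $\cS(x)=1+x^2/24+\cdots$, this bracket is $O(\hbar^2)$ and its action on $\log R_k$ produces rational functions because $z\partial_z\log R_k=zR_k'/R_k$ is rational. This is exactly the purpose of the $\cS(\hbar z\partial_z)^{-1}$ placed in front of $\log(R_3/R_4)$ in $\hat y$.

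With these simplifications, the remaining task is to rule out the spurious poles. For each candidate pole $a\in\{\text{zeros of }R_2,R_3,R_4\}\cup\{0,\infty\}$ I would isolate the $z_i$-behaviour near $a$ of both the exponent $u_i(\cS(u_i\hbar z_i\partial_{z_i})\hat y-y)$ and the prefactor $1/Q_i$, and show that after extracting $[u_i^r]$ and taking $\hbar$-coefficients, the singular contributions cancel. Poles at zeros of $R_3,R_4$ are removed by the cancellation built into $\hat y$; the singularities of $1/Q_i$ and of $R_1/R_2$ at zeros of $R_2$ must be matched against each other using $Q=1-\alpha z y'$; regularity at $z_i=0$ follows from $\hat y(0)=y(0)=0$ together with the explicit integral terms in \eqref{eq:(g,1)}; and regularity at $z_i=\infty$ follows from degree bounds on $R_1/R_2$ and on $z\partial_z\log R_k$ together with the $D_i^r$-structure.

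The main obstacle will be making this pole analysis uniform across all summands of~\eqref{eq:mainprop}: the products $\prod_{\{v_k,v_\ell\}\in E_\gamma}w_{k,\ell}$ over graphs $\gamma\in\Gamma_n$ couple variables non-trivially through the bi-local factors $z_k z_\ell/(z_k-z_\ell)^2$, and the outer $D_i^r$-derivatives interact with the inner $[u_i^r]$-extractions. The critical step, which is the heart of the technical work, is to verify that the cancellation of poles in a single variable $z_i$ near each point $a$ is robust enough to survive these couplings; this requires careful residue manipulations together with identities relating the $\cS$-deformed and undeformed versions of $\hat y$ and $y$ under the formal $u$- and $\hbar$-expansions.
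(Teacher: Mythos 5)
Your overall strategy coincides with the paper's: specialize the closed formulas to Family~II using $L^j_{r,i}=\alpha^r\delta_{j,r}$ so that the sum over $j$ collapses, observe that the $\hbar^{>0}$ coefficients of the deformed data are rational, and then rule out spurious poles at the zeros of $R_2,R_3,R_4$ and at $z=\infty$. One remark on emphasis: the multi-variable couplings through $w_{k,\ell}$, which you single out as the main obstacle, are in fact harmless --- the diagonal poles cancel and these factors are regular at infinity, which is precisely why the paper reduces the whole analysis to the $n=1$ case at the start of Section~\ref{sec:ProjPropProofs}.

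The genuine gap is at the step you yourself flag as the heart of the matter: the cancellation of the poles at the zeros of $R_3$ and $R_4$. You assert it is ``built into $\hat y$'', but the bracket $\bigl(\tfrac{\cS(u\hbar z\partial_z)}{\cS(\hbar z\partial_z)}-1\bigr)\log(R_3/R_4)$, while rational at each order in $\hbar$, genuinely has poles of arbitrarily high order at those zeros, so rationality alone removes nothing. The mechanism the paper uses is Lemma~\ref{lem:cuzdz}: in the $\hbar$-expansion of $\exp\bigl(u\bigl(\tfrac{\cS(u\hbar z\partial_z)}{\cS(\hbar z\partial_z)}-1\bigr)\log(z-A)\bigr)$ the coefficient of $(z-A)^{-l}$ carries the factor $(u+1)u(u-1)\cdots(u-l+1)$, proved by evaluating at integer $u=m$ where the operator becomes a finite product of shifts. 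After the $[u_1^r]$-extraction and the resummation over $r$, the variable $u_1$ is effectively replaced by the operator $\alpha D_1$; since $Q$ has a simple pole at a zero $B$ of $R_3$, one has $\alpha D_1=\bigl(-(z_1-B)+O((z_1-B)^2)\bigr)\partial_{z_1}$ there, so each factor $(\alpha D_1-(l-1))$ lowers the pole order of $(z_1-B)^{-(l-1)}$ and the entire principal part is annihilated (Lemma~\ref{lem:R3-Fam2}). Note also that $\sigma^{(1)}_g$ and the extra term $\sigma^{(2)}_g$ coming from $\int_0^{z_1}\frac{\hat y-y}{z}\,dz$ in \eqref{eq:(g,1)} are separately singular at these points and only their sum is regular, so the terms must be combined before the analysis. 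Without identifying this factorization-plus-local-form-of-$D_1$ mechanism, the ``careful residue manipulations'' you defer to do not close the proof.
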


Proofs of these theorems is given in Sections~\ref{sec:ProofProjFamily1} and \ref{sec:ProofProjFamily2}. The precise meaning of the assumption of generality (in general terms explained in Remark~\ref{rem:generality}) for the data $P_1,P_2,P_3,R_1,R_2$ in Theorem~\ref{thm:Xi-Family1} and $R_1,R_2,R_3,R_4$ in Theorem~\ref{thm:Xi-Family2} is also specified there. Taking into account Proposition \ref{prop:projTheta}, we immediately get the following corollary:
\begin{corollary} The functions
	 $W_{g,n}=D_1\dots D_n H_{g,n}$ for $(\hat \psi,\hat y)$ specified in Theorems \ref{thm:Xi-Family1} and~\ref{thm:Xi-Family2} satisfy the projection property together with the linear loop equations.
\end{corollary}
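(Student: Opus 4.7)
The plan is to assemble the corollary as an immediate consequence of Theorems~\ref{thm:Xi-Family1} and~\ref{thm:Xi-Family2}, together with the characterization of the ``projection property plus linear loop equations'' package in terms of membership in the function spaces $\Xi_n$ and $\Theta_n$. All the genuine analytic work has been pushed into the two preceding theorems; the corollary itself is purely a book-keeping step.

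First I would dispatch Family~I. Theorem~\ref{thm:Xi-Family1} asserts that, under the generality assumption on $P_1,P_2,P_3,R_1,R_2$, the $n$-point functions $H_{g,n}$ corresponding to $\hat\psi=\mathcal{S}(\hbar\partial_y)P_1(y)+\log P_2(y)-\log P_3(y)$ and $\hat y=R_1/R_2$ lie in $\Theta_n$ for every stable $(g,n)$. I would then invoke Proposition~\ref{prop:projTheta}, whose ``if'' direction says precisely that $H_{g,n}\in\Theta_n$ for all $(g,n)$ with $2g-2+n>0$ implies that $W_{g,n}=D_1\cdots D_n H_{g,n}$ satisfies both the projection property and the linear loop equations. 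This immediately handles Family~I.

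For Family~II the argument is entirely parallel. Theorem~\ref{thm:Xi-Family2} gives $W_{g,n}\in\Xi_n$ for $\hat\psi=\alpha y$ and $\hat y=R_1/R_2+\mathcal{S}(\hbar z\partial_z)^{-1}(\log R_3-\log R_4)$ under generality of $R_1,R_2,R_3,R_4$. Proposition~\ref{prop:projXi} then supplies directly the desired conclusion. The mild asymmetry between the two theorems (one stated on the level of $H_{g,n}$, the other on the level of $W_{g,n}$) is harmless: by Lemma~\ref{lem:twodefinitionsXi} and Proposition~\ref{prop:XiThetaequiv}, the conditions $H_{g,n}\in\Theta_n$ and $W_{g,n}\in\Xi_n$ are interchangeable via $W_{g,n}=D_1\cdots D_n H_{g,n}$, so either formulation fuels the same application of Propositions~\ref{prop:projXi}–\ref{prop:projTheta}.

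In short, there is no obstacle at the level of the corollary; the substantive content has been relocated to the upstream theorems, whose proofs, occupying Sections~\ref{sec:ProofProjFamily1} and~\ref{sec:ProofProjFamily2}, are where the delicate pole-cancellation analysis lives. The only things one must verify here are bureaucratic: that the stability range $2g-2+n>0$ matches in all three statements, and that the $\hbar^2$-deformations named in the corollary coincide with those used in the two theorems. Both checks are direct.
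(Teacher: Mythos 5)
Your argument is correct and matches the paper's own derivation, which simply cites Proposition~\ref{prop:projTheta} (equivalently Proposition~\ref{prop:projXi} via Proposition~\ref{prop:XiThetaequiv}) to pass from the conclusions of Theorems~\ref{thm:Xi-Family1} and~\ref{thm:Xi-Family2} to the projection property and linear loop equations. Your additional remark about reconciling the $\Theta_n$ versus $\Xi_n$ formulations of the two theorems is exactly the right bookkeeping and is implicit in the paper's one-line proof.
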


\subsection{Quasi-polynomiality: historical remarks} \label{sec:PreviousQuasi}
In this Section we make a short overview of the previously known cases of quasi-polynomiality or, equivalently, the cases when the corresponding functions $W_{g,n}$ were known to belong to $\Xi$ in the literature.

\subsubsection{Quasi-polynomiality} One of the ways to check whether $W_{g,n}\in\Xi$, $2g-2+n>0$, is to consider the expansion of $W_{g,n}$ in $X_1,\dots,X_n$ at $X_1=\cdots=X_n=0$ and to check whether it can be represented as
\begin{equation}
D_1\cdots D_n\sum_{i_1,\dots,i_n=1}^N A_{g;i_1,\dots,i_n}(D_1,\dots,D_n) \prod_{j=1}^n \xi^{i_j}(X_j),	
\end{equation}
where $A_{g;i_1,\dots,i_n}\in\mathbb{C}[D_1,\dots,D_n]$ are certain polynomials and $\xi^i(X)$, $i=1,\dots,N$, are the expansions of the functions $(z-p_i)^{-1}$ in $X$ at $X=z=0$. Alternatively, using Definition~\ref{def:Xi-2}, one can check whether the expansion of $W_{g,n}$ in $X_1,\dots,X_n$ at $X_1=\cdots=X_n=0$ can be represented as
\begin{equation}
	D_1\cdots D_n\sum_{\alpha_1,\dots,\alpha_n=0}^{N-1} \tilde A_{g;\alpha_1,\dots,\alpha_n}(D_1,\dots,D_n) \prod_{j=1}^n \tilde \xi^{\alpha_j}(X_j),	
\end{equation}
where $\tilde{A}_{g;\alpha_1,\dots,\alpha_n}\in\mathbb{C}[D_1,\dots,D_n]$ are certain polynomials and $\tilde\xi^\alpha(X)$, $\alpha=0,\dots,N-1$, are the expansions of the functions $z^\alpha / \check{Q}(z)$ in $X$ at $X=z=0$.

If we rewrite what this representation of $W_{g,n}$, or, better, $H_{g,n}$, in terms of the properties of the coefficients of its expansion in $X_1,\dots,X_n$, known as \emph{weighted Hurwitz numbers}, we obtain the following reformulation of the the property that $W_{g,n}\in\Xi$ for $2g-2+n>0$. Let
\begin{equation}
	H_{g,n} \coloneqq \sum_{k_1,\dots,k_n=1}^\infty h_{g;k_1,\dots,k_n} \prod_{i=1}^n X_i^{k_i}.
\end{equation}

\begin{proposition} We have $W_{g,n}\in\Xi$ for $2g-2+n>0$ if and only if there exist polynomials $A_{g;i_1,\dots,i_n}\in\mathbb{C}[k_1,\dots,k_n]$, $g\geq 0$, $1\leq i_1,\dots,i_n \leq N$, such that for any $k_1,\dots,k_n\geq 1$
\begin{equation} \label{eq:quasipol-1}
	h_{g;k_1,\dots,k_n} = \sum_{i_1,\dots,i_n=1}^N A_{g;i_1,\dots,i_n}(k_1,\dots,k_n) \prod_{j=1}^n [X_j^{k_j}] \xi^{i_j}(X_j).
\end{equation}
\end{proposition}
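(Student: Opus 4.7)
This is an algebraic equivalence between two descriptions of the same formal power series; it requires no input on $\hat\psi,\hat y$ beyond the fact that, for $2g-2+n>0$, one has $W_{g,n}=D_1\cdots D_n H_{g,n}$ and $H_{g,n}$ is a power series in $X_1,\dots,X_n$ with no constant term. The plan is to compare $X_1^{k_1}\cdots X_n^{k_n}$-coefficients directly, using the diagonal action $D X^k=kX^k$ of $D=X\partial_X$. The key elementary computation is that for $d\geq 1$
\begin{equation*}
D^d\frac{z}{z-p_i}=\sum_{k\geq 1}k^d\,p_i\,\xi^i_k\,X^k,\qquad \xi^i_k:=[X^k]\xi^i(X),
\end{equation*}
which follows from the identity $z/(z-p_i)=1+p_i\,\xi^i(X)$ and the fact that $D$ annihilates constants. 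In parallel, $W_{g,n}=D_1\cdots D_n H_{g,n}$ immediately gives $[X_1^{k_1}\cdots X_n^{k_n}]W_{g,n}=k_1\cdots k_n\,h_{g;k_1,\dots,k_n}$.

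For the forward direction, I would write $W_{g,n}=\sum c_{(d_1,i_1),\dots,(d_n,i_n)}\prod_{j=1}^n D_j^{d_j}(z_j/(z_j-p_{i_j}))$ as a finite linear combination with each $d_j\geq 1$, extract the $X_1^{k_1}\cdots X_n^{k_n}$-coefficient from both sides, divide by $k_1\cdots k_n$ (using $d_j\geq 1$), and group by $(i_1,\dots,i_n)$. This gives the desired quasi-polynomial formula with
\begin{equation*}
A_{g;i_1,\dots,i_n}(k_1,\dots,k_n)=\Big(\prod_{j=1}^n p_{i_j}\Big)\sum_{d_1,\dots,d_n\geq 1}c_{(d_1,i_1),\dots,(d_n,i_n)}\prod_{j=1}^n k_j^{d_j-1},
\end{equation*}
which is polynomial in $k_1,\dots,k_n$ because the $d$-sum is finite.

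For the reverse direction, assuming the quasi-polynomial formula, I would expand each $A_{g;i_1,\dots,i_n}$ as a finite sum of monomials $\prod_j k_j^{e_j}$ with $e_j\geq 0$. Substituting into $H_{g,n}=\sum h_{g;k}\prod X_j^{k_j}$ and applying $\sum_{k\geq 1}k^e\,\xi^i_k\,X^k=D^e\xi^i_+(X)$, where $\xi^i_+$ is the non-constant part of $\xi^i$, represents $H_{g,n}$ as a finite combination of products $\prod_j D_j^{e_j}\xi^{i_j}_+(X_j)$. Applying the outer $D_1\cdots D_n$ absorbs each $+$ and shifts each exponent by one, producing a finite combination of $\prod_j D_j^{e_j+1}\xi^{i_j}(X_j)=\prod_j p_{i_j}^{-1}D_j^{e_j+1}(z_j/(z_j-p_{i_j}))$, which lies in $\Xi$ by Definition~\ref{def:Xi}.

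The only bookkeeping subtlety is the distinction between the expansions of $z/(z-p_i)$ and $\xi^i=(z-p_i)^{-1}$ on the constant term; this is handled uniformly by the observation that $D$ kills constants, matching perfectly the requirement $d_j\geq 1$ in the definition of $\Xi$. No substantive obstacle arises: the statement is essentially a dictionary between the intrinsic basis description of $\Xi$ and the coefficient-wise quasi-polynomial form of the weighted Hurwitz numbers.
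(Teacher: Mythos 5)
Your proposal is correct and follows exactly the route the paper intends: the paper states this proposition as an immediate reformulation of the preceding observation that membership in $\Xi$ amounts to writing $W_{g,n}$ as $D_1\cdots D_n\sum_{i_1,\dots,i_n}A_{g;i_1,\dots,i_n}(D_1,\dots,D_n)\prod_j\xi^{i_j}(X_j)$, and your coefficient-wise dictionary via $DX^k=kX^k$ (with $D$ killing constants, which reconciles $z/(z-p_i)$ with $\xi^i=(z-p_i)^{-1}$ and accounts for the shift between $d_j\geq 1$ and polynomial degree) is precisely the content of that reformulation, spelled out in both directions.
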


\begin{remark}
An alternative way to formulate this proposition is the following: we have $W_{g,n}\in\Xi$ for $2g-2+n>0$ if and only if there exist polynomials $\tilde A_{g;i_1,\dots,i_n}\in\mathbb{C}[k_1,\dots,k_n]$, $g\geq 0$, $0\leq \alpha_1,\dots,\alpha_n \leq N-1$, such that for any $k_1,\dots,k_n\geq 1$
\begin{equation} \label{eq:quasipol-2}
	h_{g;k_1,\dots,k_n} = \sum_{\alpha_1,\dots,\alpha_n=0}^{N-1} \tilde A_{g;\alpha_1,\dots,\alpha_n}(k_1,\dots,k_n) \prod_{j=1}^n [X_j^{k_j}] \tilde\xi^{\alpha_j}(X_j).
\end{equation}
\end{remark}

\begin{definition}
	The property when weighted Hurwitz numbers $H_{g;k_1,\dots,k_n}$ can be expressed in the form~\eqref{eq:quasipol-1} and/or~\eqref{eq:quasipol-2} is called the \emph{quasi-polynomiality} property.
\end{definition}

\begin{remark} If the differentials $\omega_{g,n} = W_{g,n} \prod_{j=1}^n dX_j/X_j$ satisfy in addition the topological recursion, it is proved in~\cite{Eynard,DOSS} that $\deg A=\deg \tilde A = 3g-3+n$. See Section~\ref{sec:TopoRecAppl} for details.
\end{remark}

\subsubsection{Historical remarks}

The first instance of quasi-polynomiality was observed and conjectured for purely combinatorial reasons for the usual simple Hurwitz numbers in~\cite{GouldenJacksonVainshtein}. In our terms, they conjectured that for $\psi=y$ and $y=z$, the coefficient of the expansion of $H_{g,n}$ are expressed as
\begin{equation}
h_{g;k_1,\dots,k_n} = A_g(k_1,\dots,k_n) \prod_{i=1}^n \frac{k_i^{k_i}}{k_i!}	
\end{equation}
for some polynomials $A_g\in \mathbb{C}[k_1,\dots,k_n]$. This conjecture was first proved in~\cite{ELSV} by providing a formula for $h_{g;k_1,\dots,k_n}$ in terms of the intersection numbers on the moduli space of curves $\cM_{g,n}$ (the celebrated \emph{ELSV-formula}). For a long time it was an open question whether an alternative purely combinatorial proof for quasi-polynomiality could be found, and it was settled in two different ways in~\cite{DKOSS} and~\cite{KLS}.

For the usual orbifold Hurwitz numbers ($\psi=y$, $y=z^q$) an ELSV-type formula that proves quasi-polynomiality is derived in~\cite{JohnsonPandTseng}, and two different independent combinatorial proofs are available in~\cite{DLPS,KLS}.
For the usual double Hurwitz numbers ($\psi=y$, $y$ polynomial) the quasi-polynomiality was conjectured in~\cite{Do-Karev-DoubleHurwitz} and proved in~\cite{BDKLM}. For the coefficients of the extended Ooguri-Vafa partition function of the colored HOMFLY polynomials of torus knots ($\psi=\alpha z$, $y =\log(1-A^{-1}z)-\log(1-Az)$) the quasi-polynomiality was proved in~\cite{DPSS}

For the monotone simple Hurwitz numbers ($\psi=-\log(1-y)$, $y=z$)  the quasi-polynomiality is proved in~\cite{GouldenGuayNovak-Polynomiality}, and an alternative proof is available in~\cite{KLS}. For the monotone orbifold Hurwitz numbers ($\psi=-\log(1-y)$, $y=z^q$) the quasi-polynomiality was first conjectured in~\cite{DoKarev} and then proved in~\cite{KLS}.

For the strictly monotone orbifold Hurwitz numbers ($\psi=\log(1+y)$, $y=z^q$)  the quasi-polynomiality was conjectured in~\cite{DoManescu} and proved in~\cite{KLS} (it also follows from~\cite{DOPS-CombLoopEqua}, and for $q=2$ it also follows from~\cite{DumitrescuMulaseSafnukSorkin,Norbury-Lattice,KZ2015}).

For the hypermaps, or, equivalently, dessins d'enfants ($\psi = \log(1+uy)+\log(1+vy)$, $y=z$) the quasi-polynomiality follows from \cite{KZ2015}. For the Bousquet-M\'elou--Schaeffer numbers ($\psi = m\log(1+y)$, $y=z$) the quasi-polynomiality is proved in~\cite{BDS}. For the more general polynomially weighted polynomially double Hurwitz numbers ($\exp(\psi)$ and $y$ are polynomials) the quasi-polynomiality, under some assumptions on general position, follows from~\cite{ACEH-2}.

For the $r$-spin simple Hurwitz numbers ($\psi = y^r$, $y=z$) the quasi-polynomiality was conjectured in~\cite{Zvonkine} as a property of an ELSV-type formula conjectured there (see also~\cite{ShaSpiZvo}). The quasi-polynomiality was proved in a more general case of the $r$-spin orbifold Hurwitz numbers ($\psi = y^r$, $y=z^q$) in~\cite{KLPS19}.

In all cases mentioned above the quasi-polynomiality is either proved directly by  combinatorial arguments in the group algebra of the symmetric group, or via an analysis of the operators whose vacuum expectations give the corresponding weighted Hurwitz numbers in the semi-infinite wedge formalism, or via the structure of the ELSV-type formulas (once these formulas are proved independently) or via the topological recursion (once the topological recursion is proved independently).

\section{Projection property: proofs} \label{sec:ProjPropProofs}

In this Section we prove Theorems~\ref{thm:Xi-Family1} and~\ref{thm:Xi-Family2}, which in particular imply that the projection property holds for Family I and Family II introduced in Section~\ref{sec:proj}.
In other words, we want to show that the functions $H_{g,n}$, $2g-2+n>0$, obtained from \eqref{eq:HgnU}-\eqref{eq:(g,1)} for the respective choices of $\hat{\psi}(\hbar^2,y)$ and $\hat{y}(\hbar^2,z)$, belong to the space $\Theta_n$.

We begin with the a few general observations related to the structure  of the formulas~\eqref{eq:HgnU}-\eqref{eq:(g,1)} for $H_{g,n}$, $2g-2+n> 0$, and relevant for the both families of parameters.  These formulas give manifestly rational functions, whose principal parts at the points $p_1,\dots, p_N$ are odd with respect to the deck transformations. Moreover, they are finite linear combinations of functions of the form $\prod\limits_{i=1}^n f_i(z_i)$, which follows from the fact that all diagonal poles get canceled out (see \cite[Theorem 1.1 and Remark 1.3]{BDKS20}). So, recalling Proposition~\ref{prop:projTheta} and Definition~\ref{def:Theta}, in order to prove the projection property we just have to show that for $2g-2+n>0$ functions $H_{g,n}$  have no poles other than $p_1,\dots, p_N$ in each variable $z_1,\dots,z_n$.

Consider a particular $H_{g,n}(z_1,\dots,z_n)$. From the shape of the formula it is clear that its possible poles in the variable $z_1$ in addition to $p_1,\dots,p_N$ are either at the diagonals $z_1-z_i=0$, $i\not=1$ (but it is known from~\cite{BDKS20} that these functions have no poles at the diagonals $z_i-z_j=0$), or at $\infty$, or at the special points related to the specific form of the operator $\overline U_1$ for Family I and Family II. A bit more special case is the case of $H_{g,1}$, where we have to analyze some extra terms as well.

Note that it is in fact sufficient to analyze the pole structure just for $H_{g,1}$, $g\geq 1$, since this case subsumes the corresponding analysis of the pole structure for $H_{g,n}$, $n\geq 2$. Indeed, the factors of the form $w_{k,\ell}$ and $\hbar u_k \cS(u_k\hbar Q_kD_k)\frac{z_i}{z_k-z_i}$ do not contribute any poles to the resulting expressions, as all diagonal poles get canceled and these factors are regular at infinity as well. Therefore, the possible extra poles
can only occur at the special points of $\overline U_i$, which enters the formula for $H_{g,1}$ in exactly the same way as formulas for $H_{g,n}$ for other values of $n$. The argument for the $n=1$ case includes analysis of the singularities of $\overline U_1$, and once we show that it has no poles outside $p_1,\dots,p_N$, it immediately implies the same statement for any $n\geq 2$ as well.

\subsection{Technical lemmata} We begin with two technical lemmata that conceptually explain the origin of the shape of $\hbar^2$-deformations given in Table~\ref{tab:families}.

\begin{lemma}\label{lem:cvy}
	Let $A\in \mathbb{C}$ be a fixed number. Consider the $\hbar$-series expansion
	\begin{equation}\label{eq:vlog}
		e^{v\left(\frac{\cS(v\,\hbar\,\partial_y)}{\cS(\hbar\,\partial_y)}-1\right)\log\left(y-A\right) } = \sum_{k=0}^\infty \hbar^{2k} c_{2k}(v,y).
	\end{equation}	
	Note that $c_0(v,y) = 1$. For the coefficients $c_{2k}(v,y)$, $k\geq 1$, we have
	\begin{align}
		\label{eq:cvy}
		c_{2k}(v,y) &= \dfrac{(v+1)v(v-1)\cdots (v-2k+1)}{(y-A)^{2k}} \, p_{2k}(v),
	\end{align}
	where $p_{2k}(v)$, $k\geq 1$, are some polynomials in $v$.
\end{lemma}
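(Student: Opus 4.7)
The plan is to prove the factorization by exhibiting $2k+1$ explicit roots of $(y-A)^{2k}c_{2k}(v,y)$ as a polynomial in $v$, matching exactly the zeros of $(v+1)v(v-1)\cdots(v-2k+1)$. The lemma then follows from divisibility.

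First I would establish that $(y-A)^{2k}c_{2k}(v,y)$ is indeed a polynomial in $v$. Since $\cS$ is even, the operator $\cS(v\hbar\partial_y)/\cS(\hbar\partial_y)-1$ expands as $\sum_{j\geq 1}\hbar^{2j}Q_j(v^2)\partial_y^{2j}$ with $Q_j\in\mathbb{C}[v^2]$ of degree $j$. Applied to $\log(y-A)$, the identity $\partial_y^{2j}\log(y-A)=-(2j-1)!/(y-A)^{2j}$ gives
\[
\phi(v,\hbar,y) \;\coloneqq\; v\left(\frac{\cS(v\hbar\partial_y)}{\cS(\hbar\partial_y)}-1\right)\log(y-A) \;=\; -\sum_{j\geq 1}\hbar^{2j}\,\frac{v\,\tilde a_j(v)\,(2j-1)!}{(y-A)^{2j}},
\]
with $\tilde a_j\in\mathbb{C}[v^2]$. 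Expanding $e^\phi$ and collecting the $\hbar^{2k}$-coefficient then gives a finite sum of products in which the total power of $(y-A)^{-1}$ is exactly $2k$ and each factor contributes a polynomial in $v$; hence $(y-A)^{2k}c_{2k}(v,y)\in\mathbb{C}[v]$.

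Next, two of the three root families are essentially immediate. At $v=0$ the explicit prefactor $v$ forces $\phi|_{v=0}=0$, so $c_{2k}(0,y)=0$ for $k\geq 1$. At $v=-1$ the evenness of $\cS$ gives $\cS(-\hbar\partial_y)/\cS(\hbar\partial_y)=1$, so $\phi|_{v=-1}=0$, yielding $c_{2k}(-1,y)=0$.

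The main step, which I expect to be the technical heart of the proof, is the vanishing at $v=1,2,\dots,2k-1$. For a positive integer $m$, the classical identity $\sinh(mw)/\sinh(w)=\sum_{k'=0}^{m-1}e^{(m-1-2k')w}$, combined with $m\,\cS(mz)/\cS(z)=\sinh(mz/2)/\sinh(z/2)$, translates into the operator identity
\[
v\left(\frac{\cS(v\hbar\partial_y)}{\cS(\hbar\partial_y)}-1\right)\bigg|_{v=m} \;=\; \sum_{k'=0}^{m-1}\bigl(e^{(m-1-2k')\hbar\partial_y/2}-1\bigr).
\]
Applying this to $\log(y-A)$ and exponentiating collapses the whole expression to a finite product,
\[
e^{\phi(m,\hbar,y)} \;=\; \prod_{k'=0}^{m-1}\frac{y-A+(m-1-2k')\hbar/2}{y-A},
\]
which is a polynomial in $\hbar$ of degree exactly $m$. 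Consequently $[\hbar^{2k}]e^{\phi(m,\hbar,y)}=0$ whenever $m<2k$, so $c_{2k}(m,y)=0$ for $m=1,\dots,2k-1$. Combining the three vanishing families, $(y-A)^{2k}c_{2k}(v,y)$ has $(v+1)v(v-1)\cdots(v-2k+1)$ as a factor, and the cofactor $p_{2k}(v)\in\mathbb{C}[v]$ is what the lemma claims. The main obstacle is spotting and exploiting the collapse of the shift-operator exponential to a finite product; once that identity is in place, everything else is a degree-counting formality.
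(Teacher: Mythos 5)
Your proposal is correct and follows essentially the same route as the paper's proof: reduce to showing the polynomial $(y-A)^{2k}c_{2k}(v,y)$ vanishes at $v\in\{-1,0,1,\dots,2k-1\}$, with the cases $v=-1,0$ handled by evenness of $\cS$ and the explicit factor of $v$, and the positive-integer cases handled by collapsing $\sinh(m\hbar\partial_y/2)/\sinh(\hbar\partial_y/2)$ into a sum of shift operators so that the exponential becomes a finite product $\prod_i\frac{y-A+\frac{2i-m+1}{2}\hbar}{y-A}$ of $\hbar$-degree at most $m$. The only negligible discrepancy is your claim that this product has degree \emph{exactly} $m$ (it is $m-1$ for odd $m$), which does not affect the degree-counting conclusion.
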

\begin{proof} Note that
\begin{equation} \label{eq:OperatorUnderExp}
	v\left(\frac{\cS(v\,\hbar\,\partial_y)}{\cS(\hbar\,\partial_y)}-1\right) = v \sum_{i=1}^\infty r_i(v) \hbar^{2i} \partial_y^{2i}
\end{equation}
for some polynomials $r_i(v)\in \mathbb{C}[v^2]_{\leq 2i}$, $i\geq 1$.	Applying this expansion in~\eqref{eq:vlog} we see that
	\begin{equation}
		c_{2k}(v,y) = \dfrac{1}{(y-A)^{2k}} \, \tilde{c}_{2k}(v),
	\end{equation}
	where $\tilde{c}_{2k}(v)$ is some polynomial in $v$, $k\geq 0$. In order to prove~\eqref{eq:cvy}, it is sufficient to show that $c_{2k}(v,y)$ vanishes for $v\in\{-1,0,1,\dots,2k-1\}$, $k\geq 1$.
	
	Let $v=m\in \mathbb{Z}_{\geq -1}$. For $m=-1, 0, 1$ the coefficients $c_{2k}(m,y)$, $k\geq 1$, vanish since the whole operator~\eqref{eq:OperatorUnderExp}  vanishes. For $m\geq 2$
	\begin{align}\label{eq:cvycomp}
		v\left(\frac{\cS(v\,\hbar\,\partial_y)}{\cS(\hbar\,\partial_y)}-1\right)\log\left(y-A\right)
		& = \left(\frac{e^{m\hbar \partial_y/2}-e^{-m\hbar \partial_y/2}}{e^{\hbar \partial_y/2}-e^{-\hbar \partial_y/2}}-m\right)\log\left(y-A\right)\\ \nonumber
		& = \left(-m+\sum_{i=0}^{m-1}e^{(2i-m+1)\hbar \partial_y/2}\right)\log\left(y-A\right)\\ \nonumber
		&=-m\log(y-A)+\sum_{i=0}^{m-1} \log\left(y-A + \tfrac{2i-m+1}{2}\hbar\right).
	\end{align}
	Therefore,
	\begin{align}
		e^{m\left(\frac{\cS(m\,\hbar\,\partial_y)}{\cS(\hbar\,\partial_y)}-1\right)\log\left(y-A\right) } = \prod_{i=0}^{m-1}\dfrac{y-A + \frac{2i-m+1}{2}\hbar}{y-A}.
	\end{align}
	The latter expression is an even polynomial in $\hbar$ of degree $\leq m$, so its coefficient in front of $\hbar^{2k}$ for $2k > m$ vanishes, which implies that $c_{2k}(m,y)=0$ for $2 \leq m\leq 2k-1$.
\end{proof}


\begin{lemma}\label{lem:cuzdz}
	Let $A\in \mathbb{C}$ be a fixed number. Consider the series expansion in $\hbar$
	\begin{equation}
		e^{u\left(\frac{\cS(u\,\hbar\,z \partial_z)}{\cS(\hbar\,z \partial_z)}-1\right)\log\left(z-A\right) }
		=\sum_{k=0}^\infty \hbar^{2k} c_{2k}(u,z).
	\end{equation}	
	Note that $c_0(u,z) = 1$. For $k\geq 1$, the coefficients $c_{2k}(u,z)$ represented as Laurent series in $z-A$ as
	\begin{align}
		c_{2k}(u,z) &= \sum_{l=-\infty}^\infty \dfrac{d_{2k,l}(u)}{(z-A)^l},
	\end{align}
	where for $l>0$ the coefficients $d_{2k,l}(u)$ are represented as
	\begin{equation}
		d_{2k,l}(u) = (u+1)u(u-1)\cdots(u-l+1)\cdot p_{2k,l}(u)
	\end{equation}
	for some polynomials $p_{2k,l}(u)\in\mathbb{C}[u]$.
\end{lemma}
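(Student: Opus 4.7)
The plan is to follow the template of the proof of Lemma~\ref{lem:cvy}, adapting it to $z\partial_z$ in place of $\partial_y$ and tracking each Laurent coefficient $d_{2k,l}(u)$ at $z=A$ separately rather than just a single closed form for $c_{2k}$.

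First I would establish that each $d_{2k,l}(u)$ is a polynomial in~$u$. Expanding
\[
  u\!\left(\frac{\cS(u\hbar z\partial_z)}{\cS(\hbar z\partial_z)}-1\right)=\sum_{i\ge 1}\alpha_i(u)\,\hbar^{2i}\,(z\partial_z)^{2i},
\]
one has $\alpha_i(u)\in\mathbb{C}[u]$ (indeed $\alpha_1(u)=u(u^2-1)/24$, and the same argument yields polynomial dependence in~$u$ at every order). Applied to $\log(z-A)$ and exponentiated, $c_{2k}(u,z)$ is therefore a polynomial in~$u$ whose coefficients are rational functions in~$z$ with poles only at $z=A$, because each $(z\partial_z)^{2i}\log(z-A)$ has this property. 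Expanding in a Laurent series at $z=A$ then shows that each $d_{2k,l}(u)$ is a polynomial in~$u$.

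Next I would show that $d_{2k,l}(u)$ vanishes at the $l+1$ distinct integers $u=-1,0,1,\ldots,l-1$. For $u\in\{-1,0,1\}$ the operator in the exponent is identically zero: $\cS$ is even, so $\cS(u\hbar z\partial_z)/\cS(\hbar z\partial_z)=1$ at $u=\pm 1$, while the prefactor $u$ kills the case $u=0$; hence $c_{2k}(u,z)=0$ for $k\ge 1$. For integer $u=m\ge 2$, using the finite-sum identity $\sinh(ms/2)/\sinh(s/2)=\sum_{i=0}^{m-1}e^{(2i-m+1)s/2}$ together with the fact that $e^{c\hbar z\partial_z}$ acts as the dilation $z\mapsto ze^{c\hbar}$, one obtains
\[
  e^{m\left(\frac{\cS(m\hbar z\partial_z)}{\cS(\hbar z\partial_z)}-1\right)\log(z-A)}=\prod_{i=0}^{m-1}\frac{ze^{(2i-m+1)\hbar/2}-A}{z-A}.
\]
Each factor equals $e^{c\hbar}+A(e^{c\hbar}-1)/(z-A)$ and so has at most a simple pole at $z=A$, hence the product has a pole of order at most $m$ at $z=A$. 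Extracting the $\hbar^{2k}$-coefficient gives $d_{2k,l}(m)=0$ for every $l>m$, which accounts for all integers $m=2,\ldots,l-1$.

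Combining these steps, for each fixed $l\ge 1$ the polynomial $d_{2k,l}(u)$ has the $l+1$ distinct roots $u=-1,0,1,\ldots,l-1$, forcing the factorization $d_{2k,l}(u)=(u+1)u(u-1)\cdots(u-l+1)\,p_{2k,l}(u)$ claimed in the lemma. I expect no serious obstacle beyond bookkeeping, since the structure parallels Lemma~\ref{lem:cvy} exactly; the one genuine difference is that the pole order at $z=A$ is not fixed in~$k$, which is precisely why one must analyze each Laurent level $l$ individually rather than proving a single closed formula for $c_{2k}(u,z)$.
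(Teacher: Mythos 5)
Your proposal is correct and follows essentially the same route as the paper's proof: vanishing of the exponent operator at $u=-1,0,1$, the finite-sum identity converting the exponential at integer $u=m\ge 2$ into the product $\prod_{i=0}^{m-1}\frac{ze^{(2i-m+1)\hbar/2}-A}{z-A}$, and the observation that this product has pole order at most $m$ at $z=A$, forcing $d_{2k,l}(m)=0$ for $l>m$. The only (welcome) addition is that you spell out explicitly why each $d_{2k,l}(u)$ is a polynomial in $u$, which the paper leaves implicit by analogy with Lemma~\ref{lem:cvy}.
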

\begin{proof}
	The proof is similar to the proof of the previous lemma. We have to show that for $k>0$ the polynomials $d_{2k,l}(u)$ vanish for $u\in\{-1,0,1,\dots,l-1\}$. Vanishing at $u=-1,0,1$ is evident. Let $u=m\in\mathbb{Z}_{\geq 2}$. Analogously to \eqref{eq:cvycomp} we then have:
	\begin{align}
		u\left(\frac{\cS(u\,\hbar\,z \partial_z)}{\cS(\hbar\,z \partial_z)}-1\right)\log\left(z-A\right) & = \left(-m+\sum_{i=0}^{m-1}e^{(2i-m+1)\hbar z\partial_z/2}\right)\log\left(z-A\right) \\ \nonumber
		&= -m\log(z-A)+\sum_{i=0}^{m-1} \log\left(z e^{\frac{2i-m+1}{2}\hbar}-A\right).
	\end{align}
	Thus
	\begin{equation}
		e^{u\left(\frac{\cS(u\,\hbar\,z \partial_z)}{\cS(\hbar\,z \partial_z)}-1\right)\log\left(z-A\right) } = \prod_{i=0}^{m-1}\dfrac{z e^{\frac{2i-m+1}{2}\hbar}-A}{z-A}.
	\end{equation}
	The Laurent expansion of the latter expression in $z-A$ clearly does not contain any terms $(z-A)^{-l}$ for $l>m$.
\end{proof}

\subsection{Family I: proof} \label{sec:ProofProjFamily1}
In this section we present the proof of Theorem~\ref{thm:Xi-Family1}.

Recall that for this family
	\begin{align} \label{eq:FamI-psi}
		\psi(y)&=P_1(y)+ \log\left(\dfrac{P_2(y)}{P_3(y)}\right)
		, & \hat{\psi}(\hbar^2,y)&=\cS(\hbar \partial_y)P_1(y)+\log\left(\dfrac{P_2(y)}{P_3(y)}\right)
		,\\ \label{eq:FamI-y}
		y(z)&=\dfrac{R_1(z)}{R_2(z)}, & \hat y(\hbar^2,z)&=\dfrac{R_1(z)}{R_2(z)}.
	\end{align}
	Denote $d_i\coloneqq\deg P_i$, $i=1,2,3$, and $e_j\coloneqq \deg R_j$, $j=1,2$. 
	According to Remark~\ref{rem:generality}, we assume that the zeros of polynomials $P_2$, $P_3$, $R_2$, and also of $dX$ are all simple.
	
\subsubsection{Specialization of the formulas for $H_{g,1}$ and extra notation}	As we discussed above, it is sufficient to restrict the analysis of singularities to the case of $n=1$.
For $\hat \psi$ and $\hat y$ as in~\eqref{eq:FamI-psi},~\eqref{eq:FamI-y}, the operator $\overline{U}_1$ takes the form
	\begin{align}	\label{eq:UiI}
		\overline{U}_{1} f&=\sum_{r=0}^\infty\sum_{j=1}^\infty D_1^{j-1}\left(\frac{L^j_{r,1}}{Q_1}
		[u_1^r] \frac{e^{u_1\left(\cS(u_1\,\hbar\,Q_1\,D_1)-1\right)\frac{R_1(z)}{R_2(z)} }}{u_1\hbar\,\cS(u_1\,\hbar)}f(u_1,z_1)\right),
		\\
		L^j_{r,1}&= [v^j]
		(\partial_y+v\,\psi'(y))^r e^{v\left(\cS(v\,\hbar\,\partial_y)-1\right)P_1(y)
		}e^{v\left(\frac{\cS(v\,\hbar\,\partial_y)}{\cS(\hbar\,\partial_y)}-1\right)\log\left(\frac{P_2(y)}{P_3(y)}
			\right)}\Big|_{y=\frac{R_1(z_1)}{R_2(z_1)}}, \label{eq:LrI}\\
		\label{eq:ProofFam1-Q1}
		Q_1 &= 1-z_1\left(P_1'+\dfrac{P_2'\left(\frac{R_1(z_1)}{R_2(z_1)}\right)}{P_2\left(\frac{R_1(z_1)}{R_2(z_1)}\right)}-\dfrac{P_3'\left(\frac{R_1(z_1)}{R_2(z_1)}\right)}{P_3\left(\frac{R_1(z_1)}{R_2(z_1)}\right)}\right)\dfrac{R_1'(z_1) R_2(z_1)-R_1(z_1)R_2'(z_1)}{R_2^2(z_1)},
	\end{align}
and
\begin{align} \label{eq:Hg1I}
	H_{g,1}
	&=[\hbar^{2g}]\left(\hbar\overline{U}_1 1 + \sum_{j=1}^\infty D_1^{j-1}L^{j+1}_{0,1}D_1\dfrac{R_1(z_1)}{R_2(z_1)}\right)
	\\ \notag
	&\quad +\Bigl([u^{2g}]\frac1{\cS(u)}\Bigr)  \left.\left(\partial_y^{2g-1}\log\left(\dfrac{P_2(y)}{P_3(y)}\right)\right)\right|_{y=\frac{R_1(z_1)}{R_2(z_1)}}
	+\mathrm{const}.
\end{align}
Denote the summands of $H_{g,1}$ by
\begin{align}
	\tau^{(1)}_g&:=[\hbar^{2g}]\hbar\overline{U}_1 1,\\
	\tau^{(2)}_g&:=[\hbar^{2g}]\sum_{j=1}^\infty D_1^{j-1}L^{j+1}_{0,1}D_1\dfrac{R_1(z_1)}{R_2(z_1)},\\
	\tau^{(3)}_g&:=\Bigl([u^{2g}]\frac1{\cS(u)}\Bigr)  \left.\left(\partial_y^{2g-1}\log\left(\dfrac{P_2(y)}{P_3(y)}\right)\right)\right|_{y=\frac{R_1(z_1)}{R_2(z_1)}},
\end{align}
which allows to rewrite $H_{g,1}$ as
\begin{equation}\label{eq:HtauI}
	H_{g,1} = \tau^{(1)}_g+\tau^{(2)}_g+\tau^{(3)}_g +\mathrm{const}.
\end{equation}

We introduce also an extra piece of notation to express $\psi(y(z))$, $\check Q$, and $Q$. We have
	\begin{equation}
		\psi(y(z)) = \dfrac{\Pt_1(z)}{\big(R_2(z)\big)^{d_1}}+ \log\left(\dfrac{\Pt_2(z)}{\Pt_3(z)}\right) 
	\end{equation}
	for some polynomials $\Pt_i(z)$. Then, using Definition \eqref{eq:Qcheckdef} and Equation \eqref{eq:ProofFam1-Q1}, we have:
	\begin{align} \label{eq:QcheckI}
		\check Q(z)&= R_2^{d_1+1} \Pt_2 \Pt_3+z \left(d_1  R_2' \Pt_1 \Pt_2 \Pt_3 -  R_2 \Pt_1' \Pt_2 \Pt_3   +  R_2^{d_1+1}\Pt_2 \Pt_3' -  R_2^{d_1+1}\Pt_2' \Pt_3 \right), \\ 
		Q(z) &= \dfrac{\check Q(z)}{R_2^{d_1+1} \Pt_2 \Pt_3}.
	\end{align}
	Here we have slightly abused the notation: by \eqref{eq:Qcheckdef}  $\check Q(z)$ is expected to be a monic polynomial, which is not necessarily the case in \eqref{eq:QcheckI}.

\subsubsection{Part 1 of the proof: $\tau^{(1)}_g$}
\label{prp:1}
Let us prove that $\tau^{(1)}_g\in\Theta$. Recall \eqref{eq:UiI}. Note that the expression
	\begin{equation}
		\left(\frac{\cS(u_1\,\hbar\,\partial_y)}{\cS(\hbar\,\partial_y)}-1\right) \log\left(\frac{P_2(y)}{P_3(y)}\right)
	\end{equation}
is a series in $\hbar$ with coefficients given by rational functions of $y$. So,  $\tau^{(1)}_g$ is manifestly a rational function in $z_1$, and its set of possible poles includes $p_1,\dots,p_N$ (the zeros of $\check Q$),  the zeros of $\Pt_2$, $\Pt_3$, $R_2$, and at $z=\infty$. In a sequence of lemmata below we show that $\tau^{(1)}_g$ has no poles at the zeros of $\Pt_2$, $\Pt_3$, $R_2$, and $z=\infty$. Since the poles at $p_1,\dots,p_N$ are odd with respect to the deck transformations (they are generated by the iterative application of the operator $D_1$ to a function with the first order poles at $p_1,\dots,p_N$, cf.~the proof of Proposition~\ref{prop:XiThetaequiv}), we conclude that $\tau^{(1)}_g\in \Theta$.

\begin{lemma} \label{lem:R2-Fam1} $\tau_g^{(1)}$ has no poles at the zeros of $R_2(z)$.
\end{lemma}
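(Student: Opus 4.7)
The plan is a direct pole-order estimate at an arbitrary simple zero $z_0$ of $R_2$. In the local coordinate $w=z_1-z_0$, one has $y(z_1) = R_1/R_2 \sim c_0/w$ with $c_0\ne 0$, so $y\to\infty$ at $z_0$. A direct computation from~\eqref{eq:ProofFam1-Q1} (equivalently from~\eqref{eq:QcheckI}) gives $Q_1 \sim c_1/w^{d_1+1}$ with $c_1\ne 0$, hence $1/Q_1$ has a zero of order $d_1+1$ at $z_0$. Since every summand of~\eqref{eq:UiI} carries one explicit factor $1/Q_1$ and $j-1$ further factors via $D_1^{j-1}=(Q_1^{-1}z_1\partial_{z_1})^{j-1}$, there is a total vanishing of order $j(d_1+1)$ at $z_0$ coming from these sources. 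The goal is to show that this vanishing dominates the pole contributions from the remaining factors in each summand.

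The pole contributions are analysed factor by factor. First, the log-part exponential in $L^j_{r,1}$ is handled by Lemma~\ref{lem:cvy}: applied at each zero $A$ of $P_2$ and $P_3$, its $\hbar^{2k}$-coefficient is of the form $(y-A)^{-2k}\cdot(\text{polynomial in }v)$, which vanishes as $y\to\infty$, so this factor contributes no pole at $z_0$ at any $\hbar$-order. Second, the polynomial-part exponential $\exp(v(\cS(v\hbar\partial_y)-1)P_1(y))$ has $\hbar^{2k}$-coefficient polynomial in $y$ of degree at most $k(d_1-2)$; combined with $r$ applications of $(\partial_y+v\psi'(y))$ (where $\psi'(y)\sim y^{d_1-1}$ as $y\to\infty$), the resulting $L^j_{r,1}|_{y=R_1/R_2}$ at $\hbar^{2k}$ has a $w$-pole at $z_0$ of order at most $k(d_1-2)+r(d_1-1)$. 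Third, since $Q_1 D_1 = z_1\partial_{z_1}$ is an ordinary differential operator on functions of $z_1$, the $\hbar^{2m}$-coefficient of $\exp(u_1(\cS(u_1\hbar Q_1 D_1)-1)(R_1/R_2))$ is a linear combination of $(z_1\partial_{z_1})^{2m}(R_1/R_2)$, which has pole at $z_0$ of order at most $2m+1$; the $z_1\partial_{z_1}$ parts of $D_1^{j-1}$ raise the pole order by at most $j-1$ further.

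Putting the pieces together, for each summand indexed by $(j,r,k,m)$ the total order of vanishing at $z_0$ is at least $j(d_1+1)-k(d_1-2)-r(d_1-1)-(2m+1)-(j-1)=jd_1-k(d_1-2)-r(d_1-1)-2m$, and it remains to verify that this is non-negative for all admissible quadruples. The admissibility is determined by the $\hbar$-grading at $\tau_g^{(1)}=[\hbar^{2g}]\hbar\overline U_1 1$ combined with the $[u_1^r]$ extraction and the $1/(u_1\hbar\cS(u_1\hbar))$ denominator in~\eqref{eq:UiI}. The main obstacle is precisely this bookkeeping step: one must verify the inequality $jd_1\ge k(d_1-2)+r(d_1-1)+2m$ in every admissible case, and this is where the specific $\hbar^2$-deformation $\hat\psi=\cS(\hbar\partial_y)P_1+\log(P_2/P_3)$ plays an essential role. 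Lemma~\ref{lem:cvy} removes the log contribution uniformly, while the $\cS(\hbar\partial_y)$-dressing of $P_1$ gives the precise degree bound on $(\cS(v\hbar\partial_y)-1)P_1(y)$ needed to make the inequality hold.
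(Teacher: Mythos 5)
Your overall strategy is the same as the paper's: a local pole-order count at a simple zero $B$ of $R_2$, using that $y=R_1/R_2$ has a simple pole there, that $Q_1^{-1}$ vanishes to order $d_1+1$ so that the prefactor $D_1^{j-1}Q_1^{-1}$ supplies a zero of order $jd_1+1$, and that the logarithmic exponential contributes no pole as $y\to\infty$. The gap is in the final step, which you explicitly leave unverified: the reduction to the inequality $jd_1\ge k(d_1-2)+r(d_1-1)+2m$ over all admissible $(j,r,k,m)$. This inequality is not merely unchecked --- with your decoupled per-factor bounds it is actually \emph{false} for admissible tuples. Take $k=m=0$, $j=1$, $r=3$ (admissible: $[u_1^3]$ is produced by the $a=2$ term of $1/(u_1\hbar\,\cS(u_1\hbar))$ acting on the constant term of the second exponential, contributing to $\tau^{(1)}_2$, and $[v^1]$ of $(\partial_y+v\psi'(y))^3$ is nonzero). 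Your bound gives vanishing order $jd_1-r(d_1-1)=3-2d_1<0$ for $d_1\ge 2$. The term is in fact regular, but only because with $j=1$ and $k=0$ exactly one of the three factors $(\partial_y+v\psi'(y))$ can be $v\psi'(y)$ (pole order $d_1-1$, paired with one power of $v$) while the other two must be $\partial_y$ (each \emph{lowering} the pole order by $1$); your bound $r(d_1-1)$ assumes the worst case for all $r$ factors simultaneously, which is incompatible with extracting $[v^j]$ for $j<r$.

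What is missing, therefore, is the coupling between the $v$-degree (resp.\ $u_1$-degree) of each factor and the pole order it carries. The paper's proof encodes this coupling automatically by replacing the extraction of $[v^j]$ and $[u_1^r]$ with the substitutions $v=(z_1-B)^{d_1}$ and $u_1=z_1-B$ (keeping only terms of positive $v$-degree, resp.\ nonnegative $u_1$-degree), after which each building block --- the operator $\partial_y+v\psi'(y)$, the exponential $e^{v(\cS(v\hbar\partial_y)-1)P_1(y)}$, and the exponential in $u_1$ --- is checked to be individually regular at $z_1=B$; see the passage from~\eqref{eq:R2-expr1-orderpole-B} to~\eqref{eq:R2-expr3-orderpole-B}. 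To repair your argument you would need to replace the decoupled bounds by bounds on (pole order)$\,-\,d_1\cdot(v\text{-degree})$ and (pole order)$\,-\,(u_1\text{-degree})$ for each factor, which is the same computation in different notation. A minor additional inaccuracy: the $\hbar^{2m}$-coefficient of $e^{u_1(\cS(u_1\hbar Q_1D_1)-1)R_1/R_2}$ is not a linear combination of $(z_1\partial_{z_1})^{2m}(R_1/R_2)$ alone; multi-factor terms have pole order up to $2m+f$ for $f$ factors, again compensated only by their higher $u_1$-degree $2m+f$.
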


\begin{proof}

Let $B$ be a zero of $R_2(z)$. 
		Note that for $z_1\rightarrow B$ we have $y(z_1)=R_1(z_1)/R_2(z_1) \rightarrow \infty$, moreover, if $B$ is a simple zero of $R_2$ then it is a simple pole of $y(z_1)$. Let us count the order of the pole of
\begin{align} \label{eq:R2-expr1-orderpole-B}
& \sum_{r=0}^\infty\sum_{j=1}^\infty D_1^{j-1}\Bigg([v^j]
(\partial_y+v\,\psi'(y))^r e^{v\left(\cS(v\,\hbar\,\partial_y)-1\right)P_1(y)
}e^{v\left(\frac{\cS(v\,\hbar\,\partial_y)}{\cS(\hbar\,\partial_y)}-1\right)\log\left(\frac{P_2(y)}{P_3(y)}
	\right)}\Big|_{y=\frac{R_1(z_1)}{R_2(z_1)}}\times \\
\notag & \hspace{3cm} \frac{1}{Q_1}
	[u_1^r] \frac{e^{u_1\left(\cS(u_1\,\hbar\,Q_1\,D_1)-1\right)\frac{R_1(z_1)}{R_2(z_1)} }}{u_1\hbar\,\cS(u_1\,\hbar)}\Bigg)
\end{align}		
at $z_1=B$. To this end, two immediate observations are in order:
\begin{itemize}
	\item Firstly, note that $e^{v\left(\frac{\cS(v\,\hbar\,\partial_y)}{\cS(\hbar\,\partial_y)}-1\right)\log\left(\frac{P_2(y)}{P_3(y)}
		\right)}$ does not contribute to the pole at infinity in $y$, and, therefore, to the pole in $z$ at $z=B$, and can be safely ignored in this computation. Indeed, let us factorize $P_2(y)$ and $P_3(y)$ and then decompose $\log\left(\frac{P_2(y)}{P_3(y)}\right)$ as a sum with plus and minus signs of logarithms of linear functions. Then note that the operator $v\left(\frac{\cS(v\,\hbar\,\partial_y)}{\cS(\hbar\,\partial_y)}-1\right)$ contains at least one differentiation over $y$, therefore we obtain with necessity a regular function at infinity in $y$.
	\item Secondly, note that $Q_1^{-1}$ has zero of order $d_1+1$ at $z_1=B$ and each application of $D_1=Q_1^{-1}z_1\partial_{z_1}$ decreases the degree of the pole in $z_1$ at $B$ by $d_1$. The total effect of the factor $Q_1^{-1}$ in the middle of the formula and $D_1^{j-1}$ is the decrease of the order of pole by $d_1j +1$.
\end{itemize}
Therefore, the order of the pole of~\eqref{eq:R2-expr1-orderpole-B} is equal to the order of pole at $z_1=B$ of
\begin{align} \label{eq:R2-expr2-orderpole-B}
	& (z_1-B)\sum_{r=0}^\infty\Bigg(
	(\partial_y+v\,\psi'(y))^r e^{v\left(\cS(v\,\hbar\,\partial_y)-1\right)P_1(y)
	}\Big|_{y=\frac{R_1(z_1)}{R_2(z_1)}}\times \\
	\notag & \hspace{3cm}
	[u_1^r] \frac{e^{u_1\left(\cS(u_1\,\hbar\,Q_1\,D_1)-1\right)\frac{R_1(z_1)}{R_2(z_1)} }}{u_1\hbar\,\cS(u_1\,\hbar)}\Bigg) \Big|'_{v=(z_1-B)^{d_1}}\,,
\end{align}	
where by $|'$ we mean that we only select the terms with $\deg v \geq 1$ before the substitution $v=(z_1-B)^{d_1}$.
Note also that
\begin{itemize}
	\item Since $y(z_1)$ has a simple pole at $z_1=B$, each $\partial_y$ decreases the order of pole in the resulting expression by $1$.
	\item Multiplication by $\psi'(y)$ increases the order of pole by $d_1-1$.
\end{itemize}		
Taking into account these two observations and that each $v$ factor decreases the order of pole by $d_1$, we see that each application of the operator $\partial_y + v\psi'(y)$ decreases the order of pole in the resulting expression by $1$.   Therefore, the order of the pole of~\eqref{eq:R2-expr2-orderpole-B} is equal to the order of pole at $z_1=B$ of
\begin{align} \label{eq:R2-expr3-orderpole-B}
	& (z_1-B)\Bigg(
	 e^{v\left(\cS(v\,\hbar\,\partial_y)-1\right)P_1(y)
	}\Big|_{y=\frac{R_1(z_1)}{R_2(z_1)}}
	\frac{e^{u_1\left(\cS(u_1\,\hbar\,Q_1\,D_1)-1\right)\frac{R_1(z_1)}{R_2(z_1)} }}{u_1\hbar\,\cS(u_1\,\hbar)}\Bigg) \Bigg|''_{\substack{v=(z_1-B)^{d_1}\\ u_1 = z_1-B}}\,,
\end{align}	
where by $|''$ we mean that we only select the terms with $\deg v \geq 1$ and regular in $u_1$ before the substitutions $v=(z_1-B)^{d_1}$, $u=z_1-B$.
In this expression, in the first exponent in $\cS(v\,\hbar\,\partial_y)-1$ each $v\,\hbar\,\partial_y$ does not increase the order of the pole at $z_1=B$ (in fact, it even decreases it by $d_1+1$); since $vP_1(y)$ has no pole at $z_1=B$ this means that the whole first exponential is regular. 
In the second exponent, in $\left(\cS(u_1\,\hbar\,z_1\,\partial_{z_1})-1\right)$ each $u_1\,\hbar\,z_1\,\partial_{z_1}$ preserves the order of the pole at $z_1=B$; since $u_1 R_1(z_1) / R_2(z_1)$ has no pole at $z_1=B$ this means that the whole second exponential is regular. 
Finally, $(z_1-B)/ (u_1\hbar\,\cS(u_1\,\hbar))$ is also regular at $z_1=B$ in this expression.

Thus, \eqref{eq:R2-expr3-orderpole-B} is regular at $z_1=B$, and therefore~\eqref{eq:R2-expr1-orderpole-B} is regular at $z_1=B$ as well.
\end{proof}

\begin{lemma} \label{lem:Fam1PfSecondLemma} $\tau_g^{(1)}$ is regular at the zeros of $\Pt_2$ that are not zeros of  $R_2$.
\end{lemma}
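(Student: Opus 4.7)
The plan is to parallel the pole-counting strategy of Lemma~\ref{lem:R2-Fam1}, but with the source of potential singularities now being the logarithmic term $\log P_2(y)$ in $\psi$ rather than a pole of $y(z)$. Let $A$ be a simple zero of $\Pt_2$ with $R_2(A)\neq 0$, and set $y_0 \coloneqq y(A)$. The generality assumption implies $P_2(y_0) = 0$ as a simple zero and $y'(A)\neq 0$, so $y - y_0$ and $z_1 - A$ are related simple parameters near $A$. The pole of $Q_1$ at $z_1 = A$ is controlled by the $P_2'/P_2$ summand of $\psi'(y(z_1))$: $Q_1$ has a simple pole, $Q_1^{-1}$ a simple zero, and hence $D_1 = Q_1^{-1} z_1\partial_{z_1}$ preserves orders of poles at $A$. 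Since the $[u_1^r]$-factor in $\overline{U}_1\cdot 1$ involves only $y(z_1)$ and $z_1\partial_{z_1}$ acting on it and is therefore regular at $A$, the analysis reduces to showing regularity at $A$ of $\sum_{r,j} D_1^{j-1}(L_{r,1}^j/Q_1)$ at each order of~$\hbar$.

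To control $L_{r,1}^j$ near $A$ I will use the conjugated form
\begin{equation*}
L_{r,1}^j = [v^j]\,e^{-v\psi(y)}\partial_y^r\, e^{v\cS(v\hbar\partial_y)\hat\psi(y)/\cS(\hbar\partial_y)}\Big|_{y=y(z_1)}.
\end{equation*}
Crucially, $\psi$ and $\hat\psi$ share the same singular part $\log(y-y_0)$ at $y_0$, so after factoring $(y-y_0)^v$ out of $e^{v\psi}$, the residual ratio is exactly the exponential treated in Lemma~\ref{lem:cvy}. That lemma provides the prefactor $(v+1)v(v-1)\cdots(v-2k+1)\,p_{2k}(v)$ at order $\hbar^{2k}$, accompanying a pole of order at most $2k$ at $y_0$. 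Applying Leibniz to $\partial_y^r$, the term where $\partial_y^\ell$ falls on $(y-y_0)^v$ contributes the extra prefactor $v(v-1)\cdots(v-\ell+1)$. Their product carries a block of consecutive integer roots, and upon extracting $[v^j]$ this structure forces every contribution of pole order $\geq 2$ in $(z_1 - A)$ to vanish. The residual simple pole of $L_{r,1}^j$ at $A$ is then cancelled by the simple zero of $Q_1^{-1}$, and $D_1^{j-1}$ preserves regularity, yielding regularity of $\tau_g^{(1)}$ at $z_1 = A$.

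The main obstacle is the combinatorial bookkeeping of these pole cancellations at arbitrary $\hbar$-order, in particular verifying the structural identity whereby the combined polynomial factor in $v$ forces the vanishing of all high-order poles. The explicit computation at $r = 2$, $\hbar^2$ --- where the simple-pole residues from the $j = 1$ and $j = 2$ contributions at $z_1 = A$ sum to zero --- illustrates the cancellation mechanism; the general argument proceeds by induction using the consecutive-roots structure of the product $v(v-1)\cdots(v-\ell+1)\cdot(v+1)v(v-1)\cdots(v-2k+1)$, combined with the bound that each $\hbar^{2k}$-coefficient of the $(\cS/\cS - 1)\log P_2$ exponential contributes a pole of order at most $2k$ at $y_0$.
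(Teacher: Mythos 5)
Your setup is correct and matches the paper's: the reduction to $\sum_{r,j} D_1^{j-1}\bigl(L^j_{r,1}Q_1^{-1}\cdot\mathrm{reg}_r\bigr)$, the identification of the single root $A$ of $P_2$ with $\log(y-A)$ as the only singular contribution, the appeal to Lemma~\ref{lem:cvy}, and the Leibniz/conjugation bookkeeping producing falling factorials in $v$ (equivalent to the paper's $(\partial_y+v\psi'(y))^r$ computation). But the core cancellation mechanism is misstated, and this is a genuine gap. It is not true that ``upon extracting $[v^j]$ this structure forces every contribution of pole order $\geq 2$ to vanish'': already at order $\hbar^2$ one has $c_2(v,y)=-\tfrac{(v+1)v(v-1)}{24\,(y-A)^2}$, so $[v^1]$ and $[v^3]$ each carry a genuine double pole at $y=A$; in general $L^j_{r,1}$ has poles of order up to $2k$ at order $\hbar^{2k}$. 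Consequently the picture ``residual simple pole of $L^j_{r,1}$ cancelled by the simple zero of $Q_1^{-1}$, then $D_1^{j-1}$ preserves regularity'' fails: after dividing by $Q_1$ the individual $j$-terms still have poles at $B$. Your own closing paragraph, where you note that the residues from different $j$ ``sum to zero,'' contradicts the main argument and concedes that the decisive step is exactly what is left unproved.

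The missing idea is that the cancellation happens only in the sum over $j$, and is an \emph{operator} identity rather than a coefficient identity. The paper resums $\sum_j D_1^{j-1}[v^j]\bigl(v\,G(v)\bigr)=G(v)\big|_{v\to D_1}$, which converts each falling factorial $v(v-1)\cdots(v-l+1)$ accompanying $(y-A)^{-l}$ into the differential operator $(D_1-1)\cdots(D_1-l+1)$ acting on $(z_1-B)^{-(l-1)}$ (one order of pole having been absorbed by the simple zero of $Q_1^{-1}$). Regularity then follows from the local expansion $D_1=\bigl(-(z_1-B)+O((z_1-B)^2)\bigr)\partial_{z_1}$ near $B$ --- which uses that $Q_1$ has a simple pole at $B$ with the precise normalization coming from $\psi'(y)\sim (y-A)^{-1}$ --- giving $D_1\,(z_1-B)^{-m}=m\,(z_1-B)^{-m}+O((z_1-B)^{-(m-1)})$, so that $(D_1-m)$ strips one order of pole and the product of consecutive factors kills the entire principal part by induction. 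Without this resummation and the specific local form of $D_1$, the consecutive-roots structure alone does not close the argument.
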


\begin{proof}
		Let $B$ be a zero of $\Pt_2$ which is not a zero of $R_2$. Note that in this case we can write $\tau^{(1)}_g$ 
		as
		\begin{equation}\label{eq:WgnIreg}
		 \tau^{(1)}_g = [\hbar^{2g}]\sum_{r=0}^\infty\sum_{j=1}^\infty D_1^{j-1}\left(\frac{L^j_{r,1}}{Q_1} \cdot
			\mathrm{reg}_r\right),
		\end{equation}
		where $L_r$ is given by \eqref{eq:LrI} and $\mathrm{reg}_r$ is some expression regular in $z_1$ at $z_1=B$.
		
		From the definition of $\Pt_2$ and the conditions of generality, and since $B$ is not a zero of $R_2$, there exists exactly one root $A$ of $P_2$ such that $B$ is a root of $R_1(z)-A\, R_2(z)$. Then note that
		\begin{equation}
			e^{v\left(\cS(v\,\hbar\,\partial_y)-1\right)\frac{P_1(y)}{P_2(y)}}e^{v\left(\frac{\cS(v\,\hbar\,\partial_y)}{\cS(\hbar\,\partial_y)}-1\right)\log\left(\frac{P_3(y)}{P_4(y)}\right)} = e^{v\left(\frac{\cS(v\,\hbar\,\partial_y)}{\cS(\hbar\,\partial_y)}-1\right)\log\left(y-A\right)} \cdot \mathrm{reg},
		\end{equation}
		where $\mathrm{reg}$ is regular in $y$ at $A$, and the pole in $z_1$ at $B$ in the whole expression for 
		$\tau^{(1)}_g$
		can only come from the $e^{v\left(\frac{\cS(v\,\hbar\,\partial_y)}{\cS(\hbar\,\partial_y)}-1\right)\log\left(y-A\right)}$ part (after the substitution $y=y(z_1)$). According to Lemma \ref{lem:cvy}, we can then rewrite
		\begin{align}\label{eq:Lv1}	
			& (\partial_y+v\,\psi'(y))^r e^{v\left(\cS(v\,\hbar\,\partial_y)-1\right)P_1(y)
			}e^{v\left(\frac{\cS(v\,\hbar\,\partial_y)}{\cS(\hbar\,\partial_y)}-1\right)\log\left(\frac{P_2(y)}{P_3(y)}
				\right)} \\ \notag
			& = (\partial_y+v\,\psi'(y))^r\left(\sum_{k=1}^\infty\hbar^{2k}v(v-1)(v-2)\cdots(v-2k+1) \dfrac{p_{k}(v)}{(y-A)^{2k}}+\widetilde{\mathrm{reg}}\right),
		\end{align}
		where $p_k(v)$ are some polynomials in $v$ and $\widetilde{\mathrm{reg}}$ is regular in $y$ at $A$. Note that
		\begin{equation}
			\psi'(y) = \dfrac{1}{y-A}+\mathrm{reg}^\psi_A,
		\end{equation}
		where $\mathrm{reg}^\psi_A$ is regular in $y$ at $A$. Thus
		\begin{equation}
			(\partial_y+v\,\psi'(y))\dfrac{1}{(y-A)^{2k}}= (v-2k) \dfrac{1}{(y-A)^{2k+1}} + O\left(\dfrac{1}{(y-A)^{2k}}\right).
		\end{equation}
		This means that we can rewrite \eqref{eq:Lv1} as
		\begin{equation}
		 \sum_{k=1}^\infty\hbar^{2k}\sum_{l=1}^\infty v(v-1)(v-2)\cdots(v-l+1) \dfrac{\tilde{p}_{k,l}(v)}{(y-A)^{l}}+\widetilde{\widetilde{\mathrm{reg}}},
		\end{equation}
		where $\tilde{p}_{k,l}(v)$ are some polynomials in $v$ and $\widetilde{\widetilde{\mathrm{reg}}}$ is regular in $y$ at $A$. Now let us plug $y=y(z_1)$ into this expression and substitute it into \eqref{eq:WgnIreg}. Note that for $z_1\rightarrow B$
		\begin{equation}
			\dfrac{1}{(y(z_1)-A)^l} = 	\dfrac{1}{\left(\frac{R_1(z_1)}{R_2(z_1)}-A\right)^l}=\dfrac{C}{(z_1-B)^l}+O\left(\dfrac{1}{(z_1-B)^{l-1}}\right)
		\end{equation}
		for some constant $C$ and $Q_1^{-1}$ has a simple zero at $z_1=B$.
		This means, taking into account that $1/Q_1 \sim (z_1-B)$ for $z_1\rightarrow B$, that Equation~\eqref{eq:WgnIreg} can be rewritten as
		\begin{align}
			&\tau^{(1)}_g
			= 
			 [\hbar^{2g}]\sum_{r=0}^\infty\sum_{j=1}^\infty D_1^{j-1}
			[v^j]\, \left(v\sum_{k=2}^\infty\hbar^{k}\sum_{l=1}^\infty (v-1)(v-2)\cdots(v-l+1) \dfrac{\eta_{r,k,l}(v)}{(z_1-B)^{l-1}}+\widetilde{\mathrm{reg}}_r\right),
		\end{align}
		where $\eta_{r,k,l}$ are some expressions polynomial in $v$
		and $\widetilde{\mathrm{reg}}_r$ is regular in $z_1$ at $B$.
		Taking the sum over $j$ we can rewrite this expression as
		\begin{equation}\label{eq:WgnID}
			\tau^{(1)}_g =
			[\hbar^{2g}]\sum_{r=0}^\infty \left(\sum_{k=2}^\infty\hbar^{k}\sum_{l=1}^\infty \eta_{r,k,l}(D_1)\cdot (D_1-1)(D_1-2)\cdots(D_1-l+1) \dfrac{1}{(z_1-B)^{l-1}}+\widetilde{\mathrm{reg}}_r\right).
		\end{equation}
		%
		Now note that
		\begin{align}
		D_1 = \left(-(z_1-B) + O\left((z_1-B)^2\right)\right)	 \partial_{z_1}
		\end{align}
		at $z_1\rightarrow B$, and, therefore, by an easy inductive argument,
		\begin{align}
			(D_1-1)(D_1-2)\cdots(D_1-l+1) \dfrac{1}{(z_1-B)^{l-1}}
		\end{align}
		is regular at $z_1\rightarrow B$ for any $l\geq 1$. This implies that \eqref{eq:WgnID} is regular at $z_1\rightarrow B$.
\end{proof}
		
\begin{lemma} $\tau^{(1)}_g$ is regular at the zeros of $\Pt_3$ that are not zeros of  $R_2$.
\end{lemma}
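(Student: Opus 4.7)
The plan is to mirror the proof of the preceding lemma essentially verbatim, exploiting the symmetric roles played by $P_2$ and $P_3$ in the formula $\psi(y)=P_1(y)+\log(P_2(y)/P_3(y))$ and in $\hat\psi$. The logarithmic contribution of $P_3$ enters with the opposite sign to that of $P_2$, but this sign flip is harmless for the pole-counting argument: a sign change of the exponent does not affect analyticity properties of the resulting exponential.

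More concretely, let $B$ be a zero of $\widetilde P_3$ that is not a zero of $R_2$. By the generality assumption on $P_3,R_1,R_2$, there is a unique root $A$ of $P_3$ such that $R_1(B)-A\,R_2(B)=0$, and this root is a simple zero of the map $z\mapsto R_1(z)/R_2(z)-A$. Following the previous lemma, I would isolate the factor responsible for the potential pole at $B$ as
\begin{equation}
e^{-v\left(\frac{\cS(v\,\hbar\,\partial_y)}{\cS(\hbar\,\partial_y)}-1\right)\log(y-A)}\cdot\mathrm{reg},
\end{equation}
where $\mathrm{reg}$ is regular in $y$ at $y=A$. Lemma~\ref{lem:cvy} applies just as well to this expression (the lemma is unchanged if $\log(y-A)$ is replaced by $-\log(y-A)$, since the vanishing at $v\in\{-1,0,1,\dots,2k-1\}$ is symmetric under the sign flip), and therefore yields the same factorization
\begin{equation}
c_{2k}(v,y)=\frac{(v+1)v(v-1)\cdots(v-2k+1)}{(y-A)^{2k}}\,\widetilde p_{2k}(v)
\end{equation}
with some polynomials $\widetilde p_{2k}(v)$ (just with a different sign).

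With this in hand, the rest of the argument is identical to the proof of Lemma~\ref{lem:Fam1PfSecondLemma}: expanding $(\partial_y+v\psi'(y))^r$ using that $\psi'(y)$ has a simple pole $-1/(y-A)+\mathrm{reg}^\psi_A$ at $y=A$ (the sign opposite to the $P_2$-case, but this only multiplies the relevant coefficients), then substituting $y=y(z_1)$ and using $1/Q_1\sim (z_1-B)$ at $z_1\to B$, one arrives at an expression of the form
\begin{equation}
\tau^{(1)}_g=[\hbar^{2g}]\sum_{r=0}^\infty\!\left(\sum_{k=2}^\infty\hbar^{k}\sum_{l=1}^\infty q_{r,k,l}(D_1)(D_1-1)(D_1-2)\cdots(D_1-l+1)\frac{1}{(z_1-B)^{l-1}}+\widetilde{\mathrm{reg}}_r\right).
\end{equation}
The concluding step is then the same observation as in Lemma~\ref{lem:Fam1PfSecondLemma}: since $D_1=(-(z_1-B)+O((z_1-B)^2))\partial_{z_1}$ at $z_1\to B$, an induction on $l$ shows that $(D_1-1)(D_1-2)\cdots(D_1-l+1)(z_1-B)^{-(l-1)}$ is regular at $z_1=B$, proving regularity of $\tau^{(1)}_g$ at $B$. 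I do not expect any genuine obstacle here: the whole content is that the $P_3$-case is the $P_2$-case read with a global sign change, which does not interact with the regularity mechanism.
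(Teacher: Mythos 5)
Your overall strategy (mirror the $\Pt_2$ argument with sign adjustments) is indeed the paper's intended route, but the specific sign bookkeeping is wrong at the crucial step, and as written the argument fails. The claim that Lemma~\ref{lem:cvy} ``is unchanged if $\log(y-A)$ is replaced by $-\log(y-A)$'' is false: the operator $v\bigl(\tfrac{\cS(v\hbar\partial_y)}{\cS(\hbar\partial_y)}-1\bigr)$ is \emph{odd} in $v$, so replacing $\log(y-A)$ by $-\log(y-A)$ amounts to replacing $v$ by $-v$, and the vanishing locus of the coefficients moves from $\{-1,0,1,\dots,2k-1\}$ to $\{1,0,-1,\dots,-(2k-1)\}$; these sets are not equal for $k\geq 1$. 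Concretely, at $v=2$ one computes $e^{2(\cS(2\hbar\partial_y)/\cS(\hbar\partial_y)-1)(-\log(y-A))}=(y-A)^2/\bigl((y-A)^2-\hbar^2/4\bigr)=\sum_{j\ge0}\hbar^{2j}/\bigl(4^j(y-A)^{2j}\bigr)$, whose $\hbar^4$-coefficient is $1/(16(y-A)^4)\neq0$, whereas your claimed factorization $(v+1)v(v-1)\cdots(v-2k+1)\,\widetilde p_{2k}(v)$ would force vanishing at $v=2$ for $k=2$. The correct statement for the $\Pt_3$ case therefore involves the \emph{ascending} products $v(v+1)\cdots(v+l-1)$ (equivalently, apply Lemma~\ref{lem:cvy} at $-v$), not the descending ones.

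This matters because the concluding regularity mechanism flips as well. Near a zero $B$ of $\Pt_3$ one has $\check Q(B)=B\,R_2(B)^{d_1+1}\Pt_2(B)\Pt_3'(B)$ (the surviving term of \eqref{eq:QcheckI} is $+R_2^{d_1+1}\Pt_2\Pt_3'$ rather than $-R_2^{d_1+1}\Pt_2'\Pt_3$), which gives $D_1=\bigl(+(z_1-B)+O((z_1-B)^2)\bigr)\partial_{z_1}$ --- the opposite sign to the $\Pt_2$ case, contrary to what you assert. With your descending factors one would get $(D_1-j)(z_1-B)^{-j}=-2j\,(z_1-B)^{-j}+\dots$, which does \emph{not} reduce the order of the pole. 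The three sign changes --- residue $-1$ of $\psi'$ at $A$, ascending instead of descending factorials in $v$ (hence $(D_1+1)(D_1+2)\cdots(D_1+l-1)$ after resummation over $j$), and $D_1\sim +(z_1-B)\partial_{z_1}$ --- must all be made together; then $(D_1+j)(z_1-B)^{-j}=O\bigl((z_1-B)^{-j+1}\bigr)$ and the argument of Lemma~\ref{lem:Fam1PfSecondLemma} goes through, which is exactly the ``adjustment of a few signs'' the paper alludes to. You flipped only the first and explicitly asserted the other two are unchanged, so the proof as proposed does not close.
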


\begin{proof} The proof of this lemma is completely analogous to the proof of Lemma~\ref{lem:Fam1PfSecondLemma}, up to adjustment of a few signs in the computation. 
\end{proof}
	
\begin{lemma} \label{lem:Fam1-tau1-infty}
	$\tau^{(1)}_g$ is regular at $z=\infty$.
\end{lemma}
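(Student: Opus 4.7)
The plan is to prove regularity at $z_1=\infty$ by a careful growth-order analysis, following the spirit of Lemmas~\ref{lem:R2-Fam1} and~\ref{lem:Fam1PfSecondLemma} but replacing the analysis of polar orders at a finite point by analysis of the leading asymptotic behavior as $z_1\to\infty$. After introducing the local coordinate $w=1/z_1$ at infinity, one uses $z_1\partial_{z_1}=-w\partial_w$, which preserves the order of vanishing at $w=0$; thus each basic factor in the formula for $\tau^{(1)}_g$ can be classified by its scaling degree in $z_1$ at infinity.

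The analysis splits into cases. When $e_1\le e_2$, $y(z_1)$ tends to a finite limit at $z_1=\infty$ and $Q_1$ tends to a nonzero constant (under the generality assumption that $\infty$ is not a critical point of $X$), so all building blocks remain bounded at infinity and there is nothing further to check. The main case is $e_1>e_2$ with $d_1\ge 1$: here $y(z_1)\sim z_1^{e_1-e_2}$, $\psi'(y(z_1))\sim z_1^{(d_1-1)(e_1-e_2)}$, $Q_1\sim z_1^{d_1(e_1-e_2)}$, and consequently each application of $D_1=z_1\partial_{z_1}/Q_1$ provides a decay factor $z_1^{-d_1(e_1-e_2)}$. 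I would next estimate the growth at infinity of $L^j_{r,1}$ by examining $(\partial_y+v\psi'(y))^r$ applied to the $\hbar$-correction exponentials (noting that $L^j_{r,1}|_{\hbar=0}$ vanishes for $j>r$ and equals $(\psi'(y(z_1)))^r$ for $j=r$), and the growth of the exponential factor $e^{u_1(\cS(u_1\hbar Q_1 D_1)-1)R_1/R_2}$ using the identity $Q_1 D_1=z_1\partial_{z_1}$, which preserves $z_1$-degree at infinity. Both are polynomially bounded at infinity in a way that depends in a controlled manner on the $\hbar$-order.

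Balancing the decay from $(1/Q_1)\cdot D_1^{j-1}$ against the growth of $L^j_{r,1}$ and the exponential, and noting that the sum over $(j,r)$ is finite at each fixed $\hbar^{2g}$ (since $L^j_{r,1}$ is polynomial in $v$ and the exponential is polynomial in $u_1$ at each $\hbar$-order), one concludes that each summand in $\tau^{(1)}_g$ remains bounded as $z_1\to\infty$. The main obstacle will be the precise tracking of how $\hbar^{2k}$-corrections to $L^j_{r,1}$ and the exponential can locally increase the $z_1$-growth beyond the $\hbar=0$ leading order, and verifying that these increases are always compensated by the simultaneous $\hbar$-order balance built into the formula. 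This parallels the role played by Lemma~\ref{lem:cvy} in the proof of Lemma~\ref{lem:Fam1PfSecondLemma}, and a similar identity-based cancellation argument, applied at infinity rather than at a finite pole of $y$, should make the balance manifest.
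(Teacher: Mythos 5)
Your setup is the same as the paper's: the case split $e_1\le e_2$ versus $e_1>e_2$, and the asymptotic orders $y\sim z_1^{e_1-e_2}$, $\psi'(y)\sim z_1^{(d_1-1)(e_1-e_2)}$, $Q_1\sim z_1^{d_1(e_1-e_2)}$ are exactly the data the paper uses. But the proof stops at the decisive step: you never actually carry out the balance between the decay coming from $Q_1^{-1}D_1^{j-1}$ and the growth of $L^j_{r,1}$ and of the exponential factors, deferring it to ``a similar identity-based cancellation argument'' in the spirit of Lemma~\ref{lem:cvy}. That deferral is both a gap and a misdiagnosis. The point of the paper's argument is that at $z_1\to\infty$ \emph{no} cancellation identity is needed: since extracting $[v^j]$ is paired with the prefactor $D_1^{j-1}Q_1^{-1}$, each power of $v$ carries an effective degree $-d_1(e_1-e_2)$, and since extracting $[u_1^r]$ is paired with $r$ applications of $\partial_y+v\psi'(y)$, each power of $u_1$ carries an effective degree $-(e_1-e_2)$. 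Under this bookkeeping every single factor is already regular: $vP_1(y)$ and $u_1R_1/R_2$ have nonpositive degree, each operator $v\hbar\partial_y$ or $u_1\hbar z_1\partial_{z_1}$ inside $\cS(\cdot)-1$ has nonpositive degree, and the $\log(P_2/P_3)$ exponential only decays. The $\hbar^{2k}$-corrections therefore never increase the growth at infinity, contrary to the ``main obstacle'' you anticipate.

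The reason your analogy with Lemma~\ref{lem:Fam1PfSecondLemma} fails is structural: at a finite zero $B$ of $\Pt_2$ one has $y(z_1)\to A$ finite, so each $\partial_y$ \emph{raises} the order of the pole at $y=A$ and the $\hbar$-corrections genuinely produce higher-order poles, which is why the factorization $v(v-1)\cdots(v-2k+1)$ of Lemma~\ref{lem:cvy} and its cancellation against $(D_1-1)(D_1-2)\cdots$ are indispensable there. At $z_1\to\infty$ one has $y(z_1)\to\infty$, so each $\partial_y$ \emph{lowers} the order of the pole, and the whole difficulty evaporates into the degree count above. To complete your proof you should replace the appeal to a hypothetical cancellation identity by the explicit substitutions $v\mapsto z_1^{d_1(e_2-e_1)}$ and $u_1\mapsto z_1^{e_2-e_1}$ (keeping only the terms with $\deg v\ge 1$ and nonnegative $u_1$-degree, the latter of which also removes the $1/(u_1\hbar\,\cS(u_1\hbar))$ pole at $u_1=0$) and observe directly that the resulting expression is regular at $z_1\to\infty$.
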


\begin{proof}
	
	Note that the following calculations of degrees are correct due to the condition of generality which we imposed on all polynomials in the consideration.
	Let us consider two cases. First, assume that $e_1\leq e_2$. Then the degrees of the poles of $\check Q$ and $Q$ at $z\rightarrow \infty$ are given by
	\begin{align}
		\deg \check{Q}(z) &= e_1(d_2+d_3)+e_2(1+d_1);\\
		\deg Q(z) &= 0.
	\end{align}
	In this case it is then clear that all parts of $\tau^{(1)}_g$ are regular at $z_1\rightarrow \infty$.
	
	Now let $e_1 > e_2$.
	The degrees of the poles of $\check Q$ and $Q$ at $z\rightarrow \infty$ are given by
	\begin{align}\label{eq:QcdegI}
		\deg \check{Q}(z) &= e_1(d_1+d_2+d_3)+e_2 ;\\\label{eq:QdegI}
		\deg Q(z) &= (e_1-e_2)d_1.
	\end{align}
	We recall that $\tau^{(1)}_g$ is equal to the coefficient of $\hbar^{2g-1}$ in the expansion of
	\begin{align} \label{eq:infinity-expr1-orderpole-B}
		& \sum_{r=0}^\infty\sum_{j=1}^\infty D_1^{j-1}\Bigg([v^j]
		(\partial_y+v\,\psi'(y))^r e^{v\left(\cS(v\,\hbar\,\partial_y)-1\right)P_1(y)
		}e^{v\left(\frac{\cS(v\,\hbar\,\partial_y)}{\cS(\hbar\,\partial_y)}-1\right)\log\left(\frac{P_2(y)}{P_3(y)}
			\right)}\Big|_{y=\frac{R_1(z_1)}{R_2(z_1)}}\times \\
		\notag & \hspace{3cm} \frac{1}{Q_1}
		[u_1^r] \frac{e^{u_1\left(\cS(u_1\,\hbar\,Q_1\,D_1)-1\right)\frac{R_1(z_1)}{R_2(z_1)} }}{u_1\hbar\,\cS(u_1\,\hbar)}\Bigg)
	\end{align}	
 and count the order of pole of this expression at $z\to \infty$.
The count follows exactly the same scheme as in the proof of Lemma~\ref{lem:R2-Fam1}. We begin with a few observations:
\begin{itemize}
	\item The factor $e^{v\left(\frac{\cS(v\,\hbar\,\partial_y)}{\cS(\hbar\,\partial_y)}-1\right)\log\left(\frac{P_2(y)}{P_3(y)}
		\right)}|_{y=\frac{R_1(z_1)}{R_2(z_1)}}$ has no pole at $z_1\to\infty$.
	\item The factor $Q_1^{-1}$ has zero of order $d_1(e_1-e_2)$ at $z_1\to \infty$ and each application of $D_1=Q_1^{-1}z_1\partial_{z_1}$ decreases the degree of the pole at $z_1\to \infty$ by $d_1(e_1-e_2)$. The total effect of the factor $Q_1^{-1}$ in the middle of the formula and $D_1^{j-1}$ is the decrease of the order of pole by $jd_1(e_1-e_2)$.
\end{itemize}
Hence, the order of pole of~\eqref{eq:infinity-expr1-orderpole-B} at $z_1\to \infty$ is equal to the order of pole at $z_1\to \infty$ of the following expression:
\begin{align} \label{eq:infinity-expr2-orderpole-B}
	& \sum_{r=0}^\infty\Bigg(
	(\partial_y+v\,\psi'(y))^r e^{v\left(\cS(v\,\hbar\,\partial_y)-1\right)P_1(y)
	}\Big|_{y=\frac{R_1(z_1)}{R_2(z_1)}}
	[u_1^r] \frac{e^{u_1\left(\cS(u_1\,\hbar\,Q_1\,D_1)-1\right)\frac{R_1(z_1)}{R_2(z_1)} }}{u_1\hbar\,\cS(u_1\,\hbar)}\Bigg)\Big|'_{v=z_1^{d_1(e_2-e_1)}},
\end{align}	
where by $|'$ we mean that we only select the terms with $\deg v \geq 1$ before the substitution $v=z_1^{d_1(e_2-e_1)}$.
Note also that
\begin{itemize}
	\item Since $y(z_1)$ has a pole of order $e_1-e_2$ at $z_1\to\infty$, each $\partial_y$ decreases the order of pole in the resulting expression by $e_1-e_2$.
	\item Multiplication by $\psi'(y)$ increases the order of pole by $(e_1-e_1)(d_1-1)$.
\end{itemize}	
Taking into account these two observations and that each $v$ factor decreases the order of pole by $d_1(e_1-e_2)$, we see that each application of the operator $\partial_y + v\psi'(y)$ decreases the order of pole in the resulting expression by $e_1-e_2$. Hence, the order of pole of~\eqref{eq:infinity-expr2-orderpole-B} at $z_1\to \infty$ is equal to the order of pole at $z_1\to \infty$ of the following expression:
\begin{align} \label{eq:infinity-expr3-orderpole-B}
	& \Bigg(
	e^{v\left(\cS(v\,\hbar\,\partial_y)-1\right)P_1(y)
	}\Big|_{y=\frac{R_1(z_1)}{R_2(z_1)}}
	\frac{e^{u_1\left(\cS(u_1\,\hbar\,Q_1\,D_1)-1\right)\frac{R_1(z_1)}{R_2(z_1)} }}{u_1\hbar\,\cS(u_1\,\hbar)}\Bigg)\Bigg|''_{\substack{v=z_1^{d_1(e_2-e_1)} \\ u_1=z_1^{e_2-e_1}}},
\end{align}	
where by $|''$ we mean that we only select the terms with $\deg v \geq 1$ and regular in $u_1$ before the substitutions $v=z_1^{d_1(e_2-e_1)}$, $u=z_1^{e_2-e_1}$.

In this expression, in the first exponent in $\cS(v\,\hbar\,\partial_y)-1$ each $v\,\hbar\,\partial_y$ does not increase the order of the pole at $z_1=B$ (in fact, it even decreases it by $(d_1+1)(e_1-e_2)$); since $vP_1(y)$ has no pole at $z_1=B$ this means that the whole first exponential is regular.
In the second exponent, in $\left(\cS(u_1\,\hbar\,z_1\,\partial_{z_1})-1\right)$ each $u_1\,\hbar\,z_1\,\partial_{z_1}$ does not increase the order of the pole at $z_1=B$ (in fact, it even decreases it by $e_1-e_2$); since $u_1 R_1(z_1) / R_2(z_1)$ has no pole at $z_1=B$ this means that the whole second exponential is regular.
Thus the only possible pole at $z_1\to \infty$ in this expression can occur from the pole in $u_1$ at $u_1=0$ but this pole is removed by the condition that we only select the non-negative powers of $u_1$.

So,  \eqref{eq:infinity-expr3-orderpole-B} is regular at $z_1\to\infty$, and, therefore, $\tau^{(1)}_g$ is regular at $z_1\to\infty$ as well.
\end{proof}	

\subsubsection{Part 2 of the proof: $\tau^{(2)}_{g}+\tau^{(3)}_{g}$} \label{prp:2}
	Now let us show that $\tau^{(2)}_{g}+\tau^{(3)}_{g}\in\Theta$.
	We proceed analogously to Part 1 of the proof presented in Section~\ref{prp:1}. In a sequence of lemmata below we show that there are no poles at the zeros of $R_2$, $\Pt_2$, $\Pt_3$, and $\infty$. As in the case of $\tau^{(1)}_g$ in Section~\ref{prp:1}, this is sufficient
	in order to conclude that $\tau^{(2)}_{g}+\tau^{(3)}_{g}\in\Theta$.
	
	Combining this with the result of Part 1 of the proof given in Section~\ref{prp:1},  we obtain that $H_{g,1}=\tau^{(1)}_{g}+ \tau^{(2)}_{g}+\tau^{(3)}_{g}+\mathrm{const}\in\Theta$. With the explanation in the beginning of Section~\ref{sec:ProjPropProofs}, this completes the proof of Theorem~\ref{thm:Xi-Family1}.
	
\begin{lemma}\label{lem:R2-Fam1-tau23} $\tau_g^{(2)}+\tau_g^{(3)}$ is regular at the zeros of $R_2(z)$.
\end{lemma}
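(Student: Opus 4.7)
The plan is to show that $\tau^{(3)}_g$ and $\tau^{(2)}_g$ are individually regular at any zero $B$ of $R_2(z)$.

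For $\tau^{(3)}_g = \bigl([u^{2g}]\cS(u)^{-1}\bigr)\,\partial_y^{2g-1}\log(P_2(y)/P_3(y))\bigm|_{y=R_1(z_1)/R_2(z_1)}$ the argument is immediate: at $z_1 = B$ we have $y(z_1) \to \infty$, and $\log(P_2(y)/P_3(y)) = (d_2-d_3)\log y + O(1/y)$ near $y = \infty$, so its $(2g-1)$-st derivative (for $g \ge 1$) is $O(y^{-(2g-1)}) + O(y^{-2})$, hence regular (in fact vanishing) at $y = \infty$, and therefore regular at $z_1 = B$.

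For $\tau^{(2)}_g = [\hbar^{2g}]\sum_{j\geq 1}D_1^{j-1}\bigl(L^{j+1}_{0,1}\cdot D_1 y(z_1)\bigr)$ the plan is to carry out, at $z_1 = B$, the same kind of pole-order bookkeeping that is the core of the proof of Lemma~\ref{lem:R2-Fam1}. The relevant ingredients are: $Q_1^{-1}$ has a zero of order $d_1+1$ at $B$, so each $D_1 = Q_1^{-1}z_1\partial_{z_1}$ shifts the order at $B$ by $+d_1$ and the outer $D_1^{j-1}$ by $+(j-1)d_1$; the factor $D_1 y(z_1)$ contributes a zero of order $d_1-1$; the second exponential in $L^{j+1}_{0,1}|_{y=R_1/R_2}$ is regular at $B$ in each $\hbar$-coefficient (its components are derivatives of $\log(P_2/P_3)$ evaluated at $y=\infty$, which are regular there); and the first exponential contributes, in its $[v^{j+1}\hbar^{2g}]$-coefficient, a pole of order at most $(j+1-2g)d_1 - 2g$ at $B$, obtained by assigning $v$ the weight $d_1$ exactly as in Lemma~\ref{lem:R2-Fam1}. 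Summing these bounds, the net order at $B$ is at least $(2g-1)(d_1+1)$, which is positive for $g \geq 1$, so $\tau^{(2)}_g$ is regular at $B$ term-by-term in $j$.

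The main technical obstacle is the sharp pole bound on the first exponential: one has to observe that $(\cS(v\hbar\partial_y) - 1)P_1(y) = \sum_{k\geq 1}\alpha_k (v\hbar)^{2k}\partial_y^{2k}P_1(y)$, that each $\partial_y^{2k}P_1$ is a polynomial in $y$ of degree $d_1-2k$ (vanishing for $2k > d_1$), and then to distribute the $v$- and $\hbar$-weights across the finite product of such factors produced by expanding the exponential and selecting $[v^{j+1}\hbar^{2g}]$. The borderline case $d_1 = 0$ has to be handled separately, since there the first exponential trivialises, $Q_1$ has only a simple pole at $B$, and $D_1$ preserves pole orders; one then uses that the second exponential vanishes to sufficient order at $B$ --- its $\hbar$-coefficients contain $\partial_y^{2k}\log(P_2/P_3) = O(y^{-2k-1})$ at $y=\infty$ --- to compensate for the pole of $D_1 y(z_1)$.
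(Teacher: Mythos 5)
Your proposal is correct and follows essentially the same route as the paper: the paper likewise notes that $\tau^{(3)}_g$ is manifestly regular at a zero $B$ of $R_2$ and then establishes regularity of $\tau^{(2)}_g$ by the same pole-order bookkeeping as in Lemma~\ref{lem:R2-Fam1} (assigning $v$ the weight $d_1$, discarding the second exponential as regular, and using that each $v\hbar\partial_y$ in the first exponential raises the order of vanishing by $d_1+1$ while $D_1 y(z_1)$ contributes order $d_1-1$). Your version is merely more explicit about distributing the $\hbar$-grading and about the degenerate case $d_1=0$; the only nit is that $\partial_y^{2k}\log(P_2(y)/P_3(y))$ is in general $O(y^{-2k})$ rather than $O(y^{-2k-1})$ at $y=\infty$ (unless $\deg P_2=\deg P_3$), which is still more than enough for your compensation argument.
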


\begin{proof}
		Let $B$ be a zero of $R_2(z_1)$.
		Note that  $\tau^{(3)}_{g}$ is regular at $B$, so we only have to check that $\tau^{(2)}_{g}$ is regular at $z_1=B$ as well. To this end we have to count the order of the pole of
		\begin{equation}
			\sum_{j=1}^\infty D_1^{j-1} [v^{j+1}]
			e^{v\left(\cS(v\,\hbar\,\partial_y)-1\right)P_1(y)
			}e^{v\left(\frac{\cS(v\,\hbar\,\partial_y)}{\cS(\hbar\,\partial_y)}-1\right)\log\left(\frac{P_2(y)}{P_3(y)}
				\right)}\Big|_{y=\frac{R_1(z_1)}{R_2(z_1)}} D_1\dfrac{R_1(z_1)}{R_2(z_1)}
		\end{equation}
		at $z_1=B$. Repeating \emph{mutatis mutandis} the argument in the proof of Lemma~\ref{lem:R2-Fam1} (up to a slight difference in that we do not have a $1/Q_1$ factor but instead have an extra $D_1$ factor now), we see that the order of pole of this expression at $z_1=B$ is equal to the order of pole at $z_1=B$ of the following expression:
		\begin{align}
		& \Bigg(\frac{1}{v}	e^{v\left(\cS(v\,\hbar\,\partial_y)-1\right)P_1(y)
			}\Big|_{y=\frac{R_1(z_1)}{R_2(z_1)}} \dfrac{R_1(z_1)}{R_2(z_1)} \Bigg)
			\Big|'_{v=(z_1-B)^{d_1}} \\ \notag
			& = \Bigg(\frac{1}{v}\left(	e^{v\left(\cS(v\,\hbar\,\partial_y)-1\right)P_1(y)}-1\right) \Bigg)
				\Big|_{v=(z_1-B)^{d_1}}
			\dfrac{R_1(z_1)}{R_2(z_1)} \,.
		\end{align}
		In the first line, again, prime stands for taking only terms with degree $\geq 2$ in the $v$-expansion before the substitution.
		 In the resulting expression, in the first exponent $vP_1(y)$ has no pole at $z_1=B$, and in the subsequent application of $\cS(v\,\hbar\,\partial_y)-1$ each $v\,\hbar\,\partial_y$ increases the order of zero by $d_1+1$. Note that $v\,\hbar\,\partial_y$ is applied at least twice. Thus the first factor has a zero of order at least $2(d_1+1)-d_1=d_1+2$ at $z_1=B$. The second factor, $R_1(z_1)/R_2(z_1)$, has a simple pole at $z_1=B$. Hence the whole expression is regular at $z_1=B$.
\end{proof}

\begin{lemma} \label{lem:Fam1PfSecondLemma-tau23} $\tau_g^{(2)}+\tau_g^{(3)}$ is regular at the zeros of $\Pt_2$ that are not zeros of  $R_2$.
\end{lemma}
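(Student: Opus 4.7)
The plan is to adapt the strategy of the proof of Lemma~\ref{lem:Fam1PfSecondLemma}, with the new feature that $\tau_g^{(2)}$ is \emph{not} regular at $z_1=B$ on its own---it lacks the $Q_1^{-1}$ factor that $\overline U_1$ provides to $\tau_g^{(1)}$---and its residual singularity is cancelled precisely by $\tau_g^{(3)}$.

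Let $B$ be a zero of $\Pt_2$ which is not a zero of $R_2$, and let $A$ be the unique simple zero of $P_2$ with $y(B)=A$. Only the $\log(y-A)$-singular part of the second exponential in $L^{j+1}_{0,1}$ contributes a pole at $z_1=B$; by Lemma~\ref{lem:cvy} its $\hbar^{2k}$-coefficient has the form $(v+1)v(v-1)(v-2)\cdots(v-2k+1)\,p_{2k}(v)/(y-A)^{2k}$ multiplied by a factor regular at $y=A$. Substituting $y=y(z_1)$ and multiplying by $D_1 y(z_1)$ (which has a simple zero at $B$ since $Q_1^{-1}\sim-(z_1-B)/B$), the singular contribution to $\tau_g^{(2)}$ at each $k\leq g$ has pole order at most $2k-1$ at $B$. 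Writing $(v+1)v(v-1)\cdots(v-2k+1)\,p_{2k}(v)=v\cdot G_k(v)$ and applying the counterpart for $[v^{j+1}]$ of the operator manipulation used in Lemma~\ref{lem:Fam1PfSecondLemma}, one expresses the singular contribution as an operator polynomial of the schematic form $(D_1+1)\prod_{m=1}^{2k-1}(D_1-m)\,p_{2k}(D_1)$, acting on $D_1y(z_1)/(y(z_1)-A)^{2k}$. The local identity $D_1\nu^{-m}=m\nu^{-m}+O(\nu^{-m+1})$ with $\nu=z_1-B$ implies that the chain $\prod_{m=1}^{2k-1}(D_1-m)$ annihilates all principal parts of $D_1y/(y-A)^{2k}$ down to regular terms exactly as in Lemma~\ref{lem:Fam1PfSecondLemma}. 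However, the surviving factor $(D_1+1)$ (coming from the extra $(v+1)$ in Lemma~\ref{lem:cvy}) does not annihilate anything, and a nontrivial residual principal part of $\tau_g^{(2)}$ at $B$ remains.

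The next step will be to show that this residual is cancelled by the principal part of $\tau_g^{(3)}=b_g\,(2g-2)!/(y(z_1)-A)^{2g-1}+(\mathrm{reg})$ with $b_g=[u^{2g}]\cS(u)^{-1}$, expanded as a Laurent series in $z_1-B$. Structurally, this cancellation must hold because $\tau_g^{(3)}$ is (up to an irrelevant additive constant) the $\hbar^{2g}$-coefficient of $\int_{0}^{y(z_1)}\bigl(\cS(\hbar\partial_y)^{-1}-1\bigr)\log(P_2/P_3)\,dy$, and hence $\sum_{j\geq 1} D_1^{j-1}L^{j+1}_{0,1}D_1y(z_1)+\tau_g^{(3)}$ is exactly the full contribution to $H_{g,1}$ in formula~\eqref{eq:(g,1)} associated with the $(P_2,P_3)$-part of $\hat\psi$. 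The base case $g=1$ makes the mechanism transparent: a direct computation based on $[\hbar^2]F(v,y)=v^3\partial_y^2\psi(y)/24-v\partial_y^2\log(P_2/P_3)/24$ together with $p_2(v)=-1/24$ gives $\tau_1^{(2)}\sim\bigl(24\,y'(B)(z_1-B)\bigr)^{-1}$, while $\tau_1^{(3)}\sim b_1/\bigl(y'(B)(z_1-B)\bigr)$; the identity $b_1=[u^2]\cS(u)^{-1}=-1/24$ yields the cancellation.

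The main obstacle will be the explicit coefficient-matching for general $g$: one must verify order by order that the residual principal parts of $\tau_g^{(2)}$ at each pole order $1,2,\ldots,2g-1$ cancel the corresponding Laurent coefficients of the principal part of $\tau_g^{(3)}$ in the variable $z_1-B$. This reduces to a combinatorial identity linking the polynomials $p_{2k}(v)$ of Lemma~\ref{lem:cvy} to the coefficients $b_g$, and should be established by a generating-function argument rooted in $\cS(\hbar\partial_y)^{-1}-1=\sum_{k\geq 1}b_k\hbar^{2k}\partial_y^{2k}$ together with the derivation of Lemma~\ref{lem:cvy}.
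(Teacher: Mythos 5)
Your overall strategy --- isolating the $\log(y-A)$ singularity via Lemma~\ref{lem:cvy} and tracking how the $v$-to-$D_1$ substitution produces operator chains of the form $\prod_m(D_1-m)$ --- is the right kind of argument, but the proposal contains an internal inconsistency and, more seriously, leaves the core of the lemma unproved. The inconsistency first: you assert that the chain $\prod_{m=1}^{2k-1}(D_1-m)$ annihilates all principal parts of $D_1y/(y-A)^{2k}$, yet that the remaining factor $(D_1+1)$ then leaves a nontrivial residual pole. Polynomials in $D_1$ commute, and $D_1$ does not increase the pole order at $B$ (indeed $D_1\nu^{-m}=m\nu^{-m}+O(\nu^{-m+1})$ with $\nu=z_1-B$), so if the operator really had the form $(D_1+1)\prod_{m=1}^{2k-1}(D_1-m)\,p_{2k}(D_1)$ the output would be regular and there would be nothing for $\tau^{(3)}_g$ to cancel. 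The actual source of the residual singularity of $\tau^{(2)}_g$ is the index mismatch: $\tau^{(2)}_g$ pairs $[v^{j+1}]$ with $D_1^{j-1}$, a shift of \emph{two}, while only one explicit factor of $v$ is available; writing $G_k(v)=(v+1)\prod_{m=1}^{2k-1}(v-m)p_{2k}(v)=\sum_m g_mv^m$, the operator one obtains is $\sum_{j\ge1}g_jD_1^{j-1}$, i.e.\ $(G_k(D_1)-G_k(0))/D_1$, and this is no longer divisible by the chain because $G_k(0)=-(2k-1)!\,p_{2k}(0)$ is in general nonzero.

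Second, the heart of the lemma --- that the residual principal part of $\tau^{(2)}_g$ is cancelled by that of $\tau^{(3)}_g$ --- is verified only for $g=1$ and is otherwise explicitly deferred to an unestablished ``combinatorial identity'' and a ``generating-function argument''; as written this is a plan, not a proof. The paper sidesteps the coefficient matching entirely with a short device: since $Q_1$ has a pole at $B$, the operator $D_1=Q_1^{-1}z_1\partial_{z_1}$ preserves the pole order of any function at $B$, so it suffices to prove regularity of $D_1(\tau^{(2)}_g+\tau^{(3)}_g)$; one then checks that $D_1\tau^{(3)}_g=[\hbar^{2g}]L^{1}_{0,1}\,D_1(R_1/R_2)$ is precisely the missing $j=0$ term, so that $D_1(\tau^{(2)}_g+\tau^{(3)}_g)=[\hbar^{2g}]\sum_{j\ge0}D_1^{j}L^{j+1}_{0,1}D_1(R_1/R_2)$. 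With the pairing $[v^{j+1}]\leftrightarrow D_1^{j}$ the shift is one, matched by the single factor of $v$, the full chain $(D_1-1)\cdots(D_1-2g+1)$ survives intact, and it kills the pole of order $2g-1$ exactly as in Lemma~\ref{lem:Fam1PfSecondLemma}. To complete your route you would either have to adopt this trick or genuinely prove the coefficient identity you postulate.
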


\begin{proof}
\ Let B be a zero of $\Pt_2$ which is not a zero of $R_2$. 
Let us 
show that $\tau^{(2)}_{g}+\tau^{(3)}_{g}$ when added together do not have a pole at $B$.

Note that the expression $\tau^{(2)}_{g}+\tau^{(3)}_{g}$ has a pole at $z_1=B$ if and only if $D_1(\tau^{(2)}_{g}+\tau^{(3)}_{g})$ has a pole there. Indeed, as	$Q_1 \sim (z_1-B)^{-1}$ for $z_1\rightarrow B$, the operator $D_1=Q_1^{-1}z_1\partial_{z_1}$ preserves the degree of the pole at $B$ for any function.  We have:
\begin{equation}
	D_1(\tau^{(2)}_{g}+\tau^{(3)}_{g}) = [\hbar^{2g}]\sum_{j=0}^\infty D_1^{j}L^{j+1}_{0,1}D_1\dfrac{R_1(z_1)}{R_2(z_1)}.
\end{equation}

From that point the proof becomes analogous to the proof of Lemma~\ref{lem:Fam1PfSecondLemma}. There exists exactly one root $A$ of $P_2$ such that $B$ is a root of $R_1(z)-A\, R_2(z)$. Lemma \ref{lem:cvy} implies that
\begin{align}
	& e^{v\left(\cS(v\,\hbar\,\partial_y)-1\right)P_1(y)
	}e^{v\left(\frac{\cS(v\,\hbar\,\partial_y)}{\cS(\hbar\,\partial_y)}-1\right)\log\left(\frac{P_2(y)}{P_3(y)}
		\right)}\Big|_{y=\frac{R_1(z_1)}{R_2(z_1)}}
	\\ \notag
	& =\sum_{k=1}^\infty\hbar^{2k}v(v-1)(v-2)\cdots(v-2k+1) \dfrac{p_{k}(v)}{(y-A)^{2k}}+\mathrm{reg},
\end{align}
where $\mathrm{reg}$ is regular in $y$ at $y=A$. We also have
\begin{equation}
D_1\frac{R_1(z_1)}{R_2(z_1)} = O(z_1-B).
\end{equation}
Thus
\begin{align} \label{eq:Fam1Tau23SecondLemma-LastExpr}
	&
	D_1(\tau^{(2)}_{g}+\tau^{(3)}_{g}) =
	\\
	&
	\notag
	\sum_{j=0}^\infty D_1^{j}[v^{j+1}]\left(v(v-1)(v-2)\cdots(v-2g+1) \dfrac{\tilde{p}_{g}(v)}{(z_1-B)^{2g-1}}+\mathrm{reg_1}\right)\,\mathrm{reg_2},
\end{align}
where $\tilde{p}_g$, $\mathrm{reg_1}$ and $\mathrm{reg_2}$ are some expressions regular in $z_1$ at $z_1=B$.

By the same argument as in the proof of Lemma~\ref{lem:Fam1PfSecondLemma}, the right hand side of~\eqref{eq:Fam1Tau23SecondLemma-LastExpr} is regular at $z_1=B$.
\end{proof}

\begin{lemma}  $\tau_g^{(2)}+\tau_g^{(3)}$ is regular at the zeros of $\Pt_3$ that are not zeros of  $R_2$.
\end{lemma}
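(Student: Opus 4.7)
The plan is to follow the proof of Lemma~\ref{lem:Fam1PfSecondLemma-tau23} verbatim, with $P_3$ in the role of $P_2$; the substantive changes are sign flips stemming from $\psi(y)=\cdots-\log P_3(y)$, which propagate through the expansion and also reverse the sign of the local expansion of $Q$ at $B$.

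As in the $\Pt_2$ case, I would first reduce to showing regularity of
\begin{equation}
D_1(\tau_g^{(2)}+\tau_g^{(3)})=[\hbar^{2g}]\sum_{j\geq 0}D_1^{j}\,L^{j+1}_{0,1}\,D_1\,\frac{R_1(z_1)}{R_2(z_1)}
\end{equation}
at $z_1=B$, using that $Q_1$ has a simple pole at $B$ (by the generality assumption), so $D_1$ preserves pole orders there. Generality also singles out a unique root $A$ of $P_3$ with $y(B)=A$, and the singular exponential in $L^{j+1}_{0,1}$ near $y=A$ factors as
\begin{equation}
e^{v\bigl(\frac{\cS(v\hbar\partial_y)}{\cS(\hbar\partial_y)}-1\bigr)\log(P_2(y)/P_3(y))}=e^{-v\bigl(\frac{\cS(v\hbar\partial_y)}{\cS(\hbar\partial_y)}-1\bigr)\log(y-A)}\cdot\mathrm{reg}.
\end{equation}

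Since $\cS$ is even, the exponent on the right equals $(-v)\bigl(\tfrac{\cS((-v)\hbar\partial_y)}{\cS(\hbar\partial_y)}-1\bigr)\log(y-A)$, so Lemma~\ref{lem:cvy} applies with $v\mapsto -v$ and yields a Laurent expansion in $y-A$ whose coefficients have the falling-factorial numerator
\begin{equation}
(-v+1)(-v)(-v-1)\cdots(-v-2k+1)\,p_{2k}(-v)=\pm(v-1)v(v+1)(v+2)\cdots(v+2k-1)\,p_{2k}(-v).
\end{equation}
Substituting $y=R_1(z_1)/R_2(z_1)$, using $D_1(R_1/R_2)=O(z_1-B)$ to lower the pole order in $z_1-B$ by one, and applying $\sum_{j\geq 0}D_1^j[v^{j+1}]$ to absorb the explicit $v$ factor, I arrive at operators of the form $(D_1-1)(D_1+1)(D_1+2)\cdots(D_1+2k-1)\,p_{2k}(-D_1)$ acting on $1/(z_1-B)^l$ for various $l\leq 2k-1$.

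The cancellation that closes the argument rests on a sign observation specific to the $\Pt_3$ case: near a simple zero $B$ of $\Pt_3$, a direct residue computation of $Q(z)=1-zy'(z)\psi'(y(z))$ yields $Q(z)\sim +B/(z-B)$, opposite in sign to the $\Pt_2$ case where $Q(z)\sim -B/(z-B)$. Consequently $D_1=(z/Q)\partial_z\sim +(z_1-B)\partial_{z_1}$, and $D_1$ has eigenvalue $-m$ (rather than $+m$) on the principal part $1/(z_1-B)^m$. Therefore, for every $m\in\{1,2,\ldots,2k-1\}$ the factor $(D_1+m)$ in the product above vanishes at the eigenvalue $D_1=-m$, so the operator annihilates the principal part of $1/(z_1-B)^l$ for all $l\leq 2k-1$ and the resulting expression is regular at $B$. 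This is the exact analog of the cancellation in the $\Pt_2$ proof, with both the polynomial in $D_1$ and the eigenvalues of $D_1$ simultaneously sign-flipped. The main care needed is in tracking the sign of the local expansion of $Q(z)$ near a $\Pt_3$-zero, which is precisely the structural feature that makes the falling-factorial cancellation continue to function; beyond that, I do not anticipate any serious obstacle.
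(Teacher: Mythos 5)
Your proposal is correct and follows exactly the route the paper intends: the paper's own proof of this lemma is literally "repeat the proofs of Lemmata~\ref{lem:Fam1PfSecondLemma} and~\ref{lem:Fam1PfSecondLemma-tau23} up to adjusting a few signs," and you have correctly identified the two sign flips in question — the evenness of $\cS$ turning Lemma~\ref{lem:cvy} into a statement about $c_{2k}(-v,y)$ with roots at $v\in\{1,0,-1,\dots,-(2k-1)\}$, and the sign reversal $Q(z)\sim +B/(z-B)$ (hence $D_1\sim +(z_1-B)\partial_{z_1}$, eigenvalue $-m$ on $(z_1-B)^{-m}$) coming from the minus sign in $\psi'(y)=\cdots-P_3'(y)/P_3(y)$ — and verified that they conspire so the falling-factorial cancellation still goes through.
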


\begin{proof} The proof of this lemma is completely analogous to the proof of Lemmata~\ref{lem:Fam1PfSecondLemma} and~\ref{lem:Fam1PfSecondLemma-tau23}, up to adjustment  of a few signs in the computation.
\end{proof}

\begin{lemma}\label{lem:fam1tau23inf} $\tau^{(2)}_{g}+\tau^{(3)}_{g}$ is regular at $z_1\to\infty$. 	
\end{lemma}

\begin{proof} It is evident that $\tau^{(3)}_{g}$ is regular at infinity. Let us check this for $\tau^{(2)}_{g}$. As in the proof of Lemma~\ref{lem:Fam1-tau1-infty}, we have to consider two cases: $e_1\leq e_2$ and $e_1>e_2$. In the first case, all parts of the expression are manifestly regular.
	
Note again that the following calculations of degrees are correct due to the condition of generality which we imposed on all polynomials in the consideration.
Let $e_1>e_2$.
Recall from \eqref{eq:QcdegI}-\eqref{eq:QdegI} that for $z_1\rightarrow \infty$
\begin{align}
	\deg \check{Q}(z) &= e_1(d_1+d_2+d_3)+e_2; \\
	\deg Q(z) &= (e_1-e_2)d_1.
\end{align}	

We recall the expression for $\tau^{(2)}_g$, which is equal to the coefficient of $\hbar^{2g}$ in
\begin{equation}
	\sum_{j=1}^\infty D_1^{j-1} [v^{j+1}]
	e^{v\left(\cS(v\,\hbar\,\partial_y)-1\right)P_1(y)
	}e^{v\left(\frac{\cS(v\,\hbar\,\partial_y)}{\cS(\hbar\,\partial_y)}-1\right)\log\left(\frac{P_2(y)}{P_3(y)}
		\right)}\Big|_{y=\frac{R_1(z_1)}{R_2(z_1)}} D_1\dfrac{R_1(z_1)}{R_2(z_1)},
\end{equation}
 and count the order of pole of the latter expression at $z\to \infty$.
The count follows exactly the same scheme as in the proofs of  Lemmata~\ref{lem:Fam1-tau1-infty}, \ref{lem:R2-Fam1}, and~\ref{lem:R2-Fam1-tau23}. Repeating the same argument here, we see that the order of the pole of this expression at $z_1=\infty$ is equal to the order of the pole at $z_1=\infty$ of the following expression:
\begin{align}
	& \Bigg(\frac{1}{v}	e^{v\left(\cS(v\,\hbar\,\partial_y)-1\right)P_1(y)
	}\Big|_{y=\frac{R_1(z_1)}{R_2(z_1)}} \dfrac{R_1(z_1)}{R_2(z_1)} \Bigg)
	\Big|'_{v=z_1^{d_1(e_2-e_1)}} \\ \notag
	& = \Bigg(\frac{1}{v}	\left(e^{v\left(\cS(v\,\hbar\,\partial_y)-1\right)P_1(y)}-1\right) \Bigg)
	\Big|_{v=z_1^{d_1(e_2-e_1)}}
	\dfrac{R_1(z_1)}{R_2(z_1)} \,,
\end{align}
where prime in the first line once again means that we keep only terms with degree $\geq 2$ in the $v$-expansion prior to the substitution.
Following similar reasoning to Lemma~\ref{lem:R2-Fam1-tau23}, we see that this expression is regular.



	This completes the proof for the statement that $\tau^{(2)}_{g}+\tau^{(3)}_{g}\in \Theta$.
%
%
\end{proof}

\begin{remark}
	Note that the precise form of $\hat{\psi}$ was crucial in the argument above. Namely, any other $\hbar^2$-deformation of $\psi$ would not have got canceled in the term
\begin{equation} [\hbar^{2g}]\int_{0}^{z_1}\dfrac{Q(z)}{z}\left.\left(\dfrac{1}{\cS(\hbar\partial_y)}\hat{\psi}(y)-\psi(y)\right)\right|_{y=y(z)}Dy(z) dz,
\end{equation}
	and this would lead to a pole at infinity in the whole expression. Moreover, in Lemmata~\ref{lem:R2-Fam1-tau23} and~\ref{lem:fam1tau23inf} we used the fact that $\cS(v\,\hbar\,\partial_y)-1$ is proportional to $v^2$, while any other $\hbar^2$--deformation of $\psi$ would not have canceled the factor $\cS(\hbar\partial_y)$ in $\frac{\cS(v\,\hbar\,\partial_y)}{\cS(\hbar\partial_y)}-1$, and thus this expression would not be proportional to $v$, which leads to unwanted poles as well.
\end{remark}

\subsection{Family II: proof}  \label{sec:ProofProjFamily2}
In this section we present the proof of Theorem~\ref{thm:Xi-Family2}. The logic of the proof is exactly the same as in the proof of Theorem~\ref{thm:Xi-Family1} presented in Section~\ref{sec:ProofProjFamily1}.

	Recall that for Family II we have
	\begin{align}\label{eq:FamII-psi}
		\psi(y)&=\alpha\,y, & \hat{\psi}(\hbar^2,y)&=\alpha\,y,\\ \label{eq:FamII-y}
		y(z)&=\dfrac{R_1(z)}{R_2(z)}+ \log\left(\dfrac{R_3(z)}{R_4(z)}\right), & \hat y(\hbar^2,z)&=\dfrac{R_1(z)}{R_2(z)}+\dfrac{1}{\cS(\hbar z\partial_z)}\,\log\left(\dfrac{R_3(z)}{R_4(z)}\right).
	\end{align}
	Denote $e_i\coloneqq\deg R_i$, $i=1,2,3,4$. 
	According to Remark~\ref{rem:generality}, we assume that the poles of zeros of polynomials $R_2$, $R_3$, $R_4$ and also of $dX$ are all simple.	

\subsubsection{Specialization of the formulas for $H_{g,1}$}	As we discussed above, it is sufficient to restrict the analysis of singularities to the case of $n=1$.
For $\hat \psi$ and $\hat y$ as in~\eqref{eq:FamII-psi},~\eqref{eq:FamII-y}, we have $L_r(v,y,\hbar)=(\alpha\,v)^r$ and the operator $\overline{U}_1$ takes the form
\begin{align}	\label{eq:UiII}
	\overline{U}_{1} f&=
	\alpha\sum_{r=1}^\infty (\alpha D_1)^{r-1}
	[u_1^r] \frac{e^{u_1\left(\frac{\cS(u_1\,\hbar\,Q_1\,D_1)}{\cS(\hbar\,Q_1\,D_1)}-1\right)\log\left(\frac{R_3(z_1)}{R_4(z_1)}\right) }e^{u_1\left(\cS(u_1\,\hbar\,Q_1\,D_1)-1\right)\frac{R_1(z_1)}{R_2(z_1)} }
	}{Q_1\cdot u_1\hbar\,\cS(u_1\,\hbar)}f(u_1,z_1),
\end{align}
where $Q_1 \coloneqq Q(z_1)$ for 
\begin{align}
	\label{eq:QII}
	Q(z) &= \dfrac{\check Q(z)}{R_2^2 R_3 R_4},\\
	\label{eq:QcheckII}
	\check Q(z)&= R_2^2 R_3 R_4+\alpha z \left(R_1 R_2' R_3 R_4 - R_1' R_2 R_3 R_4   + R_2^2 R_3 R_4' - R_2^2 R_3' R_4 \right).
\end{align}
Here we have slightly abused the notation: by \eqref{eq:Qcheckdef} we defined $\check Q(z)$ to be a monic polynomial, which is not necessarily the case in \eqref{eq:QcheckII}.

Since $L_{0,1}^{j+1}$ vanishes, we have
\begin{align} \label{eq:Hg1II}	
	H_{g,1}& =[\hbar^{2g}]\Bigg(\hbar\overline{U}_1 1 +\int_{0}^{z_1}\dfrac{\hat{y}(z)-y(z)}{z}dz
	+\mathrm{const}\Bigg) \\ \nonumber
	&=[\hbar^{2g}]\hbar\overline{U}_1 1 + \left([u^{2g}]\dfrac{1}{\cS(u)}\right) (z_1\partial_{z_1})^{2g-1}\log\left(\dfrac{R_3(z_1)}{R_4(z_1)}\right) +\mathrm{const}.
\end{align}
Denote the summands of $H_{g,1}$ by
\begin{align}
	\sigma^{(1)}_g&:=[\hbar^{2g}]\hbar\overline{U}_1 1,\\
	\sigma^{(2)}_g&:=\left([u^{2g}]\dfrac{1}{\cS(u)}\right) (z_1\partial_{z_1})^{2g-1}\log\left(\dfrac{R_3(z_1)}{R_4(z_1)}\right),
\end{align}
which allows to rewrite $H_{g,1}$ as
\begin{equation}\label{eq:HtauII}
	H_{g,1} = \sigma^{(1)}_g+\sigma^{(2)}_g +\mathrm{const}.
\end{equation}

	\subsubsection{Proof of Theorem~\ref{thm:Xi-Family2}}
	\label{prp:1II}
	As we explained in the beginning of Section~\ref{sec:ProjPropProofs}, it is sufficient to prove that $\sigma^{(1)}_g+\sigma^{(2)}_g\in\Theta$. Recall \eqref{eq:UiII}. Similarly to the case of $\tau^{(1)}_g$ of Section~\ref{prp:1}, we note that the expression
	\begin{equation}		
		\left(\frac{\cS(u_1\,\hbar\,Q_1\,D_1)}{\cS(\hbar\,Q_1\,D_1)}-1\right) \log\left(\frac{R_3(z_1)}{R_4(z_1)}\right)
	\end{equation}
	is a series in $\hbar$ with coefficients given by rational functions of $z_1$. So,  $\sigma^{(1)}_g$ is manifestly a rational function in $z_1$, and its set of possible poles includes $p_1,\dots,p_N$ (the zeros of $\check Q$),  the zeros of $R_2$, $R_3$, $R_4$, and $z\to\infty$. The same applies to $\sigma^{(2)}_g$, though, in fact, it is manifestly regular at the zeros of $R_2$. In a sequence of lemmata below we show that $\sigma^{(1)}_g+\sigma^{(2)}_g$ is regular at the zeros of $R_2$, $R_3$, $R_4$, and at $z\to\infty$. Since the poles at $p_1,\dots,p_N$ are odd with respect to the deck transformations (they are generated by the iterative application of the operator $D_1$ to a function with the first order poles at $p_1,\dots,p_N$, cf.~the proof of Proposition~\ref{prop:XiThetaequiv}), we conclude that $\sigma^{(1)}_g+\sigma^{(2)}_g\in \Theta$, and this completes the proof of Theorem~\ref{thm:Xi-Family2}.
	
	\begin{lemma} \label{lem:R2-Fam2} $\sigma_g^{(1)}+\sigma^{(2)}_g$ is regular at the zeros of $R_2(z)$.
	\end{lemma}	
	\begin{proof}	
		Since $\sigma^{(2)}_g$	is regular at the zeros of $R_2(z)$, we only have to prove the regularity of $\sigma_g^{(1)}$. The proof follows the same lines as the the proof of Lemma~\ref{lem:R2-Fam1}.
		
		Let $B$ be a zero of $R_2$. We recall the expression for $\sigma_g^{(1)}$:
		\begin{equation}	\label{eq:R2-expr1-orderpole-B-II}		
		[\hbar^{2g}]\alpha\sum_{r=1}^\infty (\alpha D_1)^{r-1}\left(\frac{1}{Q_1}
		[u_1^r] \frac{e^{u_1\left(\frac{\cS(u_1\,\hbar\,Q_1\,D_1)}{\cS(\hbar\,Q_1\,D_1)}-1\right)\log\left(\frac{R_3(z_1)}{R_4(z_1)}\right) }e^{u_1\left(\cS(u_1\,\hbar\,Q_1\,D_1)-1\right)\frac{R_1(z_1)}{R_2(z_1)} }
		}{u_1\,\cS(u_1\,\hbar)}\right).
	\end{equation}
		and count its the order of the pole of at $z_1=B$.
		 Two observations are in order:
		\begin{itemize}
			\item Firstly, note that $e^{u_1\left(\frac{\cS(u_1\,\hbar\,Q_1\,D_1)}{\cS(\hbar\,Q_1\,D_1)}-1\right)\log\left(\frac{R_3(z_1)}{R_4(z_1)}\right)}$ does not contribute to the pole at $B$ in $z_1$ and can be safely ignored in this computation.
			\item Secondly, note that $Q_1^{-1}$ has zero of order $2$ at $z_1=B$ and each application of $D_1=Q_1^{-1}z_1\partial_{z_1}$ decreases the degree of the pole in $z_1$ at $B$ by $1$. The total effect of the factor $Q_1^{-1}$ and of $D_1^{r-1}$ is the decrease of the order of pole by $r +1$.
		\end{itemize}

		Therefore, the order of pole of~\eqref{eq:R2-expr1-orderpole-B-II} is equal to the order of pole at $z_1=B$ of
		\begin{align} \label{eq:R2-expr2-orderpole-B-II}	
		[\hbar^{2g}]	\frac{e^{u_1\left(\cS(u_1\,\hbar\,Q_1\,D_1)-1\right)\frac{R_1(z_1)}{R_2(z_1)} }}
			{\cS(u_1\,\hbar)}\bigg|'_{u_1=z_1-B}.
		\end{align}	
		where by $|'$ we mean that we only select the terms with $\deg u_1 \geq 2$ before the substitution $u_1=(z_1-B)$.
		Recall that $Q_1D_1=z_1\partial_{z_1}$. In the exponent, in $\left(\cS(u_1\,\hbar\,z_1\,\partial_{z_1})-1\right)$ each $u_1\,\hbar\,z_1\,\partial_{z_1}$ preserves the order of the pole at $z_1=B$. Since $u_1 R_1(z_1) / R_2(z_1)$ has no pole at $z_1=B$ this means that the whole exponential in the numerator is regular. Since $1/\cS(u_1\,\hbar)$ after the substitution $u_1=z_1-B$ is also regular at $z_1=B$, the whole expression is regular at $z_1=B$ as well.
	\end{proof}

\begin{lemma} \label{lem:R3-Fam2} $\sigma_g^{(1)}+\sigma^{(2)}_g$ is regular at  the zeros of $R_3(z)$.
\end{lemma}	
\begin{proof}
	 Let $B$ be a zero of $R_3$.
	Analogously to the similar cases for the first family, Lemmata~\ref{lem:Fam1PfSecondLemma-tau23} and \ref{lem:Fam1PfSecondLemma}, let us 
	show that $\sigma^{(1)}_{g}+\sigma^{(2)}_{g}$ when added together (i.e., in this case, just the whole $H_{g,1}$) do not have a pole at $B$.
	
	Note that $H_{g,1}$ has a pole at $z_1=B$ if and only if $D_1H_{g,1}$ has a pole there. Indeed, since $Q_1$ has a simple pole at $z_1=B$, the operator $D_1=Q_1^{-1}z_1\partial_{z_1}$ preserves the degree of the pole at $B$ for any function. In our case, we have the following formula for $D_1H_{g,1}$:
	\begin{equation}
		D_1H_{g,1} = [\hbar^{2g}]
		\sum_{r=0}^\infty (\alpha D_1)^{r}\left(
		[u_1^r] \frac{e^{u_1\left(\frac{\cS(u_1\,\hbar\,Q_1\,D_1)}{\cS(\hbar\,Q_1\,D_1)}-1\right)\log\left(\frac{R_3(z_1)}{R_4(z_1)}\right) }e^{u_1\left(\cS(u_1\,\hbar\,Q_1\,D_1)-1\right)\frac{R_1(z_1)}{R_2(z_1)} }
		}{Q_1 \cdot u_1\cS(u_1\,\hbar)}\right).
	\end{equation}
	Note that
	\begin{equation}
		\log R_3(z_1) - \log R_4(z_1) = \log(z_1-B) + \mathrm{reg},
	\end{equation}
	where $\mathrm{reg}$ is regular in $z_1$ at $B$.
	Using this and taking also into account Remark \ref{rem:generality} we can represent $D_1H_{g,1}$ as the coefficient of $\hbar^{2g}$ in the following expression:
	\begin{equation} \label{eq:IIWgnreg}
\sum_{r=0}^\infty (\alpha D_1)^{r}[u_1^r]\left((z_1-B)e^{u_1\left(\frac{\cS(u_1\,\hbar\,z_1\partial z_1)}{\cS(\hbar\,z_1 \partial z_1)}-1\right)\log\left(z_1-B\right) } 	\cdot \widetilde{\mathrm{reg}}\right),
	\end{equation}
	where $\widetilde{\mathrm{reg}}=\widetilde{\mathrm{reg}}(\hbar)$ is regular at $z_1=B$. Lemma \ref{lem:cuzdz} implies that we can rewrite Expression \eqref{eq:IIWgnreg} as
	\begin{align}	
\sum_{r=0}^\infty (\alpha D_1)^{r}[u_1^r]\left(\sum_{k=2}^\infty\hbar^k\sum_{l=1}^{\infty}(u_1-1)(u_1-2)\cdots(u_1-k+1) \dfrac{p_{k,l}(u_1)}{(z-B)^{l-1}}+\widetilde{\widetilde{\mathrm{reg}}}\right),
	\end{align}
	where all $p_{k,l}$'s are polynomials in $u_1$,
	and $\widetilde{\widetilde{\mathrm{reg}}}$ is regular in $z_1$ at $B$. Taking the sum over $r$ we can rewrite this expression as
	\begin{equation}\label{eq:alphad}
\sum_{k=2}^\infty\hbar^k\sum_{l=1}^{\infty}p_{k,l}(\alpha D_1)\cdot(\alpha D_1-1)(\alpha D_1-2)\cdots(\alpha D_1-l+1) \dfrac{1}{(z_1-B)^{l-1}}+\widetilde{\widetilde{\mathrm{reg}}},
	\end{equation}
	Using Equations~\eqref{eq:QII} and~\eqref{eq:QcheckII}, we see that
	\begin{align}
		&\alpha D_1 = \dfrac{\alpha z_1}{Q_1} \partial_{z_1} =\left(-(z_1-B) + O\left((z_1-B)^2\right)\right)	 \partial_{z_1}
	\end{align}
	for $z_1\rightarrow B$. Substituting this in place of $\alpha D_1$ in \eqref{eq:alphad} in the last bracket before the polar term and applying it to the polar term, we get:
	\begin{align}
		&(\alpha D_1-l+1) \dfrac{1}{(z-B)^{l-1}} = \left(\left(-(z_1-B)+ O\left((z_1-B)^2\right)\right)\partial_{z_1} -l+1\right)\dfrac{1}{(z-B)^{l-1}} \\ \nonumber
		&= (-(-l+1)-l+1)\dfrac{1}{(z-B)^{l-1}}+\dfrac{O\left((z_1-B)^2\right)}{(z-B)^l} = O\left(\dfrac{1}{(z-B)^{l-2}}\right).
	\end{align}
	Repeating the same computation for the factors $(\alpha D_1 -i )$, $i=l-2,l-3,\dots,1$, in~\eqref{eq:alphad}, we see the whole polar part in $z_1$ at $B$ gets canceled. Thus, Expression~\eqref{eq:alphad} is regular at $z_1=B$, and therefore $D_1H_{g,1}$ and $H_{g,1}$ are regular at $B$ as well.
\end{proof}

\begin{lemma} \label{lem:Fam2R4}
	$\sigma_g^{(1)}+\sigma^{(2)}_g$ has no poles at the zeros of $R_4(z)$.
\end{lemma}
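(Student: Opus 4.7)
The plan is to follow the strategy of Lemma~\ref{lem:R3-Fam2} almost verbatim, while carefully tracking a sign change that arises because near a zero $B$ of $R_4$ one has $\log(R_3/R_4) = -\log(z_1-B) + \mathrm{reg}$ rather than $+\log(z_1-B) + \mathrm{reg}$. First I would note that $Q$ has a simple pole at $B$ (this time coming from the $R_4$ factor in the denominator of $Q = \check Q/(R_2^2 R_3 R_4)$), so $D_1$ preserves the order of any pole at $z_1 = B$, and it suffices to show that $D_1 H_{g,1}$ is regular at $B$. The combined one-sum formula used in the proof of Lemma~\ref{lem:R3-Fam2}, namely $D_1 H_{g,1} = [\hbar^{2g}]\sum_{r\geq 0}(\alpha D_1)^r(1/Q_1)[u_1^r]E/(u_1\cS(u_1\hbar))$ with the $r=0$ term accounting for $D_1\sigma^{(2)}_g$ and $r\geq 1$ producing $D_1\sigma^{(1)}_g$, holds equally well at zeroes of $R_4$.

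The next step is to adapt Lemma~\ref{lem:cuzdz} to $-\log(z_1-B)$. Since $\cS$ is an even function, the substitution $v = -u_1$ rewrites
\[
e^{-u_1(\cS(u_1\hbar z_1\partial_{z_1})/\cS(\hbar z_1\partial_{z_1}) - 1)\log(z_1-B)} = e^{v(\cS(v\hbar z_1\partial_{z_1})/\cS(\hbar z_1\partial_{z_1}) - 1)\log(z_1-B)},
\]
to which Lemma~\ref{lem:cuzdz} applies directly. Substituting back $v = -u_1$, the coefficient $\tilde d_{2k,l}(u_1)$ of $\hbar^{2k}(z_1-B)^{-l}$ acquires the polynomial factor $(u_1-1)u_1(u_1+1)(u_1+2)\cdots(u_1+l-1)$, which vanishes at $u_1 \in \{-l+1,-l+2,\dots,-1,0,1\}$---the mirror reflection around $0$ of the zero locus that appeared in the $R_3$ case.

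The crucial compatibility check is that this sign flip is exactly matched by a sign flip in the local expansion of $\alpha D_1$ at $B$. A direct calculation from \eqref{eq:QcheckII} shows that at a zero of $R_4$ the only surviving term gives $\check Q(B) = +\alpha B\, R_2(B)^2 R_3(B) R_4'(B)$, with the opposite sign from the analogous computation in the $R_3$ case (where the nonzero contribution came from $-R_2^2 R_3' R_4$). Consequently, $\alpha D_1 = +(z_1-B)\partial_{z_1} + O((z_1-B)^2)\partial_{z_1}$ near $B$, so that $(\alpha D_1)(1/(z_1-B)^k) = -k/(z_1-B)^k + O(1/(z_1-B)^{k-1})$, and thus the operator $(\alpha D_1+k)$ reduces the order of a pole of order $k$ at $B$ by at least one.

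Finally, extracting $[u_1^r]$ and applying $(\alpha D_1)^r$ turns the polynomial factor of $\tilde d_{2k,l}(u_1)$ into the operator product $(\alpha D_1-1)(\alpha D_1)(\alpha D_1+1)(\alpha D_1+2)\cdots(\alpha D_1+l-1)$ acting on $1/(z_1-B)^{l-1}$. The $l-1$ consecutive factors $(\alpha D_1+1), (\alpha D_1+2), \dots, (\alpha D_1+l-1)$ successively annihilate the leading poles of decreasing order, producing a regular function; the remaining factors $(\alpha D_1-1)$ and $(\alpha D_1)$ are harmless. Hence $D_1 H_{g,1}$, and therefore $H_{g,1}$ itself, is regular at $z_1 = B$. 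The main obstacle I anticipate is purely bookkeeping: the reversal of sign in the $\log$, the mirror reflection of the vanishing locus of $\tilde d_{2k,l}$, and the reversal of sign in the leading coefficient of $\alpha D_1$ must all line up for the cancellations to occur---but once these signs are verified to conspire correctly, the argument is structurally identical to that of Lemma~\ref{lem:R3-Fam2}.
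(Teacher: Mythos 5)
Your proposal is correct and matches the paper's intent exactly: the paper's own proof of this lemma is literally the one-line remark that it is ``completely analogous to the proof of Lemma~\ref{lem:R3-Fam2}, up to adjustment of a few signs,'' and you have carried out precisely those sign adjustments (the $-\log(z_1-B)$ local form, the reflection $v=-u_1$ using evenness of $\cS$, the flipped leading coefficient $\alpha D_1 = +(z_1-B)\partial_{z_1}+O((z_1-B)^2)\partial_{z_1}$ coming from $\check Q(B)=\alpha B R_2(B)^2R_3(B)R_4'(B)$) and verified they conspire to cancel the poles.
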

\begin{proof}
	The proof of this lemma is completely analogous to the proof of Lemma~\ref{lem:R3-Fam2}, up to adjustment of a few signs in the computation.
\end{proof}
	
\begin{lemma} \label{lem:Fam2-infty}
	$\sigma^{(1)}_g+\sigma^{(2)}_g$ is regular at $z_1\to\infty$.
\end{lemma}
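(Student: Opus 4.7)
The plan is to adapt the pole-counting scheme used in Lemma~\ref{lem:Fam1-tau1-infty} to the behavior at infinity. We will split into two cases according to the sign of $e_1 - e_2$, where $e_i := \deg R_i$. From \eqref{eq:QII}--\eqref{eq:QcheckII} one sees that in the easy case $e_1 \le e_2$ the function $Q(z)$ tends to $1 - \alpha(e_3-e_4)$ at infinity, a nonzero constant in general position, so $1/Q_1$ and $D_1 = z_1\partial_{z_1}/Q_1$ both preserve orders of growth at infinity. In the harder case $m := e_1-e_2 > 0$, one has $Q(z) \sim -\alpha m c\, z_1^m$ at infinity (for a nonzero constant $c$), so $1/Q_1$ has a zero of order $m$ there and each $D_1$ reduces degrees at infinity by exactly $m$.

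In the easy case, $R_1/R_2$ is bounded at infinity and $y(z)$ grows only logarithmically there. Since both $(z_1\partial_{z_1})^{2k}\log(R_3/R_4)$ and $(z_1\partial_{z_1})^{2k}(R_1/R_2)$ are $O(1/z_1)$ at infinity for every $k\ge 1$, the exponents $u_1 F_1 := u_1(\cS(u_1\hbar z_1\partial_{z_1})/\cS(\hbar z_1\partial_{z_1}) - 1)\log(R_3/R_4)$ and $u_1 F_2 := u_1(\cS(u_1\hbar z_1\partial_{z_1}) - 1)(R_1/R_2)$ appearing in \eqref{eq:UiII} are $O(1/z_1)$ at infinity at every positive $\hbar^{2k}$-order. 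Hence the exponential factors are $1 + O(1/z_1)$ at infinity; combined with $1/Q_1$ and $D_1^{r-1}$ both bounded at infinity, every term contributing to $\sigma^{(1)}_g$ is a rational function regular at infinity. Regularity of $\sigma^{(2)}_g$ at infinity is immediate since $(z_1\partial_{z_1})^{2g-1}\log(R_3/R_4) = O(1/z_1)$ for $g \ge 1$.

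In the harder case $e_1 > e_2$, we will estimate the order of pole at infinity along the lines of Lemma~\ref{lem:Fam1-tau1-infty}. The key inputs for the count are: $F_1$ has degree $\le -1$ at infinity at every positive $\hbar^{2k}$-order (because $(z_1\partial_{z_1})^{2k}\log(R_3/R_4) = O(1/z_1)$ for $k\ge 1$), $F_2$ has degree $\le m$ at every positive $\hbar^{2k}$-order, each $D_1$ and the leading $1/Q_1$ each contribute a zero of order $m$, and the coefficient extraction $[u_1^r]$ produces at most $r+1$ powers of $F_1 + F_2$. Substituting the auxiliary variable $u_1 \mapsto z_1^{-m}$ and keeping only nonnegative powers of $u_1$ (analogous to the ``$|''$'' convention used in \eqref{eq:infinity-expr3-orderpole-B}), the whole expression reduces to a product of exponentials whose exponents are bounded at infinity, times a prefactor regular at infinity; this gives the regularity of $\sigma^{(1)}_g$ in this case as well.

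The main obstacle will be in the hard case, where the naive pole count would not close, and one must crucially use the precise form of the $\hbar^2$-deformation $\hat y = R_1/R_2 + \cS(\hbar z_1\partial_{z_1})^{-1}\log(R_3/R_4)$. It is exactly the $\cS(\hbar z_1\partial_{z_1})^{-1}$ factor that produces the quotient $\cS(u_1\hbar z_1\partial_{z_1})/\cS(\hbar z_1\partial_{z_1})$ inside $F_1$, yielding the $O(1/z_1)$ behavior at infinity that drives the count. Without this deformation, the leading term $(z_1\partial_{z_1})\log(R_3/R_4) \to e_3 - e_4$ would survive at $\hbar^2$-order and produce an uncancellable pole at infinity in $\sigma^{(1)}_g$---the exact analogue for Family~II of the phenomenon recorded in the remark at the end of Section~\ref{sec:ProofProjFamily1}.
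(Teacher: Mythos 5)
Your argument is correct and follows essentially the same route as the paper's proof: split into the cases $e_1\le e_2$ (where $\deg Q=0$ and everything is manifestly bounded at infinity) and $e_1>e_2$ (where $1/Q_1$ and each $D_1$ contribute a zero of order $e_1-e_2$, the $\log(R_3/R_4)$-exponential is discarded as harmless, and the substitution $u_1\mapsto z_1^{e_2-e_1}$ reduces the count to checking that the remaining exponential and the prefactor $1/(u_1\cS(u_1\hbar))$ stay bounded). The treatment of $\sigma^{(2)}_g$ is likewise identical.

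One correction to your closing remark: the $\hbar^2$-deformation of $\hat y$ is \emph{not} what makes the count at infinity close. Both $u_1\bigl(\cS(u_1\hbar z_1\partial_{z_1})-1\bigr)$ and $u_1\bigl(\cS(u_1\hbar z_1\partial_{z_1})/\cS(\hbar z_1\partial_{z_1})-1\bigr)$ are even in $z_1\partial_{z_1}$ and begin with $(z_1\partial_{z_1})^2$ at order $\hbar^2$, so in either case the term $(z_1\partial_{z_1})\log(R_3/R_4)\to e_3-e_4$ never appears and $F_1=O(1/z_1)$ at infinity at every positive $\hbar$-order --- which is exactly the reason you (correctly) invoke in your actual pole count. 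The deformation is instead forced by the cancellations at the \emph{zeroes} of $R_3$ and $R_4$ (Lemmata~\ref{lem:R3-Fam2} and~\ref{lem:Fam2R4}, via Lemma~\ref{lem:cuzdz}), which is where the paper's remark at the end of Section~\ref{sec:ProofProjFamily2} locates its necessity. This does not affect the validity of your proof of the present lemma, since the steps you carry out rely only on the true statement $F_1=O(1/z_1)$, not on the misattributed explanation.
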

\begin{proof}
	Since $\sigma^{(2)}_g$ is regular at infinity, we only have to prove the regularity of $\sigma^{(1)}_g$.
	
	Note again that the following calculations of degrees are correct due to the condition of generality which we imposed on all polynomials in the consideration.
	Let us consider two cases. First, assume that $e_1\leq e_2$. Then the degrees of the poles of $\check Q$ and $Q$ at $z\rightarrow \infty$ are given by
	\begin{align}
		\deg \check{Q}(z) &=2e_2+e_3+e4;\\
		\deg Q(z) &= 0.
	\end{align}
	In this case it is then clear that all factors of $\sigma^{(1)}_g$ are regular at $z_1\rightarrow \infty$.
	
	Now let $e_1 > e_2$.
	The degrees of the poles of $\check Q$ and $Q$ at $z\rightarrow \infty$ are given by
	\begin{align}\label{eq:QcdegII}
		\deg \check{Q}(z) &= e_1+e_2+e_3+e_4 ;\\\label{eq:QdegII}
		\deg Q(z) &= e_1-e_2.
	\end{align}
	We recall the expression for $\sigma^{(1)}_g$, which is the coefficient of $\hbar^{2g}$ of
	\begin{align} \label{eq:infinity-expr1-orderpole-B-II}
		\alpha\sum_{r=1}^\infty (\alpha D_1)^{r-1}\left(\frac{1}{Q_1}
		[u_1^r] \frac{e^{u_1\left(\frac{\cS(u_1\,\hbar\,Q_1\,D_1)}{\cS(\hbar\,Q_1\,D_1)}-1\right)\log\left(\frac{R_3(z_1)}{R_4(z_1)}\right) }e^{u_1\left(\cS(u_1\,\hbar\,Q_1\,D_1)-1\right)\frac{R_1(z_1)}{R_2(z_1)} }
		}{u_1\cS(u_1\,\hbar)}\right),
	\end{align}	
	and count its order of pole of this expression at $z_1\to \infty$.
	The count follows exactly the same scheme as in the proof of Lemma~\ref{lem:R2-Fam2}.

	Two observations are in order:
	\begin{itemize}
		\item Firstly, note that $e^{u_1\left(\frac{\cS(u_1\,\hbar\,Q_1\,D_1)}{\cS(\hbar\,Q_1\,D_1)}-1\right)\log\left(\frac{R_3(z_1)}{R_4(z_1)}\right)}$ does not contribute to the pole at infinity and can be safely ignored in this computation.
		\item Secondly, note that $Q_1^{-1}$ has zero of order $e_1-e_2$ at $z_1=\infty$ and each application of $D_1=Q_1^{-1}z_1\partial_{z_1}$ decreases the degree of the pole in $z_1$ at infinity by $e_1-e_2$. The total effect of the factor $Q_1^{-1}$ and of $D_1^{r-1}$ is the decrease of the order of pole at infinity by $r(e_1-e_2)$.
	\end{itemize}

	Therefore, the order of pole of~\eqref{eq:infinity-expr1-orderpole-B-II} is equal to the order of pole at $z_1=\infty$ of
	\begin{align} \label{eq:infinity-expr2-orderpole-B-II}	
		&\left(\frac{e^{u_1\left(\cS(u_1\,\hbar\,Q_1\,D_1)-1\right)\frac{R_1(z_1)}{R_2(z_1)} }}
		{u_1\cS(u_1\,\hbar)}\right)\Big|'_{u_1=z_1^{e_2-e_1}}
		=\left(\frac{e^{u_1\left(\cS(u_1\,\hbar\,Q_1\,D_1)-1\right)\frac{R_1(z_1)}{R_2(z_1)} }-1}
		{u_1\cS(u_1\,\hbar)}\right)\Big|_{u_1=z_1^{e_2-e_1}}
	\end{align}
	where
	 by $|'$ we mean that we only select the terms with degree $\geq 1$ in the $u_1$-expansion before the substitution $u_1=z_1^{e_2-e_1}$.
	Recall that $Q_1D_1=z_1\partial_{z_1}$. In the exponent, in $\left(\cS(u_1\,\hbar\,z_1\,\partial_{z_1})-1\right)$ each $u_1\,\hbar\,z_1\,\partial_{z_1}$ has a zero at $z_1=\infty$ of degree $e_1-e_2$. Since $u_1 R_1(z_1) / R_2(z_1)$ has no pole at $z_1=\infty$ this means that the whole exponential is regular, and that the numerator on the right hand side of this formula has a zero of order $3(e_1-e_2)$ at $z_1\to\infty$. Note also that $(u_1\cS(u_1\,\hbar))^{-1}$ has a pole of order $e_2-e_1$ at $z_1\to \infty$. Therefore, the whole expression~\eqref{eq:infinity-expr2-orderpole-B-II} is regular, and this proves that $\sigma_g^{(1)}$ is regular at $z_1\to\infty$.
\end{proof}

\begin{remark}
	Note that the precise form of $\hat{y}$ was crucial in these computations. Namely, for any other $h^2$- deformation of $y$ we would not have cancellation of the pole in the arguments in Lemmata~\ref{lem:R3-Fam2} and~\ref{lem:Fam2R4}.
\end{remark}

\section{Topological recursion and its applications} \label{sec:TopoRecAppl}
We are ready to prove the General Principle discussed in the Introduction restricted to the cases of Families I and II of Table \ref{tab:families}. Even when restricted to these cases the statement is still quite general, and in particular covers all known cases of topological recursion for Hurwitz-type problems, and much more (see Section \ref{sec:appl} below).

In order to formulate the statements in the most general form, we first need to introduce the Bouchard-Eynard recursion, which extends the topological recursion \eqref{eq:toprec} to the cases when the zeros of $dx$ are not necessarily simple.

\subsection{Topological recursion}\label{sec:TopoRec}
\begin{definition}
	The \emph{Bouchard-Eynard recursion} \cite{BE13} is defined as follows. Let spectral curve $(\Sigma, x, y, B)$ be as in the definition of topological recursion, just without the restriction on the simplicity of zeros of $dx$. Let $p_1,\dots,p_N$ be these zeros, and let $m_a\in\mathbb{Z}_{\geq 2}$ be such that $m_a-1$ is the multiplicity of the zero $p_a$, $a=1,\dots,N$. For $\zeta\neq p_a$ being a point in the vicinity of $p_a$, denote by $\zeta_1, \zeta_2,\dots,\zeta_{m_a}$ the points in $x^{-1}(x(\zeta))$ which fall into the vicinity of $p_a$, with $\zeta_1=\zeta$. We regard $\zeta_i$ as a function in $\zeta_1$.
		
	
	The  Bouchard-Eynard  recursion reads:
	\begin{align} \label{eq:BErec}
		& \omega_{g,n+1}(z_{[n]},z_{n+1}):=
		\\ \notag
		& 
		-\sum_{a=1}^N\, \res_{\zeta_1\rightarrow p_a}\,
		\sum_{I\subset [m_a]\setminus\{1\}} \frac{\int\limits_{p_a}^{\zeta_1} \omega_{0,2}(\cdot, z_{n+1})}{\prod\limits_{i\in I} (\omega_{0,1}(\zeta_i)-\omega_{0,1}(\zeta_1))}
		\sum_{\substack{
				J\,\vdash\, I\cup \{1\} \\
				\sqcup_{i=1}^{\ell(J)} N_i = [n] \\
				\!\!\! \!
				\sum_{i=1}^{\ell(J)} g_i = g +\ell(J) -|I|-1
		}}^{\text{no}\, (0,1)}
		\!\!\! \!
		\prod_{i=1}^{\ell(J)} \omega_{g_i,|J_i|+|N_i|}(\zeta_{J_i},z_{N_i}),
	\end{align}
	where we forbid in the second sum the choices where we have $(g_i,|J_i|+|N_i|)=(0,1)$, by $[m_a]$ (resp., $[n]$) we denote the set $\{1,\dots,m_a\}$ (resp., $\{1,\dots,n\}$), and the notation of the form $\zeta_{J}$ means all variables $\zeta_i$, $i\in J$. Though it might not be obvious at the first glance, the right hand side of equation~\eqref{eq:BErec} does not depend on the way we label $\zeta_2,\dots,\zeta_m$, and it is a symmetric function of $z_1,\dots,z_{n+1}$.
\end{definition}

 \begin{remark}
 	Note that when all zeros of $dx$ are simple, Bouchard-Eynard recursion coincides with the topological recursion.
 \end{remark}

\begin{remark}\label{rem:multiple-poles} This definition is compatible with taking limits of spectral curves~\cite{BBCKS}. However, once a critical point of $x$ tends to a pole in the limit, one has to extend this definition and add to the sum of residues the contributions from the multiple poles of $x$ as well.  
\end{remark}

Now we are ready to formulate the main theorems regarding the topological recursion for the general weighted Hurwitz numbers.

\begin{theorem}[Family~I from Table~\ref{tab:families}] \label{thm:Family1TR}
	Let
	\begin{align}
		\hat\psi(\hbar^2,y) &=  \cS(\hbar \partial_y)P_1(y)+\log P_2(y) -\log P_3(y),\\
		\hat y (\hbar^2,z) &= R_1(z)/R_2(z),
	\end{align}
	where $P_i$, $i=1,2,3$, and $R_i$, $i=1,2$, are some polynomials such that 
	the natural analytic assumptions of Definition~\ref{def:naa} are satisfied.
	
	Consider the spectral curve
\begin{equation}	
	(\mathbb{C}\mathrm{P}^1,
	x(z)=\log z - \psi(y(z)),
	y(z),
	B=dz_1dz_2/(z_1-z_2)^2
	).
\end{equation}

Then topological recursion \eqref{eq:toprec} (or, in general, when zeros or poles of $dx$ are not necessarily simple, the Bouchard-Eynard recursion \eqref{eq:BErec}, including possibly the terms coming from Remark~\ref{rem:multiple-poles}) applied on this spectral curve
returns the $n$-differentials $\omega_{g,n}$, $g\geq 0$, $n\geq 1$, whose genus expansion in the variables $X_i=\exp(x(z_i))$, $i=1,\dots,n$ at $z_1=\cdots =z_n=0$ is given by
\begin{equation}
	\omega_{g,n} = d_1\cdots d_n H_{g,n},
\end{equation}
where the $H_{g,n}$ are the $n$-point functions of a tau function $Z_{\hat\psi,\hat y}$	of \eqref{eq:ParitionFunctionHBAR} for these particular $\hat \psi$, $\hat y$:
\begin{equation}
	H_{g,n}(z_1,\dots,z_n):=[\hbar^{2g-2+n}]\sum_{k_1,\dots,k_n=1}^\infty\frac{\partial^n \log Z_{\hat\psi,\hat y}}{\partial p_{k_1}\dots\partial p_{k_n}}\Bigm|_{p_1=p_2=\cdots=0}X_1^{k_1}\dots X_n^{k_n},
\end{equation}
where $X_i:=\exp(x(z_i))$.

In other words, $\omega_{g,n}$'s produced by topological recursion procedure \eqref{eq:toprec} (or, for the case of non-simple zeros or poles of $dx$, the Bouchard-Eynard recursion \eqref{eq:BErec}, including possibly the terms coming from Remark~\ref{rem:multiple-poles}) for this spectral curve are the generating functions for the respective weighted Hurwitz numbers
\begin{equation}
	h_{g;k_1,\dots,k_n}\coloneqq [\hbar^{2g-2+n}] \frac{\partial^n \log Z_{\hat\psi,\hat y}}{\prod_{i=1}^n \partial p_{k_i}}\Bigg|_{p_1=p_2=\cdots=0}
\end{equation}
in the following sense:
\begin{equation}\label{eq:TRtoprove}
	\omega_{g,n} = \sum_{k_1,\dots,k_n=1}^\infty h_{g;k_1,\dots,k_n} \prod_{i=1}^n \left(k_i X_i^{k_i-1} d X_i\right).
\end{equation}

 \end{theorem}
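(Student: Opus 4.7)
The plan is to decompose the theorem into two cases: first treat the generic case in which the natural analytic assumptions hold and all zeroes of $dx$ are simple (which is exactly when Bouchard--Eynard specializes to ordinary topological recursion), and then extend to the general case by a perturbation/limit argument. The main technical content is already packaged in Theorems~\ref{thm:Blobbed} and~\ref{thm:Xi-Family1}, so the first step reduces to a formal application of the Borot--Shadrin criterion \cite[Theorem~2.2]{BS17}.

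First, I would invoke Theorem~\ref{thm:Blobbed} for the specified $(\hat\psi,\hat y)$: since $\hat\psi$ is a polynomial plus logarithms of polynomials composed with the rational function $\hat y(z)=R_1(z)/R_2(z)$, the natural analytic assumptions are automatically satisfied, and hence the differentials $\omega_{g,n}=d_1\cdots d_n H_{g,n}+\delta_{g,0}\delta_{n,2}\tfrac{dX_1\,dX_2}{(X_1-X_2)^2}$ are rational on $(\mathbb{C}\mathrm{P}^1)^n$ and satisfy the linear and quadratic loop equations. Next, under the genericity condition on $P_1,P_2,P_3,R_1,R_2$ of Remark~\ref{rem:generality} (in particular all zeros of $dx$ being simple), Theorem~\ref{thm:Xi-Family1} gives $H_{g,n}\in\Theta_n$ for $2g-2+n>0$, and Proposition~\ref{prop:projTheta} then yields the projection property for the $\omega_{g,n}$. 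Combining loop equations plus projection property, the reformulation \cite[Theorem~2.2]{BS17} implies that these $\omega_{g,n}$ are precisely the ones produced by \eqref{eq:toprec} applied to the spectral curve $(\mathbb{C}\mathrm{P}^1, x(z)=\log z -\psi(y(z)), y(z), dz_1dz_2/(z_1-z_2)^2)$. By construction of $H_{g,n}$ as the $n$-point functions of $Z_{\hat\psi,\hat y}$ via the change of variables $X_i=\exp(x(z_i))$, expanding near $z_i=0$ yields exactly \eqref{eq:TRtoprove}, completing the generic simple-zero case.

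For the general case, where $dx$ may have multiple zeroes or the polynomials may fail the genericity condition, I would use a continuity argument. The $n$-point functions $H_{g,n}$ for $Z_{\hat\psi,\hat y}$, being built from the closed formulas in Propositions~\ref{prop:FormulasWgn} and~\ref{prop:Hprop}, depend polynomially (equivalently, continuously) on the coefficients of $P_i$ and $R_i$. Thus, choosing a small perturbation $P_i^{(\epsilon)}, R_i^{(\epsilon)}$ inside the dense open locus where the genericity conditions of Theorem~\ref{thm:Xi-Family1} hold, the case already treated applies and produces $\omega_{g,n}^{(\epsilon)}$ by ordinary topological recursion on the perturbed spectral curve. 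As $\epsilon\to 0$, the $\omega_{g,n}^{(\epsilon)}$ converge as rational differentials to the undeformed $\omega_{g,n}$, while the collapsing residues at the coalescing zeroes of $dx_\epsilon$ transform the topological recursion \eqref{eq:toprec} in the limit into the Bouchard--Eynard recursion \eqref{eq:BErec}; this local-to-global limiting process is established in \cite{BE13}.

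The main obstacle is the non-generic limiting step: one must verify both that the genericity locus is indeed a dense open subset in parameter space containing the given data in its closure (i.e.~that a viable perturbation exists for every admissible $(P_i, R_i)$), and that the convergence of $\omega_{g,n}^{(\epsilon)}$ is strong enough so that the limit of the right-hand side of \eqref{eq:toprec} coincides with the right-hand side of \eqref{eq:BErec}. The first point is elementary but requires a careful bookkeeping of the polynomial conditions appearing in the genericity assumption; the second point is where the known derivation of Bouchard--Eynard as a confluent limit of Chekhov--Eynard--Orantin \cite{BE13} is essential. By contrast, the simple-zero case requires no new ideas beyond stitching together Theorems~\ref{thm:Blobbed}, \ref{thm:Xi-Family1}, and the Borot--Shadrin reformulation.
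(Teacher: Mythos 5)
Your proposal is correct and follows essentially the same route as the paper: combine the loop equations (Theorem~\ref{th:linloop} and Theorem~\ref{theor:higher}) with the projection property from Theorem~\ref{thm:Xi-Family1} and the reformulation of \cite[Theorem~2.2]{BS17} in the generic simple-zero case, then handle the non-generic case by analytic dependence on the polynomial coefficients together with the results of \cite{BE13}. The only cosmetic difference is that the paper phrases the degenerate case as ``both sides of \eqref{eq:TRtoprove} are analytic in the parameters and agree on a dense open set'' (fixing the contours representing the residues in \eqref{eq:BErec}), whereas you phrase it as a confluent limit $\epsilon\to 0$; these are the same argument.
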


\begin{theorem}[Family II from Table \ref{tab:families}]  \label{thm:Family2TR}
	The statement of Theorem \ref{thm:Family1TR} holds for 	
	\begin{align}
		\hat\psi(\hbar^2,y) &=  \alpha y,\\
		\hat y (\hbar^2,z) &= R_1(z)/R_2(z)+
		\cS(\hbar z\partial_z)^{-1} (\log R_3(z)- \log R_4(z)),
	\end{align}
where $\alpha\neq 0$ is a number and $R_i$, $i=1,2,3,4$, are some polynomials 
such that the natural analytic assumptions of Definition~\ref{def:naa} are satisfied. 
\end{theorem}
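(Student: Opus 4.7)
The plan is to deduce Theorem~\ref{thm:Family2TR} by combining the two main technical pillars already in hand: the blobbed topological recursion of Theorem~\ref{thm:Blobbed} (which gives meromorphy plus the linear and quadratic loop equations) and the projection property of Theorem~\ref{thm:Xi-Family2} for Family II, and then invoking the standard characterization of topological recursion from~\cite[Theorem~2.2]{BS17}. The same scheme will be used in both the simple-zero case (yielding ordinary topological recursion) and the general case (yielding the Bouchard--Eynard recursion via a limiting argument from the simple-zero case).

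First, I would verify that Family II satisfies the natural analytic assumptions of Section~\ref{sec1.3}. Since $\hat\psi(\hbar^2,y)=\alpha y$, the expression $d\psi/dy=\alpha$ is trivially rational, and the $\hbar$-coefficients of $\hat y(\hbar^2,z)$ are the rational functions obtained by applying $(z\partial_z)^{2k}$ (up to constants) to $\log R_3-\log R_4$; their poles lie at the zeros of $R_3R_4$, which by the generality assumption are disjoint from the zeros $p_1,\dots,p_N$ of $dx$. One also checks directly from Proposition~\ref{prop:FormulasWgn} and the relation $dX/X=Q(z)dz/z$ that the unstable forms coming from $H_{g,n}$ are indeed $\omega_{0,1}=y\,dx$ and $\omega_{0,2}=dz_1dz_2/(z_1-z_2)^2=B$, i.e.\ match the initial data of the spectral curve.

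Next, I treat the case in which all zeros of $dx$ are simple. By Theorem~\ref{thm:Blobbed}, the forms
\begin{equation}
\omega_{g,n}=d_1\cdots d_n H_{g,n}+\delta_{g,0}\delta_{n,2}\tfrac{dX_1\,dX_2}{(X_1-X_2)^2}
\end{equation}
extend to global meromorphic forms on $(\mathbb{C}\mathrm{P}^1)^n$ satisfying the linear and quadratic loop equations. By Theorem~\ref{thm:Xi-Family2} combined with Proposition~\ref{prop:projTheta}, they also satisfy the projection property. Hence by~\cite[Theorem~2.2]{BS17} they coincide with the $\omega_{g,n}$ produced by topological recursion~\eqref{eq:toprec}, which is exactly~\eqref{eq:TRtoprove}.

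Finally, for the general case in which $dx$ may have higher-order zeros, I would use a deformation argument. Perturb the polynomials $R_1,\dots,R_4$ (and if needed $\alpha$) by a small parameter $\epsilon$, staying within Family II, so that the perturbed data satisfies the generality assumption of Theorem~\ref{thm:Xi-Family2} and the perturbed $dx^\epsilon$ has only simple zeros; the preceding paragraph then gives~\eqref{eq:TRtoprove} for every sufficiently small $\epsilon\neq 0$. The $H_{g,n}$ side depends analytically on $\epsilon$ at $\epsilon=0$, since each coefficient $h_{g;k_1,\dots,k_n}$ is given by a finite algebraic expression in the parameters (equivalently, by the explicit closed formula of Proposition~\ref{prop:Hprop}). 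On the recursion side, the Bouchard--Eynard recursion~\eqref{eq:BErec} is precisely the limit of~\eqref{eq:toprec} as simple critical points coalesce, as proved in~\cite{BE13}; passing to the $\epsilon\to 0$ limit of the identity established for $\epsilon\neq 0$ therefore yields~\eqref{eq:TRtoprove} in full generality.

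The main obstacle is the limiting step: one must produce an explicit one-parameter deformation inside Family~II that simultaneously separates the zeros of $dx$ into simple ones and preserves the structural constraints ($\alpha\neq 0$, $\hat y(\hbar^2,0)=0$, prescribed degrees and generic positions of $R_3,R_4$ relative to $R_1/R_2$), and one must justify termwise (in the formal expansion in $X_i$) continuity of the residue formulae for both TR and BE along this family. The former is a polynomial-genericity verification; the latter reduces to analyzing the behavior of the kernel $\int_w^{\sigma_i(w)}B/(\omega_{0,1}(\sigma_i(w))-\omega_{0,1}(w))$ near colliding ramification points, which is exactly the analysis carried out in~\cite{BE13}.
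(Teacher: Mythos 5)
Your proposal is correct and follows essentially the same route as the paper: loop equations from Theorems~\ref{th:linloop} and~\ref{theor:higher}, the projection property from Theorem~\ref{thm:Xi-Family2} via Proposition~\ref{prop:projTheta}, and then~\cite[Theorem~2.2]{BS17} in the generic (simple-zero) case, followed by a degeneration argument relying on~\cite{BE13} for the non-generic case. The paper phrases that last step as analytic dependence of both sides of~\eqref{eq:TRtoprove} on the coefficients of the $R_i$ (with the residues in~\eqref{eq:BErec} rewritten as fixed contour integrals), while you phrase it as a one-parameter perturbation and limit; these are the same argument.
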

\begin{proof}[Proof of Theorems \ref{thm:Family1TR} and \ref{thm:Family2TR}]
	Note that Theorems \ref{th:linloop} and \ref{theor:higher
	} hold for all $\hbar$-deformed KP tau-functions of hypergeometric type satisfying the \emph{natural analytic assumptions} of Definition~\ref{def:naa}, which are required in Theorems \ref{thm:Family1TR} and \ref{thm:Family2TR}, and satisfying the additional condition that zeros of $dx$ are simple. Thus the linear and quadratic loop equations hold for the cases with simple zeros of $dx$.
	
	If the respective generality conditions of Theorems \ref{thm:Xi-Family1} and \ref{thm:Xi-Family2} are satisfied (in which case the zeros of $dx$ are simple and we are in the realm of the ordinary topological recursion), then those theorems say that we also have the projection property. Thus in that case \cite{BS17} implies that topological recursion holds.
	
	If the generality conditions of Theorems \ref{thm:Xi-Family1} and \ref{thm:Xi-Family2} are not satisfied (and, in particular, if the condition on the simplicity of zeros of $dx$ is not satisfied),
	we need to use the Bouchard-Eynard recursion. However, this case does not give rise to too many additional complications. In fact, the fact that the Bouchard-Eynard recursion holds for the case when 
	the generality conditions are not satisfied 
	follows from the statement that the ordinary topological recursion holds for all cases when 
	the generality condition is satisfied (this situation is similar to~\cite[Proposition 2.1]{BDS}).

Namely, consider the coefficients of the polynomials $P_i$, $R_i$ participating in the formulations of Theorems~\ref{thm:Family1TR} and~\ref{thm:Family2TR} as parameters of the corresponding families of the spectral curve data. The right hand side of~\eqref{eq:TRtoprove} depends analytically on these parameters. Besides, the two sides agree on an open dense subset of parameters satisfying generality condition. Therefore, it is sufficient to prove that the left hand side of~\eqref{eq:TRtoprove} depends also analytically on these parameters. But this follows from~\cite{BBCKS} which establishes the analyticity of topological/Bouchard-Eynard recursion differentials (where the understanding of the Bouchard-Eynard recursion includes the possible extra terms coming from the multiple poles mentioned in Remark~\ref{rem:multiple-poles}) under certain conditions which are satisfied in the present case (which is checked in~\cite[Section~6.3.1]{BBCKS}).
\end{proof}

\begin{remark}
	Note that the natural analytic assumptions are satisfied for all cases from Tables~\ref{tab:Hurw} and~\ref{tab:Hurw2}, and thus Theorems~\ref{thm:Family1TR} and~\ref{thm:Family2TR} respectively are applicable there.	
Let us also provide a couple of non-examples where the natural analytic assumptions are not satisfied in a not completely trivial way.
For Family~I:
	\begin{equation}
	\psi(y)=\log\left(\frac1 {1-y}\right),\;\; y(z)=\frac{z(1-z^2)}{1-z^2-z^3};
\end{equation}
	for Family~II:
	\begin{equation}
	\psi(y) = \frac{y}{2},\;\; y(z) = \log(1+z^2)
\end{equation}
(with the respective $\hbar$-deformations $\hat \psi$ and $\hat y$ specified by Table~\ref{tab:families}).
For both of these cases everything is satisfied apart from the fact that $d\psi/dy|_{y=y(z)}$ for the first case and $dy$ for the second case respectively are not regular at $\infty$ which is a zero of $dx$ in both of these cases.
\end{remark}

\subsection{Applications}\label{sec:appl}
\subsubsection{New proof of topological recursion for the previously known cases of enumerative Hurwitz-type problems}\label{sec:newprooftr}
Theorems \ref{thm:Family1TR} and \ref{thm:Family2TR}, in particular, provide a uniform, new and independent proof of topological recursion for the following Hurwitz-type problems for which there already exist known proofs:
\begin{itemize}
	\item for the usual simple Hurwitz numbers ($\psi=y,\, y=z$) \cite{BM08,BEMS,EMS11,MZ10,Eynard,DKOSS};
	\item for the usual orbifold Hurwitz numbers ($\psi=y$, $y=z^q$) \cite{bouchard2014,DLN16,DLPS};
	\item for the usual polynomially double Hurwitz numbers ($\psi=y$, $y$ polynomial) \cite{Do-Karev-DoubleHurwitz,BDKLM};
	\item for the coefficients of the extended Ooguri-Vafa partition function of the colored HOMFLY polynomials of torus knots ($\psi=\alpha z$, $y =\log(1-A^{-1}z)-\log(1-Az)$) \cite{DPSS,DKPSS};
	\item for the monotone simple Hurwitz numbers ($\psi=-\log(1-y)$, $y=z$)~\cite{Do-Dyer-Mathews,DKPS-0};
	\item for the monotone orbifold Hurwitz numbers ($\psi=-\log(1-y)$, $y=z^q$) \cite{DoKarev,KPS};
	\item for the enumeration of maps / dessins d'enfants / ribbon graphs ($\psi = \log(1+y)$, $y=z^2$)~\cite{DumitrescuMulaseSafnukSorkin,AndersenChekhovNorburyPenner,Norbury-Lattice};
	\item for the enumeration of more general bi-colored maps or hypermaps, in various parametrizations  (for instance, $\psi = \log(1+y)$, $y=z^q$, or $\psi = \log(1+uy)+\log(1+vy)$, $y=z$, etc.) \cite{Eynard-CombMaps,KZ2015,DOPS-CombLoopEqua,DNOPS-1};
\item for the Bousquet-M\'elou--Schaeffer numbers ($\psi = m\log(1+y)$, $y=z$) \cite{BDS} (the interesting feature of this case is that it does have a multiple pole that contributes non-trivially to the recursion along the lines of Remark~\ref{rem:multiple-poles});
\item for the general polynomially weighted polynomially double Hurwitz numbers (that is, $\exp(\psi)$ and $y$ are polynomials, with some assumptions on general position) \cite{ACEH-1,ACEH-2};
\item for the $r$-spin Hurwitz numbers ($\psi = y^r$, $y=z$) \cite{ShaSpiZvo,BKLPS,DKPS};
\item for the $r$-spin orbifold Hurwitz numbers ($\psi = y^r$, $y=z^q$) \cite{MulShaSpi,KLPS19,BKLPS,DKPS}
\end{itemize}
 (we supply the references to various proofs as well as to the references where the corresponding statements on topological recursion were first conjectured and discussed).

\begin{remark}
This list of previously known cases of topological recursion substantially intersects with Section \ref{sec:PreviousQuasi}, as in many 
cases where there exists an already known proof of the projection property, there also exists an already known proof of topological recursion.
\end{remark}

\begin{remark}
	It is important to stress that the present paper neatly combines all known topological recursion results for Hurwitz-type numbers into one framework, including the ones which require non-trivial $\hbar^2$-deformation, like the coefficients of the extended Ooguri-Vafa partition function and $r$-spin Hurwitz numbers, which previously were thought to be outliers (cf.~\cite[Section 1.4]{DKPS},~\cite[Introduction]{DKPSS}, and~\cite{ACEH-1,ACEH-2}.
\end{remark}

\subsubsection{An example of a natural Hurwitz-type enumerative problem where the topological recursion was not previously known}\label{sec:moreTR}

Note that Theorems \ref{thm:Family1TR} and \ref{thm:Family2TR} (when taken together) are quite a bit more general than all the previously known special cases combined.

For instance, to give just one particular example that is natural from combinatorial and geometric viewpoints, Theorems \ref{thm:Family1TR} and \ref{thm:Family2TR} imply that topological recursion holds for the usual double Hurwitz numbers corresponding to
\begin{equation}
	\hat\psi(\hbar^2,y)=\psi(y)=y,\quad \hat y(\hbar^2,z) = y(z) = \dfrac{z}{1-z}
\end{equation}
with the spectral curve data
\begin{equation}
	\left(\mathbb{C}\mathrm{P}^1, \
	x(z)=\log z - \dfrac{z}{1-z}, \
	y(z)=\dfrac{z}{1-z}, \
	B=\frac{dz_1dz_2}{(z_1-z_2)^2}
	\right).
\end{equation}
In this case we resolve the following enumerative problem: the count of all ramified coverings of the Riemann sphere by a genus-$g$ surface with a fixed monodromy at infinity, and arbitrary monodromy at zero, and an arbitrary number of additional simple ramifications. This appears to be a very natural problem in the general context of Hurwitz-type problems, but, to our knowledge, it has not been studied in the literature up until now.

\subsubsection{ELSV-type formulas}\label{sec:ELSV}
We can also use Theorems~\ref{thm:Family1TR} and~\ref{thm:Family2TR} to uniformly prove various ELSV-type formulas that generalize the classical formula of Ekedahl--Lando--Shapiro--Vainshtein~\cite{ELSV} and relate the combinatorially defined weighted Hurwitz numbers to the intersection theory of the moduli spaces of curves and cohomological field theories. Indeed, in general, if it is given that a particular enumerative problem satisfies the spectral curve topological recursion, the techniques of \cite{Eynard} and~\cite{DOSS} allow to immediately deduce and prove the respective ELSV-type formula, cf. \cite{LPSZ,FangZong,BDKLM}.

In particular, the present paper (suitably complemented by particular computations identifying the tautological cohomology classes in the moduli spaces of curves and/or the corresponding cohomological field theories results for the respective cases --- note, however, that this is not the tricky part, as the general techniques from \cite{Eynard,DOSS} work in a fairly straightforward way) gives new proofs of the following ELSV-type formulas:
\begin{itemize}
	\item the original Ekedahl--Lando--Shapiro--Vainshtein formula \cite{ELSV}; the respective (topological recursion) $\Rightarrow$ (ELSV formula) part is proved in \cite{Eynard,DKOSS};
	\item the Johnson--Pandharipande--Tseng formula \cite{JohnsonPandTseng}; the respective (topological recursion) $\Rightarrow$ (ELSV-type formula) part is proved in \cite{DLPS};
	\item the ELSV-type formula for monotone (orbifold) Hurwitz numbers \cite{ALS,DoKarev}; the respective (topological recursion) $\Rightarrow$ (ELSV-type formula) part is proved in \cite{ALS,DoKarev};
	\item the ELSV-type formula for double Hurwitz numbers \cite{BDKLM};
	\item the ($q$-orbifold) $r$-spin Zvonkine formula \cite{Zvonkine,KLPS19,BKLPS,DKPS}; the respective (topological recursion) $\Rightarrow$ (ELSV-type formula) part is proved in \cite{LPSZ};
	\item the formula relating the extended Ooguri-Vafa partition function of the colored HOMFLY polynomials of torus knots with the open Gromov-Witten invariants of the resolved conifold \cite{DiaconescuShendeVafa,FangZong,DKPSS}; the respective (topological recursion) $\Rightarrow$ (ELSV-type formula) part is proved in \cite{FangZong}.
		\item The Mari\~no--Vafa formula (also called Gopakumar--Mari\~no--Vafa formula in the literature) for the special cubic Hodge integrals conjectured in~\cite{MarinoVafa-conj} and proved in~\cite{LiuLiuZhou,OkounkovPandharipande-GMV}, see an explanation in~\cite[Lemma 2.26]{kramer2021kp}. It corresponds to the case $\psi=ay$, $y=\log(1-z)$ (cf.~\cite[Equation 2.1]{OkounkovPandharipande-GMV}). The respective (topological recursion) $\Rightarrow$ (ELSV-type formula) part is proved in~\cite{Eynard}.
\end{itemize}

Since, as mentioned in Section~\ref{sec:moreTR}, the present paper proves topological recursion also for quite a few previously unknown cases of Hurwitz-type problems, we also naturally get implications for new ELSV-type formulas. Using the general techniques from \cite{Eynard,DOSS} we can routinely and absolutely  straightforwardly produce new ELSV-type formulas. The interesting open question here is to identify the cohomological field theories (possibly with a non-flat unit) that might occur this way and simultaneously also emerge naturally in different mathematical contexts. This, however, goes beyond the scope of the present paper.

\bibliographystyle{alphaurl}
\bibliography{top_rec}
\end{document}